\newtheorem{lemma}{Lemma}
\newtheorem{theorem}{Theorem}
\numberwithin{equation}{section}
\numberwithin{theorem}{section}
\numberwithin{lemma}{section}
\DeclareMathOperator{\Tr}{\mathrm{Tr}}
\DeclareMathOperator{\probability}{\mathrm{Pr}}
\DeclareMathOperator{\tracedistance}{\mathcal{D}}
\DeclareMathOperator{\ee}{\mathrm{e}}
\DeclareMathOperator{\iu}{\mathbbm{i}}
\renewcommand\Re{\operatorname{Re}}
\renewcommand\Im{\operatorname{Im}}
\DeclareMathOperator{\hiH}{\mathcal{H}}
\DeclareMathOperator{\haH}{\mathscr{H}}
\newcommand{\bra}[1]{\langle #1|}
\newcommand{\ket}[1]{|#1\rangle}
\newcommand{\braket}[2]{\langle #1|#2\rangle}
\newcommand{\ketbra}[2]{| #1 \rangle \langle #2 |}
\newcommand{\expect}[1]{\left\langle #1\right\rangle}
\definecolor{structure}{rgb}{0.23,0.4,0.7}
\newcommand{\inputTikZ}[1]{\includegraphics{#1}}
\begin{document}

\pagenumbering{roman}


\titlehead{\begin{center}\vspace{-1.4cm}\vbox{\sffamily{\Large Julius-Maximilians-Universität Würzburg}\\Institut für theoretische Physik und Astronomie}\end{center}}
\title{Pure State Quantum Statistical Mechanics}
\author{Christian Gogolin}
\subject{\normalfont\normalsize\sffamily Master's Thesis in Theoretical Physics}
\date{\today}
\publishers{\normalsize
  \begin{tabbing}%
    Author: \hspace{1,5cm} \= Christian Gogolin\footnote{\href{mailto:publications@cgogolin.de}{publications@cgogolin.de}}, \\
    Supervisors:           \> Prof. Dr. Haye Hinrichsen\\
                           \> Prof. Dr. Andreas Winter\\
    Institute:             \> Julius-Maximilians-Universität Würzburg\\
                           \> Theoretische Physik III\\
  \end{tabbing}
}

{\sffamily \maketitle}
\thispagestyle{empty}
\begin{abstract}
  \section*{Abstract}
  The capabilities of a new approach towards the foundations of Statistical Mechanics are explored.
  The approach is genuine quantum in the sense that statistical behavior is a consequence of objective quantum uncertainties due to entanglement and uncertainty relations.
  No additional randomness is added by hand and no assumptions about a priori probabilities are made, instead measure concentration results are used to justify the methods of Statistical Physics.
  The approach explains the applicability of the microcanonical and canonical ensemble and the tendency to equilibrate in a natural way.

  This work contains a pedagogical review of the existing literature and some new results.
  The most important of which are:
  i)~A measure theoretic justification for the microcanonical ensemble.
  ii)~Bounds on the subsystem equilibration time.
  iii)~A proof that a generic weak interaction causes decoherence in the energy eigenbasis. 
  iv)~A proof of a quantum H-Theorem.
  v)~New estimates of the average effective dimension for initial product states and states from the mean energy ensemble.
  vi)~A proof that time and ensemble averages of observables are typically close to each other.
  vii)~A bound on the fluctuations of the purity of a system coupled to a bath.
\end{abstract}

\cleardoublepage

\thispagestyle{empty}
\vspace*{6cm}
\begin{center}
  This work is dedicated to Kathrin and Meggy,\\ the two most important persons in my life.
\end{center}
\cleardoublepage

\thispagestyle{empty}
\vspace*{6cm}
\begin{quotation}
  A philosopher once said ``It is necessary for the very existence of
  science that the same conditions always produce the same results.''
  Well, they do not.
\end{quotation}
\begin{flushright}
  Richard Feynman, \emph{The Character of Physical Law}  
\end{flushright}
\cleardoublepage

\tableofcontents
\cleardoublepage

\manualmark
\markright{Notation guide and definitions}
\addcontentsline{toc}{chapter}{Notation guide and definitions}
\chapter*{Notation guide and definitions}
\label{sec:notationguideandsomedefinitions}
\begin{description}
\item[Hilbert spaces] $\hiH\ \hiH_S\ \hiH_B\ \hiH_R, \dots$
\item[Hamiltonians] $\haH,\ \haH_S,\ \haH_B,\ \haH_{SB}, \dots$
  \begin{align*}
    \text{eigenvectors } &\ket{E_k},\ \ket{E_l},\ \ket{E_m}, \dots \\
    \text{eigenvalues } & E_k,\ E_l,\ E_m, \dots
  \end{align*}
\item[observables and projectors]
  \begin{align*}
    \text{observables } &A,\ B, \dots \\
    \text{projectors } &\Pi \\
    \text{rank } n \text{ projectors } &\mathcal{P}_n(\hiH) \\
    \text{all projectors } &\mathcal{P}(\hiH)
  \end{align*}
\item[quantum states]
  \begin{align*}
    \text{pure states } &\psi,\ \varphi \in \mathcal{P}_1(\hiH) \\
    \text{mixed states } &\rho,\ \sigma \in \mathcal{M}(\hiH) \\
    \text{reduced states/marginals } &\rho^S = \Tr_B[\rho] \in \mathcal{M}(\hiH_S),\ \rho^B = \Tr_S[\rho] \in \mathcal{M}(\hiH_B) \\
    \text{time averaged/dephased states } &\omega = \expect{\rho_t}_t = \$[\rho_0] := \sum_k \ketbra{E_k}{E_k} \rho_0 \ketbra{E_k}{E_k}
  \end{align*}
\item[trace norm]
  \begin{equation}
    \|\rho\|_1 = \Tr|\rho| = \Tr[\sqrt{\rho^\dagger\,\rho}]
  \end{equation}
\item[trace distance]
\begin{align}
  \label{eq:definitiontracedistance}
  \tracedistance(\rho,\sigma) &= \frac{1}{2} \|\rho -\sigma \|_1= \frac{1}{2} \Tr |\rho - \sigma| \\
  &= \max_{0\leq A \leq \mathds{1}} \Tr[A (\rho - \sigma)] \\
  &= \max_{\Pi  \in \mathcal{P}(\hiH)} \Tr[\Pi  (\rho - \sigma)] 
\end{align}
\item[Hilbert space norm]
\begin{equation}
  \| \ket{\psi} \|_2 = \sqrt{\braket{\psi}{\psi}} = \sqrt{\|\psi\|_1} 
\end{equation}
\item[Hilbert-Schmidt norm]
\begin{equation}
  \|\rho\|_2 = \sqrt{\Tr[A^\dagger\,A]}
\end{equation}
\item[operator norm of a hermitian operator $A$]
\begin{equation}
  \|A\|_\infty = \max_{\psi \in \mathcal{P}_1(\hiH)} Tr[A\,\psi] 
\end{equation}
\item[Von~Neumann entropy]
\begin{equation}
  S(\rho) = - \Tr[\rho\,\log(\rho)] ,
\end{equation}
\item[quantum mutual information between $S$ and $B$]
\begin{equation}
  \label{eq:mutualinformation}
  \begin{split}
    I_{SB}(\rho_t) &= S(\rho^S_t) + S(\rho^B_t) - S(\rho_t)\\
    &= \Tr[\rho_t \log(\rho_t) - \rho_t \log(\rho^S_t \otimes \rho^B_t)]
  \end{split}
\end{equation}
\item[purity]
\begin{equation}
  p(\rho) = \Tr[\rho^2]
\end{equation}
\item[effective dimension]
\begin{equation}
  d^{\mathrm{eff}}(\omega) = \frac{1}{\Tr[\omega^2]}
\end{equation}
\end{description}
\newpage
\pagenumbering{arabic}

\automark[chapter]{section}
\chapter{Introduction}
\label{sec:introduction}
Despite being very well confirmed by experiments Thermodynamics and classical Statistical Physics still lack a commonly accepted and conceptually clear foundation.

The reason for this unsatisfactory situation is that physicists have not yet succeeded in finding concise and convincing justifications for the fundamental axioms of Statistical Physics.
An overview of the attempts to axiomatize Statistical Physics and Thermodynamics and to justify the axioms from classical Newtonian Mechanics and the conceptual problems with these approaches can be found for example in \cite{UffinkFinal} and \cite{RevModPhys.27.289} and the references therein.

Quantum Mechanics claims to be a fundamental theory.
As such it should be capable of providing us with a microscopic explanation for all phenomena we observe in macroscopic systems, including irreversible processes like thermalization.
But, its unitary time evolution seems to be incompatible with irreversibility \cite{FeynmanV01} leading to an apparent contradiction between Quantum Mechanics and Thermodynamics.
This apparent contradiction is part of the long standing problem of the emergence of classically from Quantum Mechanics.

To overcome this problem many authors have suggested to modify Quantum Theory, either by adding nonlinear terms to the von~Neumann equation or by postulating a periodical spontaneous collapse of the wave function \cite{Bassi03}.
Others have considered effective, Markovian, time evolutions for open quantum systems \cite{Breuer02} and it has been shown that system bath models that evolve under a special form of Hamiltonian tend to evolve into states that are classical superpositions of so called \emph{pointer states} --- a phenomenon called \emph{environmentally induced super selection}, a term due to Zurek \cite{RevModPhys.75.715}.
Depending on the author subsets of these approaches are subsumed under the term \emph{decoherence theory} \cite{zeh96,Breuer02,Gemmer09,Hornberger09}.

In face of the enormous success of standard Quantum Mechanics in explaining microscopic phenomena and the additional difficulties that arise when the von~Neumann equation is modified and the existence of macroscopic quantum systems on the one hand, and the broad applicability of Statistical Mechanics and Thermodynamics on the other, we feel that neither a modification of Quantum Theory, nor considerations restricted to special situations can provide a satisfactory explanation of the statistical and thermodynamic behavior of our macroscopic world.
Consequently we will seek to derive general statements independent of particular models and we will not use the Markov assumption.
Furthermore, we believe that neither the assumption of ergodicity nor classical or quantum chaos are good starting points for constructing a convincing and consistent foundation for Statistical Mechanics and Thermodynamics (see for example footnote 1 and 2 in \cite{tasaki98}).

The struggle for a quantum mechanical explanation of behavior usually described by Statistical Physics dates back to the founding fathers of Quantum Theory, most notably von~Neumann \cite{vonneumann1929} and Schrödinger \cite{Schroedinger27}.
Recently work on this subject was resumed and there has been remarkable success:
\begin{itemize}
\item In \cite{slloydthesis,tasaki98,Gemmer02,Popescu05,Popescu06,Goldstein06,Cho09} a justification for the applicability of the canonical ensemble is given that does not rely on subjective, added randomness or ensemble averages.
While \cite{tasaki98,Gemmer02,Goldstein06} make particular assumptions on the Hamiltonian and introduce the concept of temperature, and thereby are able to derive explicitly the Boltzmann distribution, the aim of \cite{slloydthesis,Popescu05,Popescu06} is more to show that the reduced states of random states of large quantum systems typically look like the reduced state of the microcanonical state, \cite{Cho09} in addition uses time dependent perturbation theory. 
All these works are based on typicality arguments and the phenomenon of measure concentration \cite{ledoux01}.\footnote{It is very interesting to compare thees articles with the works of Jaynes \cite{PhysRev.106.62,PhysRev.108.17}
Although there are huge differences concerning the interpretation, the before mentioned works are methodologically very close to certain aspects of the approach of Jaynes, especially with respect to the way they make use of measure concentration arguments. It is thus surprising and unfortunate that Jaynes' works have been completely ignored in the recent literature.}.
\item In \cite{PhysRevE.50.88,Gemmer06,Reimann08,Linden09,Bartsch09} it is shown how seemingly irreversible, thermodynamic behavior of macroscopic systems can be explained in the framework of standard Quantum Mechanics and that the approach proposed in \cite{slloydthesis,Borowski03,Popescu05} is capable of explaining the phenomenon of equilibration in a natural way.
\item There are some works that investigate equilibration and thermalization in particular models \cite{Wang08,Devi09,Cramer08,Cramer09,Merkli09}. Due to the additional structure in the less general situations considered in these works a more detailed analysis is possible and the authors can make assertions about the time scales on which equilibration happens.
\item In \cite{Schroeder10} it is shown how the concepts of work and heat can be defined on purely microscopical grounds without using classical external driving and in \cite{Linden09-2} the limits of purely quantum microscopic thermal machines are investigated. See also the references in \cite{Schroeder10,Linden09-2} for works discussing and applying definitions of work and heat based on time dependent Hamiltonians and external driving.
\item In \cite{0907.1267v1,Devi09} it is shown that a slow and continuous evolution of expectation values is typical for large quantum systems.
\item In \cite{vonneumann1929,0907.0108v1} it is shown that large quantum systems typically are in some sense \emph{ergodic} with respect to coarse grained, macroscopic observables. Ref. \cite{vonneumann1929} was criticized in \cite{PhysRev.114.94,Farquhar57}, but recently it was argued that this criticism was unjustified \cite{0907.0108v1}.  
\item There have been attempts to derive the Second~Law of Thermodynamics \cite{Gemmer01,Brandao07} or a statistical H-Theorem \cite{vonneumann1929} for the von~Neumann entropy from Quantum Mechanics and in \cite{Polkovnikov08} (see also the older references 4 and 5 in \cite{PhysRev.108.17}) a different entropy measure, ``microscopic diagonal entropy'', was proposed to overcome the contradiction between microscopic time reversal invariance and the Second~Law.  
\item In addition to the mainly analytical works cited above there exists a quickly increasing amount of numerical works concerned with equilibration and thermodynamic behavior of open Quantum systems confirming the analytical findings \cite{Borowski03,0708.1324v1,0904.1501v1,Gemmer09,Wu09}.
\end{itemize} 

Unfortunately the often mathematically rigorous and far reaching results of these works are almost complete ignored by textbooks on Statistical Mechanics and Thermodynamics, this is true even for the results obtained by von~Neumann in 1930 \cite{vonneumann1929} (an exception is \cite{Gemmer09}).
This situation is unfortunate since some of the results mentioned above address long standing conceptual issues at the very heart of Statistical Mechanics and Thermodynamics.

\begin{figure}[p]
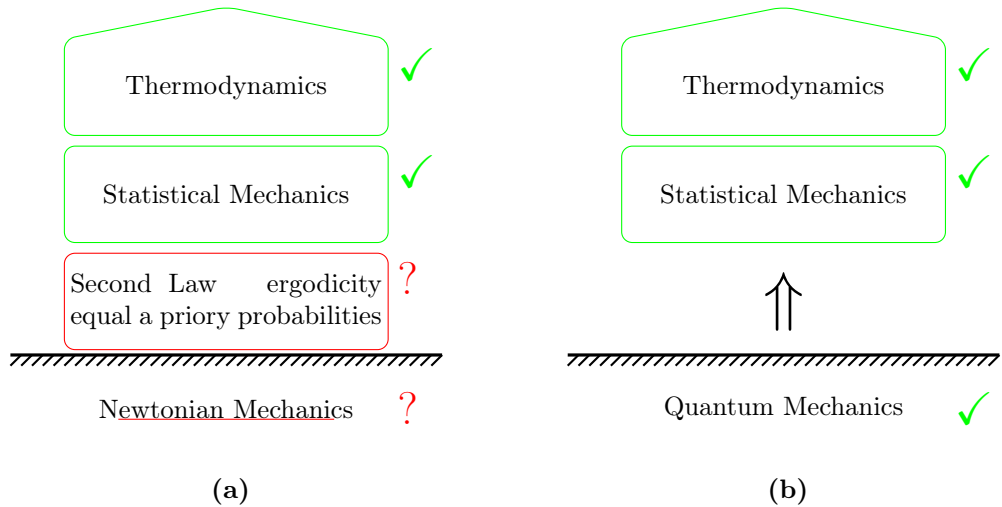

  \centering\footnotesize
  \subfloat[]{%
    \inputTikZ{foundstatmech1}
  }\hfill
  \subfloat[]{%
    \inputTikZ{foundstatmech2}
  }
  \caption{
    Many of the phenomena correctly described by thermodynamics, which is a mainly phenomenological and very applied theory, can be understood within the framework of Statistical Physics.
    Both theories have a very high degree of corroboration and have proved to be extremely useful.
    In the conventional approach (a) the methods of Statistical Physics are ``derived'' from Newtonian Mechanics and an additional layer of postulates and assumptions that introduce statistical concepts and ensure equilibration.
    These additional assumptions are quite questionable and have provoked quite a lot of debate.
    The irreversibility introduced by postulating the Second Law of Thermodynamics contradicts the time reversal invariance of Newtonian Mechanics and it is still not known whether thermodynamic systems typically are (quasi) ergodic.
    The radical, though natural approach pursued in this work (b) is to replace Newtonian Mechanics by Quantum Mechanics in the hope of getting rid of all extra assumptions.}
  \label{fig:overviewovertheapproach}
\end{figure}

\chapter{Quantum Statistical Mechanics}
\label{sec:quantumstatisticalmechanics}
Especially \cite{slloydthesis,PhysRevA.43.20,PhysRevE.50.88,Popescu05,Popescu06,Gemmer09} argue for a new interpretation of the foundations of Statistical Mechanics.
Following Seth Lloyd \cite{slloydthesis} we called this approach \emph{pure state quantum Statistical Mechanics}.
In what follows we give a concise and self contained review of the results of these and other related works in a unified and consistent notation.
In the first section we introduce the general setup and fix the notation. 
We then review the recent progress in the field and present additional new results concerning the justification of the applicability of the microcanonical and canonical ensemble, equilibration, ergodicity and initial state independence. Finally we show that these results imply a statistical quantum Second~Law of Thermodynamics.

\section{Setup}
\label{sec:setupanddefinitions}
We consider arbitrary quantum systems that can be described using a Hilbert space $\hiH$ of finite dimension $d$.\footnote{If the Hilbert space of a real system is infinite dimensional it should always be possible to find an effective description in a finite dimensional Hilbert space by introducing a high energy cut-off. If eigenstates with extremely high energy had a crucial influence on the behavior of realistic systems physicists would be in a desperate position. Without the ability to prepare and thus study these states in detail it were very difficult to make reliable predictions. The author therefore believes that whenever the behavior of some model is crucially changed by introducing such a cutoff this is due to the very fact that it is a \emph{model}. Moreover, it was demonstrated in \cite{Devi09} that many of the phenomena we that can be rigorously proven in the finite dimensional case also occur in infinite dimensional systems. We thus believe that the restriction to finite dimensions as mainly a technicality.}
We assume that all observables, including energy, are bounded linear operators, i.e have a finite operator norm.

We will often talk about systems that can be divided into two parts, which we will call the bath $B$ and the subsystem $S$, such that $\hiH = \hiH_S \otimes \hiH_B$ where $\hiH_S$ and $\hiH_B$ are the Hilbert spaces of the subsystem and the bath respectively.
It shall be emphasized that we will not make any special a priori assumptions about the size and structure of the bath and system.
All results will be completely general.
The only reason why we call one part the bath and the other the subsystem is that in the end we will be interested in situations where the dimension $d_B$ of the Hilbert space $\hiH_B$ of the bath is much larger than the dimension $d_S$ of the Hilbert space $\hiH_S$ of the system.

We denote by $\mathcal{P}(\hiH)$ the set of all projectors on $\hiH$ and by $\mathcal{P}_n(\hiH)$ the set of all rank $n$  projectors on $\hiH$. 
We write $\ket{\psi}$ and $\ket{\varphi}$ for normalized pure state vectors and use $\psi$ and $\varphi$ to denote their associated pure density matrices in $\mathcal{P}_1(\hiH)$.
The set of all, possibly mixed, normalized density matrices on $\hiH$, i.e. the set of all positive-semidefinite hermitian matrices with trace one, will be denoted by $\mathcal{M}(\hiH)$ and we will use the symbols $\rho$ and $\sigma$ for, possibly mixed, states from $\mathcal{M}(\hiH)$.
Their reduced states, or marginals, on the subsystem and bath are indicated by superscript letters like in $\rho^S = \Tr_B \rho$ and $\rho^B = \Tr_S \rho$.

The Hamiltonian of the joint system $\haH = \haH^\dagger$ has $d$ energy eigenstates $\ket{E_k}$ with corresponding energy eigenvalues $E_k$ that we will assume to be given in units of $\hbar$.
The Hamiltonian governs the time evolution of the joint system.
If the initial state of the system was $\rho_0$ we will denote the state at time $t$ by $\rho_t = U_t\,\rho\,U_t^\dagger$ with $U_t = \ee^{-\iu\,\haH\,t}$.

The Hamiltonians considered herein are completely general except for one extremely weak constraint, namely that they have \emph{non-degenerate energy gaps} or are \emph{non-resonant}.\footnote{This assumption already appears in the work of von~Neumann~\cite{vonneumann1929} and later in \cite{Linden09,Reimann08,0907.0108v1}}
This assumption imposes a restriction on the equality of the gaps between energy eigenvalues, namely
\begin{equation}
  \begin{split}
    E_k - E_l &= E_m - E_n \\\Longrightarrow (k=l \wedge  m=n)\ &\vee\ (k=m \wedge l=n) .
  \end{split}
\end{equation}
Note that there are two slightly different versions of this assumption:
In the first, stronger version the indices $k,l,m,n$ run over all eigenstates of the Hamiltonian, i.e. $k,l,m,n \in \{1,...,d\}$.
This version implies that the spectrum of the Hamiltonian is non-degenerate.
In the weaker version the indices run only over all distinct eigenvalues, so that degeneracies in the energy spectrum are allowed as long as the gaps between the degenerate subspaces are non-degenerate.

It shall be emphasized that even the stronger version is an extremely weak restriction as every Hamiltonian can be made to be non-resonant by adding an arbitrary small random perturbation.
Generic Hamiltonians have non-degenerate energy gaps. 
Every Hamiltonian becomes non-degenerate by adding an arbitrary small random perturbation; therefore the Hamiltonians of macroscopic systems can be expected to satisfy this constraint.

The physical implication of this assumption is that the Hamiltonian is \emph{fully interactive} in the sense that there exists no partition of the composite system into a subsystem and bath such that the Hamiltonian can be written as a sum $\haH = \haH_S \otimes \mathds{1} + \mathds{1} \otimes \haH_B$ where $\haH_S$ and $\haH_B$ act on the subsystem and bath alone.

In the following  we will use the stronger version of the non-degenerate energy gaps assumption for the sake of simplicity.
However, results similar to the ones presented herein hold under the second, weaker version.
Basically, what one has to do is replace projectors onto energy eigenstates $\ketbra{E_k}{E_k}$ by projectors onto degenerate subspaces and refine some of the quantities appearing in the theorems, in particular the \emph{effective dimension} (see the discussion in \cite{0907.1267v1}).

The consequence of the non-degenerate energy gaps assumption that we exploit in the present work is that time averaging a state $\rho_t$ that evolves under such a Hamiltonian 
\begin{equation}
  \expect{\rho_t}_t := \lim_{\tau\to\infty} \frac{1}{\tau} \int_0^\tau \rho_t\,dt .
\end{equation}
gives the same result as dephasing the initial state with respect to the energy eigenbasis of $\haH$
\begin{equation}
  \label{eq:dephasingmap}
  \$[\rho_0] := \sum_k \ketbra{E_k}{E_k} \rho_0 \ketbra{E_k}{E_k} .
\end{equation}
We will therefore use the letter $\omega = \expect{\rho_t}_t = \$[\rho_0]$ to refer to time averaged and dephased states respectively.

In what follows we will often talk about \emph{random pure states} drawn from some subspace $\hiH_R$.
Unless explicitly stated otherwise by a random pure state we mean a state that was chosen according to the Haar measure on $\hiH_R$, which is the unique unitary left and right invariant measure on $\mathcal{P}_1(\hiH_R)$ \cite{duistermaat99} (see appendix~\ref{appendix:thehaarmeasure} for more information).

\section{Ensemble averages and pure state quantum Statistical Mechanics}
\label{sec:purestatequantumstatisticalmechanics}
In conventional Statistical Mechanics probabilities, expectation values, variances and higher moments of observables are computed via ensemble averages.
Depending on the situation under consideration one must employ the microcanonical, canonical or the appropriate grand canonical ensemble \cite{Schwabl02}.
The validity of this approach is beyond all doubt and the results obtained using it have been confirmed by innumerous experiments.

On the other hand, the role of probability \cite{logikderforschung,PhysRev.108.17} in Physics, the problem of ergodicity and especially the microscopic justification of the Second~Law of Thermodynamic are very subtle issues and many fundamental questions concerning them are still open despite many decades of research \cite{UffinkFinal}.

The starting point of our discussion will be to show how the applicability of ensemble averages can be justified using Quantum Mechanics and measure concentration techniques without any extra assumptions.

\subsection{The microcanonical ensemble}
\label{sec:themicrocanonicalensemble}
The microcanonical ensemble is in some sense the most fundamental ensemble.
In classical Statistical Physics it is applied to closed systems in equilibrium.
The other ensembles, canonical and grand canonical can be derived from it \cite{Schwabl02}.

In the quantum setting the microcanonical ensemble is used in situations where all one knows about a closed physical system is that the value of some observable $A$, which corresponds to a conserved quantity, i.e $[\haH,A] = 0$, lies in some interval $I$.\footnote{Note that thermodynamically closed does not necessarily mean completely isolated \cite{Gemmer09}. In this section we will however talk only about completely isolated systems.}
Let $\ket{a}$ be the eigenvectors of $A$ and $\hiH_R$ the restricted subspace spanned by those eigenvectors that have eigenvalues in the interval.
The microcanonical expectation value of any observable $B$ with respect to $\hiH_R$ is then defined to be
\begin{equation}
  \label{eq:microcanonicalexpectationvale}
  \expect{B}_{\mathrm{mc}} = \frac{1}{d_R} \sum_{\ket{a} \in \hiH_R} \bra{a} B \ket{a} = \Tr[\frac{\Pi_R}{d_R} B]
\end{equation}
where $\Pi_R = \sum_{\ket{a} \in \hiH_R} \ketbra{a}{a}$ is the projector onto the subspace $\hiH_R$ of eigenstates of $A$ with eigenvalues in $I$.
Knowing only that measuring $A$ would give a value in $I$ we ascribe to the system the mixed state \footnote{Note that there are other possible generalizations of the microcanonical ensemble to the quantum setting that are discussed in the literature (s. \cite{Bender05,Brody05,Mueller09}).}
\begin{equation}
  \label{eq:microcanonicalstate}
  \rho_{\mathrm{mc}} = \frac{\Pi_R}{d_R} = \frac{1}{d_R} \sum_{\ket{a} \in \hiH_R} \ketbra{a}{a} .
\end{equation}
Equation \eqref{eq:microcanonicalexpectationvale} and \eqref{eq:microcanonicalstate} are the quantum version of the \emph{equal a priory probability postulate}, which is \emph{the} fundamental postulate of convectional Statistical Mechanics.
All compatible states are assigned the same a priory probability.

It is beyond all doubt that this approach to calculate expectation values has proven to be extremely useful and yields results in good agreement with experiments.
However it remains puzzling why dynamically evolving and intrinsically quantum mechanical systems may be described by the static, highly mixed state \eqref{eq:microcanonicalstate}.

\subsubsection{Typicality of general observables}
\label{sec:typicalityofgeneralobservables}
The recent results suggest that the equal a priory probability postulate is dispensable \cite{Popescu06}.
Instead of \emph{assuming} that the state \eqref{eq:microcanonicalstate} yields a good description of the system it is possible to \emph{proof} that for almost all pure states of large systems all subsystems behave \emph{as if the system were} in the state \eqref{eq:microcanonicalstate}.
A statement the authors of \cite{Popescu06} called \emph{General Canonical Principle}.

%
The idea to reproduce the results obtained using the microcanonical ensemble average, without added randomness form nothing but pure Quantum Mechanics, and thereby justifying its use, was already discussed in 1991 by J.M.~Deutsch \cite{PhysRevA.43.20}.
A mathematically more precise statement about the equivalence of ensemble averages and expectation values of random pure states can be found in the Ph.D. thesis of Seth Lloyd which appeared in the same year \cite{slloydthesis}:
\begin{theorem}
  \label{theorem:slloydensebleaveragesaresuperflous}
  {\bf \cite{slloydthesis}}
  Let $\hiH_R \subseteq \hiH$ be a subspace of dimension $d_R$ of the Hilbert space $\hiH$ of some physical system.
  Let $\Pi_R$ be the projector onto $\hiH_R$ and let $\expect{\cdot}_\psi$ be the average over random pure states $\psi \in \mathcal{P}_1(\hiH_R)$.
  Then for every observable $B$ with $[B,\Pi_R] = 0$:\footnote{The additional constraint $[B,\Pi_R]=0$ is not discussed in the main text of \cite{slloydthesis}, but it is stated and used in the proof the theorem.}
  \begin{equation}
    \expect{(\Tr[B\,\psi] - \expect{B}_{\mathrm{mc}}])^2}_\psi = \frac{1}{d_R+1} (\expect{B^2}_{\mathrm{mc}} - \expect{B^2}_{\mathrm{mc}}^2) \leq \frac{\|B\|_{\infty}^2}{d_R+1}
  \end{equation}
\end{theorem}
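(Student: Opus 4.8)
The plan is to compute the first two moments of the random variable $f(\psi) := \Tr[B\,\psi]$ directly under the Haar measure on $\mathcal{P}_1(\hiH_R)$, since the quantity on the left-hand side is just $\expect{f^2}_\psi - 2\expect{B}_{\mathrm{mc}}\expect{f}_\psi + \expect{B}_{\mathrm{mc}}^2$ and everything therefore reduces to the averages of $\psi$ and $\psi\otimes\psi$ regarded as (tensor powers of) density matrices. The one structural input I would isolate first is the role of the hypothesis $[B,\Pi_R]=0$: because every $\psi$ in the ensemble is supported on $\hiH_R$, I may replace $B$ throughout by its restriction $B_R := \Pi_R\,B\,\Pi_R$, so that $\Tr[B\,\psi]=\Tr[B_R\,\psi]$, and the commutation gives $B_R^2 = \Pi_R\,B^2\,\Pi_R$, hence $\Tr[B_R]=d_R\expect{B}_{\mathrm{mc}}$ and $\Tr[B_R^2]=d_R\expect{B^2}_{\mathrm{mc}}$.

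First I would use unitary invariance of the Haar measure: $\expect{\psi}_\psi$ commutes with every unitary on $\hiH_R$, so by Schur's lemma it is proportional to $\Pi_R$, and normalization forces $\expect{\psi}_\psi = \Pi_R/d_R = \rho_{\mathrm{mc}}$. This yields $\expect{f}_\psi = \expect{B}_{\mathrm{mc}}$ and collapses the target to the plain variance $\expect{f^2}_\psi - \expect{B}_{\mathrm{mc}}^2$. For the second moment I would write $f(\psi)^2 = \Tr[(B_R\otimes B_R)(\psi\otimes\psi)]$ on $\hiH_R\otimes\hiH_R$ and invoke the second-moment (twirl) formula $\expect{\psi\otimes\psi}_\psi = \frac{1}{d_R(d_R+1)}(\mathds{1} + \mathbb{F})$, where $\mathbb{F}$ is the swap operator; this follows because the twirl must be supported on, and proportional to, the projector onto the symmetric subspace of dimension $d_R(d_R+1)/2$.

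Then I would evaluate the two traces using $\Tr[B_R\otimes B_R]=(\Tr B_R)^2$ and the swap identity $\Tr[(B_R\otimes B_R)\,\mathbb{F}]=\Tr[B_R^2]$, obtaining $\expect{f^2}_\psi = \frac{1}{d_R(d_R+1)}\bigl(d_R^2\expect{B}_{\mathrm{mc}}^2 + d_R\expect{B^2}_{\mathrm{mc}}\bigr)$. Subtracting $\expect{B}_{\mathrm{mc}}^2$ and simplifying the numerator over $d_R+1$ produces exactly $\frac{1}{d_R+1}(\expect{B^2}_{\mathrm{mc}} - \expect{B}_{\mathrm{mc}}^2)$, matching the claimed equality. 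The inequality is then immediate, since $\expect{B^2}_{\mathrm{mc}} - \expect{B}_{\mathrm{mc}}^2$ is a genuine variance of $B$ in the state $\rho_{\mathrm{mc}}$ and is thus bounded above by $\expect{B^2}_{\mathrm{mc}} \le \|B\|_\infty^2$.

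The main obstacle is establishing the second-moment formula cleanly, since that single identity is the source of the $d_R+1$ denominator that is the entire content of the theorem; everything else is bookkeeping. I would derive it from the symmetric-subspace argument above, or else verify it componentwise in an orthonormal basis of $\hiH_R$, checking that $\expect{\langle i|\psi|j\rangle\,\langle k|\psi|l\rangle}_\psi = \frac{1}{d_R(d_R+1)}(\delta_{ij}\delta_{kl}+\delta_{il}\delta_{jk})$, which reproduces $\frac{1}{d_R(d_R+1)}(\mathds{1}+\mathbb{F})$. The secondary point requiring care is the reduction $B\mapsto B_R$: this is precisely where $[B,\Pi_R]=0$ is used, and without it the clean identity $\Tr[B_R^2]=d_R\expect{B^2}_{\mathrm{mc}}$ would fail.
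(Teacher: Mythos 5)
Your proof is correct. Note that the paper does not actually supply a proof of this theorem --- it is imported verbatim from Lloyd's thesis \cite{slloydthesis} --- so there is no in-paper argument to compare against; however, the two ingredients you rely on are exactly the ones the paper deploys elsewhere for closely related computations: the first-moment identity $\expect{\psi}_\psi = \Pi_R/d_R$ appears in the proof of theorem~\ref{theorem:ensebleaveragesaresuperflousforthemicrocanonicalensemble}, and the second-moment twirl formula $\expect{\psi\otimes\psi}_\psi = \Pi_{RR}(\mathds{1}+\mathbb{S})/(d_R(d_R+1))$ together with the swap-trace identity $\Tr[A\,B]=\Tr[(A\otimes B)\,\mathbb{S}]$ is precisely the lemma used in the proof of theorem~\ref{theorem:highaveragedeffectivedimensionisgenericforproductstates}. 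Your bookkeeping is right: the reduction to $B_R=\Pi_R B\Pi_R$ is where $[B,\Pi_R]=0$ enters (giving $\Tr[B_R^2]=d_R\expect{B^2}_{\mathrm{mc}}$), and the final simplification yields $(\expect{B^2}_{\mathrm{mc}}-\expect{B}_{\mathrm{mc}}^2)/(d_R+1)$. One remark: the equality as printed in the theorem reads $\expect{B^2}_{\mathrm{mc}}-\expect{B^2}_{\mathrm{mc}}^2$, which is evidently a typographical slip for the microcanonical variance $\expect{B^2}_{\mathrm{mc}}-\expect{B}_{\mathrm{mc}}^2$ (consistent with the definition of $\sigma^2_{\mathrm{mc}}$ used in the surrounding text); your derivation produces the correct, intended quantity.
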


The interpretation of theorem~\ref{theorem:slloydensebleaveragesaresuperflous} is straight forward:
If the dimension $d_R$ of $\hiH_R$ is large, it tells us that the mean square deviation the expectation value of $\Tr[B\,\psi]$ computed over random pure states $\psi \in \mathcal{P}_1(\hiH_R)$ from the microcanonical expectation value $\expect{B}_{\mathrm{mc}}$ is small, which implies that the two expectation values will be similar with high probability. 

The methods used in \cite{Popescu05} to proof the \emph{General Canonical Principle}, namely Levy's lemma (see appendix~\ref{appendix:levyslemma}), can be used to proof a stronger, exponential bound on the probability to observe a deviation from the predictions of the microcanonical ensemble when measuring an observable acting on the full Hilbert space:
\begin{theorem}
  \label{theorem:ensebleaveragesaresuperflousforthemicrocanonicalensemble}
  Let $\hiH_R \subseteq \hiH$ be a subspace of dimension $d_R$ of the Hilbert space $\hiH$ of some physical system.
  The probability that the expectation value of an arbitrary observable $B$ in a randomly chosen pure state $\psi \in \mathcal{P}_1(\hiH_R)$ differs from its microcanonical expectation value with respect to $\hiH_R$ is exponentially small in the sense that for every $\epsilon > 0$
  \begin{equation}
    \probability\left\{ | \Tr[B\,\psi] - \expect{B}_{\mathrm{mc}}| \geq \epsilon \right\} \leq 2\,\ee^{-\frac{C\,d_R\,\epsilon^2}{\|B\|_\infty^2}} ,
  \end{equation}
  where $C$ is a constant with $C = (36\,\pi^3)^{-1}$.
\end{theorem}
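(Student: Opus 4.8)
The plan is to recognize this as a direct application of the measure concentration phenomenon, specifically of Levy's lemma (appendix~\ref{appendix:levyslemma}), to the real-valued function $\psi \mapsto \Tr[B\,\psi]$ on the manifold of pure states in $\hiH_R$. The first step is to identify that manifold with a sphere: a normalized vector $\ket{\psi} \in \hiH_R$, with $\hiH_R$ a complex space of dimension $d_R$, is just a unit vector in a real vector space of dimension $2\,d_R$, so $\mathcal{P}_1(\hiH_R)$ is parametrized by the unit sphere $S^{2d_R-1}$. Under this identification the Haar measure on pure states coincides with the rotationally invariant uniform measure on the sphere, which is precisely the measure for which Levy's lemma supplies concentration.

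Next I would pin down the two ingredients Levy's lemma requires: the mean of $f(\psi) := \Tr[B\,\psi]$ and its Lipschitz constant. For the mean, I would use that the average of the random pure-state projector is the maximally mixed state on $\hiH_R$, i.e. $\expect{\psi}_\psi = \Pi_R/d_R = \rho_{\mathrm{mc}}$; linearity of the trace then gives $\expect{\Tr[B\,\psi]}_\psi = \Tr[B\,\Pi_R/d_R] = \expect{B}_{\mathrm{mc}}$, so the quantity the theorem controls is genuinely the deviation of $f$ from its own mean. Observe that only the compression $\Pi_R\,B\,\Pi_R$ enters, since $\ket{\psi}$ lives in $\hiH_R$; this is why, unlike in theorem~\ref{theorem:slloydensebleaveragesaresuperflous}, no commutation constraint on $B$ is needed here.

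For the Lipschitz constant I would estimate, for two unit vectors $\ket{\psi},\ket{\varphi} \in \hiH_R$, the difference via the telescoping identity $\bra{\psi}B\ket{\psi} - \bra{\varphi}B\ket{\varphi} = \bra{\psi-\varphi}B\ket{\psi} + \bra{\varphi}B\ket{\psi-\varphi}$. Bounding each term by $\|B\|_\infty\,\|\ket{\psi}-\ket{\varphi}\|_2$ (using $\|B\ket{\chi}\|_2 \le \|B\|_\infty$ for a unit vector $\ket{\chi}$) yields $|f(\psi)-f(\varphi)| \le 2\,\|B\|_\infty\,\|\ket{\psi}-\ket{\varphi}\|_2$, and since the Hilbert space distance equals the Euclidean distance on $S^{2d_R-1}$, the Lipschitz constant is $\eta = 2\,\|B\|_\infty$.

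The final step is to substitute $\eta = 2\,\|B\|_\infty$ and the sphere dimension into Levy's lemma and collect the constants, the $9\pi^3$ from Levy's lemma combining with the Lipschitz constant and the dimension count to produce the stated exponent. The main thing to watch is exactly this bookkeeping: the complex-versus-real dimension ($d_R$ versus $2\,d_R$) and the normalization in the exponent of the particular form of Levy's lemma must be kept consistent so that the announced value $C = (36\pi^3)^{-1}$ comes out correctly. The Lipschitz estimate, being the only genuinely analytic ingredient, is where I would expect to spend the most care, though it is ultimately short; everything else is assembly.
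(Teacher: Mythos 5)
Your proposal is correct and follows essentially the same route as the paper: identify $f_B(\psi)=\Tr[B\,\psi]$ as a function on the unit sphere of $\hiH_R$, compute its mean via $\expect{\psi}_\psi=\Pi_R/d_R$, bound its Lipschitz constant by $2\,\|B\|_\infty$ (your telescoping identity is just a rewriting of the paper's polarization-style estimate), and invoke Levy's lemma. The only difference is presentational — you make the real-versus-complex dimension bookkeeping explicit, which the paper leaves implicit.
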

\begin{proof}
  The proof is almost completely analogous to a proof in appendix VI of \cite{Popescu05} and relies on Levy's lemma~(s. appendix~\ref{appendix:levyslemma}).
  For an arbitrary fixed observable $B$ we define the function
  \begin{equation}
    f_B(\psi) = \Tr[B\,\psi].
  \end{equation}
  The expectation value $\expect{f_B(\psi)}_\psi$ of this function with respect to a randomly chosen pure states $\psi \in \hiH_R$ clearly is
  \begin{equation}
    \expect{f_B(\psi)}_\psi = \expect{\Tr[B\,\psi]}_\psi = \Tr[B\,\expect{\psi}_\psi] = \Tr[B\,\frac{\Pi_R}{d_R}] = \expect{B}_{\mathrm{mc}} .
  \end{equation}
  Its Lipschitz constant $\eta$ with respect to the Hilbert space norm is upper bounded by $2 \|B\|_\infty$, as \cite{Popescu05}:
  \begin{align}
    | &f_B(\psi_1) - f_B(\psi_2) | = |\Tr[B (\psi_1-\psi_2)]| \nonumber\\
    &\leq \|B\|_\infty\,\|\ket{\psi_1} + \ket{\psi_2}\|_2\,\|\ket{\psi_1} - \ket{\psi_2}\|_2 \\
    &\leq 2\,\|B\|_\infty\,\|\ket{\psi_1} - \ket{\psi_2}\|_2 \nonumber
  \end{align}
  Applying Levy's lemma (see appendix~\ref{appendix:levyslemma}) to $f_B(\psi)$ gives the desired result.
\end{proof}

Theorem~\ref{theorem:ensebleaveragesaresuperflousforthemicrocanonicalensemble} tells us that as $d_R$ becomes large the set of states $\psi$ for which $\Tr[B\,\psi]$ deviates from $\expect{B}_{\mathrm{mc}}$ by at most a given amount becomes exponentially small.
\emph{Typical} states will give expectation values that agree very well with the predictions of the microcanonical ensemble. 

%
Of course, typicality of expectation values is not sufficient to justify the microcanonical ensemble from measure theoretic considerations.
Variances and higher moments also need to be considered.
  
In \cite{slloydthesis} it is claimed that theorem~\ref{theorem:slloydensebleaveragesaresuperflous} implies that not only the expectation values, but in addition all higher moments are likely to be close to the microcanonical ones for typical states.
But what is actually proved is that the variance in state $\psi$ computed with respect to the microcanonical expectation value 
\begin{equation}
  \label{eq:notreallthevarianceinstatepy}
  \Tr[(B - \expect{B}_{\mathrm{mc}})^2\,\psi].
\end{equation}
is close to the microcanonical variance
\begin{equation}
  \sigma_{\mathrm{mc}}^2 = \expect{(B - \expect{B}_{\mathrm{mc}})^2}_{\mathrm{mc}} 
\end{equation}
with high probability given that $d_R$ is large.
The additional deviation caused by the fact that \eqref{eq:notreallthevarianceinstatepy} differs from the variance in state $\psi$
\begin{equation}
  \sigma_{\psi}^2 = \Tr[(B - \Tr[B\,\psi] )^2\,\psi]
\end{equation}
is not taken into account.

But, as one may already anticipate, the additional error typically is very small, so that it is not surprising that theorem~\ref{theorem:ensebleaveragesaresuperflousforthemicrocanonicalensemble} can be used to proof that not only the expectation values, but in addition the variances of almost all states are compatible with the variance of the microcanonical ensemble.
We expect that similar statements hold for all higher moments.

In particular we can proof that:
\begin{theorem}
  \label{theorem:thevarianceinrandompurestatesisthesameasthevariaveofthemcensemble}
  Let $\hiH_R \subseteq \hiH$ be a subspace of dimension $d_R$ of the Hilbert space $\hiH$ of some physical system.
  The probability that the variances of some observable $B$ in a random pure state $\psi \in \mathcal{P}_1(\hiH_R)$
  \begin{equation}
    \sigma_{\psi}^2 = \Tr[(B - \Tr[B\,\psi] )^2\,\psi]
  \end{equation}
  differs from the variance that follows from the microcanonical ensemble
  \begin{equation}
    \sigma_{\mathrm{mc}}^2 = \expect{(B - \expect{B}_{\mathrm{mc}})^2}_{\mathrm{mc}}
  \end{equation}
  is exponentially small, in the sense that for every $\epsilon\geq 0$
  \begin{align}
    &\probability\left\{| \sigma^2_{\psi} - \sigma^2_{\mathrm{mc}} | > \|B\|_\infty^2\,\epsilon \right\} \nonumber\\
    &\leq \min_{0\leq\delta\leq\epsilon} 2\,\ee^{-(C\,d_R\,(\epsilon - \delta))} + 2\,\ee^{-(C\,d_R\,\delta^2)} \\
    &\leq 4 \ee^{-C\,d_R\,(1+2\,\epsilon-\sqrt{1+4\,\epsilon})}
  \end{align}
  where $C$ is a constant with $C = (36\,\pi^3)^{-1}$.
\end{theorem}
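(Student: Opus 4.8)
The plan is to route the comparison through the auxiliary quantity \eqref{eq:notreallthevarianceinstatepy}, the \emph{pseudo-variance} $\tilde\sigma_\psi^2 := \Tr[(B-\expect{B}_{\mathrm{mc}})^2\,\psi]$, to which Theorem~\ref{theorem:ensebleaveragesaresuperflousforthemicrocanonicalensemble} applies essentially verbatim. Writing $\mu := \expect{B}_{\mathrm{mc}}$ for brevity, I would first expand the true variance as $\sigma_\psi^2 = \Tr[B^2\,\psi] - \Tr[B\,\psi]^2$ and note that, by the very definition of $\sigma_{\mathrm{mc}}^2$ in the statement, the microcanonical expectation of the shifted observable $(B-\mu)^2$ is \emph{exactly} $\sigma_{\mathrm{mc}}^2$. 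A one-line computation then yields the key algebraic identity
\begin{equation}
  \sigma_\psi^2 = \tilde\sigma_\psi^2 - (\Tr[B\,\psi] - \mu)^2 ,
\end{equation}
so that the error of interest splits cleanly as $\sigma_\psi^2 - \sigma_{\mathrm{mc}}^2 = (\tilde\sigma_\psi^2 - \sigma_{\mathrm{mc}}^2) - (\Tr[B\,\psi] - \mu)^2$. This identity is the heart of the argument: it isolates exactly the two independent sources of deviation that the two exponentials in the claimed bound are meant to capture, and it is precisely the correction term \eqref{eq:notreallthevarianceinstatepy} glosses over.

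Next I would bound each piece. The first piece, $\tilde\sigma_\psi^2 - \sigma_{\mathrm{mc}}^2$, is nothing but the deviation of the expectation value of the fixed observable $(B-\mu)^2$ from its microcanonical value, so Theorem~\ref{theorem:ensebleaveragesaresuperflousforthemicrocanonicalensemble} applied to this observable bounds $\probability\{|\tilde\sigma_\psi^2-\sigma_{\mathrm{mc}}^2| > \|B\|_\infty^2\,\delta\}$ by a Gaussian tail of the form $2\,\ee^{-C\,d_R\,\delta^2}$. The second piece is the \emph{squared} deviation of the mean: requiring $(\Tr[B\,\psi]-\mu)^2 > \|B\|_\infty^2\,(\epsilon-\delta)$ is equivalent to requiring $|\Tr[B\,\psi]-\mu| > \|B\|_\infty\sqrt{\epsilon-\delta}$, and Theorem~\ref{theorem:ensebleaveragesaresuperflousforthemicrocanonicalensemble} applied directly to $B$ then bounds its probability by $2\,\ee^{-C\,d_R\,(\epsilon-\delta)}$. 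The squaring is exactly what turns the Gaussian tail into the \emph{linear} exponential that appears as the first term of the claimed bound.

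With both tails in hand I would split the error budget as $\epsilon = (\epsilon-\delta) + \delta$ and union-bound: if neither bad event occurs, the triangle inequality forces $|\sigma_\psi^2-\sigma_{\mathrm{mc}}^2| \leq \|B\|_\infty^2\,\epsilon$. Taking the complement reproduces the first displayed inequality for every $0\leq\delta\leq\epsilon$, hence for the minimum. To obtain the closed-form second inequality I would balance the two exponents by choosing $\delta$ with $\epsilon-\delta = \delta^2$, i.e. $\delta = (\sqrt{1+4\,\epsilon}-1)/2$, at which point both terms share the exponent proportional to $1+2\,\epsilon-\sqrt{1+4\,\epsilon}$ and the sum collapses to the stated $4\,\ee^{-C\,d_R\,(1+2\,\epsilon-\sqrt{1+4\,\epsilon})}$ form.

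The main obstacle I anticipate is bookkeeping of constants rather than anything conceptual. The delicate point is that applying Theorem~\ref{theorem:ensebleaveragesaresuperflousforthemicrocanonicalensemble} to the shifted observable involves $\|(B-\mu)^2\|_\infty = \|B-\mu\|_\infty^2$, which in the worst case is as large as $4\,\|B\|_\infty^2$; absorbing this factor (and the corresponding one in the balancing step) into the single constant $C$ without degrading the quoted exponent requires some care, and is presumably why the final exponent is written in a simplified form. Everything else — the identity, the two invocations of Levy's lemma through Theorem~\ref{theorem:ensebleaveragesaresuperflousforthemicrocanonicalensemble}, and the elementary optimization over $\delta$ — is routine.
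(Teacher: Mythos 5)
Your proposal is correct and follows essentially the same route as the paper's own proof: the identity $\sigma_\psi^2 = \Tr[(B-\expect{B}_{\mathrm{mc}})^2\,\psi] - (\Tr[B\,\psi]-\expect{B}_{\mathrm{mc}})^2$, two invocations of Theorem~\ref{theorem:ensebleaveragesaresuperflousforthemicrocanonicalensemble} (one on the shifted squared observable giving the Gaussian tail, one on $B$ itself whose squaring yields the linear exponential), a union bound over the split $\epsilon = (\epsilon-\delta)+\delta$, and the balancing choice $\delta = (\sqrt{1+4\epsilon}-1)/2$. The constant-bookkeeping issue you flag concerning $\|(B-\expect{B}_{\mathrm{mc}})^2\|_\infty$ is real and is handled in the paper only by an informal renormalization/rescaling step, so your caution there is warranted rather than a defect of your argument.
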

\begin{proof}
  Let $\mu^{(n)}_\psi$ and $\mu^{(n)}_{\mathrm{mc}}$ be the $n$-th moment of the probability distribution of the observable $B$ with respect to the state $\psi$ and the microcanonical ensemble respectively, so that in particular $\mu^{(2)}_\psi = \sigma_{\psi}^2$ and $\mu^{(2)}_{\mathrm{mc}} = \sigma_{\mathrm{mc}}^2$. To simplify the notation we define 
  \begin{align}
    \Delta^{(n)}_1 &= | \mu^{(n)}_\psi - \Tr[(B - \expect{B}_{\mathrm{mc}})^n\,\psi] |\\
    \Delta^{(n)}_2 &= | \Tr[(B - \expect{B}_{\mathrm{mc}})^n\,\psi] - \mu^{(n)}_{\mathrm{mc}} | .
  \end{align}
  For all $\epsilon \geq 0$ we have:
  \begin{align}
    &\probability\left\{| \mu^{(n)}_{\psi} - \mu^{(n)}_{\mathrm{mc}} | > \epsilon \right\} \nonumber\\
    &\leq \min_{0\leq\delta\leq\epsilon} \probability\left\{ \Delta^{(n)}_1 \geq \epsilon - \delta \lor \Delta^{(n)}_2 \geq \delta \right\} \\
    &\leq \min_{0\leq\delta\leq\epsilon} \probability\left\{ \Delta^{(n)}_1 \geq \epsilon - \delta \right\}  + \probability\left\{ \Delta^{(n)}_2 \geq \delta \right\}
  \end{align}
  The second term in the last line can be bounded.
  Applying theorem~\ref{theorem:ensebleaveragesaresuperflousforthemicrocanonicalensemble} to $B_n = (B - \expect{B}_{\mathrm{mc}})^n$ gives
  \begin{equation}
    \probability\left\{ \Delta^{(n)}_2 \geq \delta \right\} \leq 2\,\ee^{-\frac{C\,d_R\,\delta^2}{\|B_n\|_\infty^2}} .
  \end{equation}
  This is an exponential version of the bound found in \cite{slloydthesis}.
 
  Bounding the first term is in general more complicated except for the variances where we can use the following argument:
  Assume that the deviation between $\Tr[B\,\psi]$ and $\expect{B}_{\mathrm{mc}}$ is
  \begin{equation}
    \epsilon' = \Tr[B\,\psi] - \expect{B}_{\mathrm{mc}} , 
  \end{equation}
  then 
  \begin{align}
    \sigma_{\psi}^2 &= \Tr[(B - \Tr[B\,\psi] )^2\,\psi] = \Tr[(B - ( \expect{B}_{\mathrm{mc}} + \epsilon' ) )^2\,\psi] \nonumber\\
    &= \Tr[(B - \expect{B}_{\mathrm{mc}})^2\,\psi] + \epsilon'^2 - 2\,\epsilon'\,\Tr[(B - \expect{B}_{\mathrm{mc}})\,\psi] \\
    &= \Tr[(B - \expect{B}_{\mathrm{mc}})^2\,\psi] - \epsilon'^2 ,
  \end{align}
  so that
  \begin{equation}
    |\Tr[B\,\psi] - \expect{B}_{\mathrm{mc}}| \leq \epsilon' \Longrightarrow \Delta^{(2)}_1 \leq \epsilon'^2
  \end{equation}
  and therefore we have by theorem~\ref{theorem:ensebleaveragesaresuperflousforthemicrocanonicalensemble} for all $0 \leq \delta \leq \epsilon$ 
  \begin{align}
    &\probability\left\{ \Delta^{(2)}_1 \geq \epsilon - \delta \right\} \nonumber\\
    &\leq \probability\left\{ | \Tr[\psi\,B] - \expect{B}_{\mathrm{mc}}| \geq \sqrt{\epsilon - \delta} \right\} \\
    &\leq 2\,\ee^{-\frac{C\,d_R\,(\epsilon - \delta)}{\|B\|_\infty^2}} .
  \end{align}
  Combining the two estimates we arrive at:
  \begin{equation}
    \probability\left\{| \sigma^2_{\psi} - \sigma^2_{\mathrm{mc}} | > \epsilon \right\} \leq \min_{0\leq\delta\leq\epsilon} 2\,\ee^{-\frac{C\,d_R\,(\epsilon - \delta)}{\|B\|_\infty^2}} + 2\,\ee^{-\frac{C\,d_R\,\delta^2}{\|(B-\expect{B}_{\mathrm{mc}})^2\|_\infty^2}}
  \end{equation}
  Now, every observable can be renormalized such that $\expect{B}_{\mathrm{mc}} = 0$ and rescaled such that its operator norm is one. Doing this one changes the variance by a factor of $\|B\|_\infty^2$ so that we get
  \begin{equation}
    \probability\left\{| \sigma^2_{\psi} - \sigma^2_{\mathrm{mc}} | > \|B\|_\infty^2\,\epsilon \right\} \leq \min_{0\leq\delta\leq\epsilon} 2\,\ee^{-(C\,d_R\,(\epsilon - \delta))} + 2\,\ee^{-(C\,d_R\,\delta^2)} .
  \end{equation}
  Substituting $\delta = 1/2 (\sqrt{1+4\,\epsilon}-1)$ gives the second bound.
\end{proof}
Note that all important steps in the above discussion are valid also for higher moments except for the bound on $\probability\{ \Delta^{(n)}_1 \geq \delta \}$, which is especially simple for the special case $n=2$.
We expect however that slightly more complicated arguments can be made for all higher moments.

Measuring the same typical pure state of a large enough quantum system we therefore can expect to not only get expectation values that are close to the microcanonical ones but in addition the observed variances will be almost identical to the ones predicted by conventional Statistical Mechanics.
Note that these variances are caused by \emph{objective quantum uncertainties}\footnote{The interpretation of the word \emph{objective} depends on the preferred interpretation of Quantum Mechanics. A discussion of this point (that comes to the conclusion that Quantum Mechanical probabilities are not \emph{objective} in a certain sense) can for example be found in \cite{PhysRev.108.17}. However, they are certainly in some sense \emph{more objective} than probabilities that result form the voluntary dismissal of information due to coarse graining. We shall not elaborate on this point here as it would lead us to far away from the subject of this work.} and not by ensemble averages due to a \emph{subjective lack of knowledge} of the micro state.

Concluding we may say that, given an ensemble of large quantum mechanical systems we are, by measure only a reasonably small number of observables, with very high probability, unable to decide whether all systems of the ensemble are in the same random pure state choose from some subspace, or representatives of the corresponding microcanonical ensemble.
We call this property of large quantum systems \emph{microcanonical typicality}.
However, there are combinations of initial states $\psi_0$ and observables $B$ that give a measurement statistic that deviates radically from the predictions of the microcanonical ensemble.
This happens for example when $\psi_0$ is an eigenstate of $B$.
These measurements are the ones that best characterize the system under consideration and an experimentalist will always seek for such a characterization.
Thus the physical significance of the above results is questionable.

In the following sections we will elaborate more on this point and present arguments similar to theorem~\ref{theorem:ensebleaveragesaresuperflousforthemicrocanonicalensemble} for coarse grained observables and for situations where only a subsystem of a larger quantum system is experimentally accessible and we will see that in these situations the criticism expressed above does not apply.

\subsubsection{Typicality of coarse grained observables}
\label{sec:typicalitycoarsegrainedobservables}
We have seen that when all observables are experimentally accessible there always exist measurements, in particular measurements in the eigenbasis, which give a measurement statistic for a random pure state that deviates radically from the one predicted by the microcanonical ensemble.

However, on macroscopic systems most observables are not accessible.
This is not only a consequence of experimental limitations but manly due to the vast number of dimensions of the Hilbert spaces of macroscopic systems \cite{vonneumann1929,Reimann08}.
As an example consider the spin degrees of freedom of a macroscopic magnet.
The typical Hilbert space of such a system has a dimension of the order of $2^{10^{23}}$.
Trying to measure an observable that can distinguish that many states, or even worse, doing state tomography on such a system, certainly is a completely futile task.

Obviously we need to find a way to take our limited capabilities into account when seeking a realistic description of macroscopic systems.
The way we will do that here is the simplest and most straight forward one can possibly think of and similar considerations date back to the work of von~Neumann \cite{vonneumann1929}.

Let $M = \{M_i\}$ be the set of experimentally accessible \emph{macro observables} $M_i$, where, without loss of generality we can assume that the $M_i$ are positive-semidefinite $M_i \geq 0$ and have trace one $\Tr M_i = 1$.
We think of the $M_i$ as macroscopic observables, so that, due to the limited resolution of our measurement apparatuses, the $M_i$ will be highly degenerate.
Furthermore we want the $M_i$ to be \emph{classical} in the sense that $[M_i,M_j] = 0$.
Such a set $M$ of commuting observables induces a pseudo norm and an associated pseudo trace distance
\begin{equation}
  \tracedistance_M(\rho,\sigma) = \max_{M_i\in M} \Tr[M_i(\rho-\sigma)] 
\end{equation}
which measures how well two states $\rho$ and $\sigma$ can be distinguished from one another by the restricted set of observables.\footnote{Note that $\tracedistance_M(\cdot,\cdot)$ reduces to the normal trace distance if $M = \mathcal{M}(\hiH)$. See appendix~\ref{appendix:distancemeasuresforquantumstates} for more information on distance measures for quantum states.}
The set of accessible measurements partitions the total Hilbert space $\hiH$ of the system into a complete set of $m$ orthogonal subspaces $\{\hiH_r\}$ with $\bigoplus_{r=1}^m \hiH_r = \hiH$ of \emph{macroscopically distinguishable states}, or \emph{macro states}, such that states from one subspace can not be distinguished by any of the $M_i$ and that two states are distinguishable by at least one of the $M_i$ whenever they are in different subspaces:
\begin{align}
  \forall \hiH_r\quad &\forall \rho,\sigma \in \mathcal{P}_1(\hiH_r)  & \tracedistance_M(\rho,\sigma) &= 0 \\
  \forall \hiH_r\neq\hiH_s\ &\forall \rho \in \hiH_r,\sigma \in \mathcal{P}_1(\hiH_s)  & \tracedistance_M(\rho,\sigma) &> 0
\end{align}
Every macroscopic observable $A$ that we can measure by using all our measurement capabilities is of the form
\begin{equation}
  A = \sum_{r=1}^m \alpha_r \Pi_r
\end{equation}
where the $\Pi_r$ are the projectors onto the corresponding subspaces $\hiH_r$ and the $\alpha_r$ real parameters.

In realistic situations we can expect that $m \ll d$ and the following theorem tells us that we are unlikely to have any chance of distinguishing a random pure state from the microcanonical state under these conditions:
\begin{theorem}
  \label{theorem:ensebleaveragesaresuperflousforcoarsegraindobservables}
  Let $\hiH_R \subseteq \hiH$ be a restricted subspace of dimension $d_R$ of the Hilbert space $\hiH$ of some physical system.
  Assume that the physically feasible, macroscopic measurements allow one to distinguish a total number of $m$ macro states.
  Then the probability that a random pure state $\psi \in \mathcal{P}_1(\hiH_R)$ gives an expectation value for any of the accessible macroscopic observables $A$ that differs from that of the microcanonical one with respect to $\hiH_R$ is exponentially small, namely
  \begin{equation}
    \label{eq:allmacroscopicobservablesaretypical}
    \probability\left\{ \max_A | \Tr[A\,\psi] - \expect{A}_{\mathrm{mc}}| \geq \epsilon \right\} \leq 2\,m\,\ee^{-\frac{C\,d_R\,\epsilon^2}{m^2\,\|A\|_\infty^2}} ,
  \end{equation}
  where $C$ is a constant with $C = (36\,\pi^3)^{-1}$.
\end{theorem}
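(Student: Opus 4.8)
The plan is to collapse the maximum over the entire continuous family of accessible macroscopic observables into a finite union bound over the $m$ elementary macro-state projectors $\Pi_r$, and then to apply the single-observable concentration estimate of Theorem~\ref{theorem:ensebleaveragesaresuperflousforthemicrocanonicalensemble} to each projector separately. The structural fact I would exploit is that every accessible observable is of the form $A = \sum_{r=1}^m \alpha_r\,\Pi_r$, so all such $A$ are simultaneously diagonal with respect to the common decomposition $\bigoplus_{r=1}^m \hiH_r$; their eigenvalues are precisely the $\alpha_r$, whence $\|A\|_\infty = \max_r |\alpha_r|$ and in particular $|\alpha_r| \leq \|A\|_\infty$ for every $r$.

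First I would use linearity. Setting $X_r := \Tr[\Pi_r\,\psi] - \expect{\Pi_r}_{\mathrm{mc}}$, one has $\Tr[A\,\psi] - \expect{A}_{\mathrm{mc}} = \sum_{r=1}^m \alpha_r\,X_r$, and therefore $|\Tr[A\,\psi] - \expect{A}_{\mathrm{mc}}| \leq \|A\|_\infty \sum_{r=1}^m |X_r|$ uniformly over all accessible $A$. This replaces the maximum over observables by control of the single random quantity $\sum_r |X_r|$. Next I would invoke a pigeonhole argument: if $\sum_r |X_r| \geq \epsilon/\|A\|_\infty$, then at least one summand must satisfy $|X_r| \geq \epsilon/(m\,\|A\|_\infty)$, so the event $\{\max_A |\Tr[A\,\psi] - \expect{A}_{\mathrm{mc}}| \geq \epsilon\}$ is contained in $\bigcup_{r=1}^m \{|X_r| \geq \epsilon/(m\,\|A\|_\infty)\}$. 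A union bound then gives $\probability\{\max_A \dots \geq \epsilon\} \leq \sum_{r=1}^m \probability\{|X_r| \geq \epsilon/(m\,\|A\|_\infty)\}$. Finally I would estimate each term by applying Theorem~\ref{theorem:ensebleaveragesaresuperflousforthemicrocanonicalensemble} to the fixed observable $B = \Pi_r$; since any projector has $\|\Pi_r\|_\infty = 1$, this yields $\probability\{|X_r| \geq \epsilon/(m\,\|A\|_\infty)\} \leq 2\,\ee^{-C\,d_R\,\epsilon^2/(m^2\,\|A\|_\infty^2)}$, and summing the $m$ identical contributions reproduces exactly the claimed bound $2\,m\,\ee^{-C\,d_R\,\epsilon^2/(m^2\,\|A\|_\infty^2)}$.

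The only genuinely delicate point I anticipate is the reduction from the continuum of observables to the finite collection $\{\Pi_r\}$: it hinges on the fact that \emph{all} accessible $A$ share the same eigenprojectors, so that a single decomposition diagonalizes the whole family at once, and on the observation that each $\Pi_r$ has unit operator norm, which is what produces the clean $m^2$ in the exponent. Once those two facts are in place, the pigeonhole step and the union bound are routine, and the factor $m$ enters twice in the expected way — once as the number of events in the union and once as the $1/m$ loss in the concentration threshold.
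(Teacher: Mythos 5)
Your proposal is correct and follows essentially the same route as the paper: decompose every accessible $A$ over the common projectors $\Pi_r$, reduce the deviation of $A$ to deviations of the individual $\Tr[\Pi_r\,\psi]$ via the triangle inequality, and combine Theorem~\ref{theorem:ensebleaveragesaresuperflousforthemicrocanonicalensemble} with a union bound over the $m$ projectors. The only cosmetic difference is that you apply the single-observable bound to $B=\Pi_r$ with threshold $\epsilon/(m\,\|A\|_\infty)$ while the paper applies it to $B=\alpha_r\,\Pi_r$ with threshold $\epsilon$ and rescales at the end; the resulting bound is identical.
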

\begin{proof}
  The proof is inspired by the considerations in appendix VI of \cite{Popescu05}.
  As explained above $M$ defines a set of mutually orthogonal projectors $\Pi_r$ onto subspaces of indistinguishable states and consequently every accessible observable is of the form
  \begin{equation}
    A = \sum_{r=1}^m \alpha_r \Pi_r 
  \end{equation}
  so that $\|A\|_\infty = \max_r |\alpha_r|$.
  Obviously for all such observables it holds that
  \begin{align}
    | \Tr[A\,\psi] - \expect{A}_{\mathrm{mc}}| &\leq \sum_{r=1}^m | \alpha_r (\Tr[\Pi_r\,\psi] - \expect{\Pi_r}_{\mathrm{mc}})| \\
    &\leq m\,\max_{r} |\alpha_r| | \Tr[\Pi_r\,\psi] - \expect{\Pi_r}_{\mathrm{mc}} |
  \end{align}
  Inserting $B = \alpha_r\,\Pi_r$ into theorem~\ref{theorem:ensebleaveragesaresuperflousforthemicrocanonicalensemble} we find that for random pure states $\psi \in \mathcal{P}_1(\hiH_R)$
  \begin{equation}
    \probability\left\{ |\alpha_r| | \Tr[\Pi_r\,\psi] - \expect{\Pi_r}_{\mathrm{mc}}| \geq \epsilon \right\} \leq 2\,\ee^{-\frac{C\,d_R\,\epsilon^2}{\alpha_r^2}} ,
  \end{equation}
  where $C = (36\,\pi^3)^{-1}$.
  Using the union bound we see that this implies that 
  \begin{equation}
    \probability\left\{ \exists r:\ |\alpha_r| | \Tr[\Pi_r\,\psi] - \expect{\Pi_r}_{\mathrm{mc}}| \geq \epsilon \right\} \leq 2\,m\,\ee^{-\frac{C\,d_R\,\epsilon^2}{\|A\|_\infty^2}} ,
  \end{equation}
  so that for all accessible observables $A$
  \begin{equation}
    \probability\left\{ | \Tr[A\,\psi] - \expect{A}_{\mathrm{mc}}| \geq m\,\epsilon \right\} \leq 2\,m\,\ee^{-\frac{C\,d_R\,\epsilon^2}{\|A\|_\infty^2}} .
  \end{equation} 
\end{proof}
The important quantity in the above theorem is the quotient $d_R/m^2$ in the exponent of \eqref{eq:allmacroscopicobservablesaretypical} which quantifies how good our abilities to prepare and measure a state are.
Assuming that the dimensions of each of the subspaces of indistinguishable states are approximately identical one can expect that $d_R \approx d/m$ and $d$ grows exponentially with the number of constituents of the system.
In contrast $m$ is basically given by the spread of the spectra of the physically accessible observables divided by the resolution of the measurement apparatuses.
The spread of the spectra can be expected to grow at most polynomial with the system size and the resolution of the measurement apparatuses will be roughly independent of the system size.
One can therefore expect that for large enough systems one enters the regime where $d_R \gg m^2$ and where the above theorem becomes meaningful.

In contrast to theorem~\ref{theorem:ensebleaveragesaresuperflousforthemicrocanonicalensemble}, which we have criticized for being of limited significance, as there always exist observables capable of distinguishing between a random state and the microcanonical state, theorem~\ref{theorem:ensebleaveragesaresuperflousforcoarsegraindobservables} is a statement about \emph{all accessible} observables.

\subsection{The canonical ensemble}
\label{sec:thecanonicalensemble}
The usual situation in which the canonical ensemble is applied are subsystems of weakly interacting composite systems whose total energy is known to lie in some narrow interval.
A slightly more general situation is that of a composite system subject to the constraint that the value of some observable $A$ corresponding to an extensive and conserved quantity is known to lie within some interval.
This understanding of the canonical ensemble includes what is sometimes called the \emph{grand canonical ensemble}.
For the sake of simplicity we restrict ourselves to the canonical case where $A = \haH$. The generalization to the grand canonical case is almost trivial. 

Using the canonical ensemble to calculate expectation values is equivalent to assuming that the state of the system of interest is given by the so called \emph{canonical state}
\begin{equation}
  \label{eq:canonicalstate}
  \rho_c = \frac{1}{Z} \ee^{-\beta\,\haH_S} = \frac{1}{Z} \ee^{-\beta\,E_k} \ketbra{E^S_k}{E^S_k} ,
\end{equation}
where $\beta$ is the inverse temperature, $\ket{E^S_k}$ the eigenstates of the system Hamiltonian $\haH_S$ and 
\begin{equation}
  Z = \Tr \ee^{-\beta\,\haH_S} 
\end{equation}
the \emph{partition sum}, which ensures normalization.

Taking \eqref{eq:canonicalstate} as the system state is usually justified by regarding it as a subsystem of a larger, closed composite system to which the microcanonical ensemble can be applied \cite{Goldstein06,greiner,noltingstatistischephysik01}.\footnote{Alternatively one plead the Bayesian probability and the principle of maximum entropy principle \cite{PhysRev.108.17,PhysRev.106.62}.}
The following is a sketch of how this justification works.

The argument presented herein follows closely the discussion in \cite{Goldstein06}.
Note that the argument is solely based on combinatorics and the identification of the thermodynamic entropy with the entropy defined via the number of compatible micro states.
There is nothing specifically quantum to it.
Very similar arguments can be found in nearly every textbook on Statistical Mechanics.

The Hamiltonian of the composite system
\begin{equation}
  \haH = \haH_S \otimes \mathds{1} + \mathds{1} \otimes \haH_B + \haH_{SB}
\end{equation}
consists of a system Hamiltonian $\haH_S$, a bath Hamiltonian $\haH_B$ and an interaction term $\haH_{SB}$.
The interaction term is assumed to be small in the sense that the total energy of the system is approximately the sum of the system energy and the bath energy, i.e. that energy is extensive, and that the energy eigenstates are close to product states.

The energy of the composite system is assumed to be known to lie in some interval $[E, E+\Delta E]$ that is assumed to be small on a macroscopic energy scale, but still large enough such that the subspace $\haH_R$ spanned by the energy eigenstate with eigenvalues in the interval is large.

Assuming that the composite system is in the microcanonical state and using that the energy eigenstates of $\haH$ are approximately product states we find for the reduced state of the system
\begin{align}
  \label{eq:microcanonicalreducedstate}
  \rho_{\mathrm{mc}}^S &= \Tr_B \rho_{\mathrm{mc}} \\
  \label{eq:microcanonicalreducedstate2}
  &\approx \frac{1}{Z} \sum_{k=1}^{d_S} d_k(E^B)\,\ketbra{E^S_k}{E^S_k} ,
\end{align}
where the $\ket{E^S_k}$ are the eigenstates of $\haH_S$ with energy $E^S_k$ and the $d_k(E^B)$ are the number of eigenstates of $\haH_B$ with eigenvalues in the interval $[E - E^S_k, E - E^S_k + \Delta E]$.

The last step is to introduce the concept of temperature.
The inverse temperature of the bath is defined via $\beta = \partial S(E^B)/\partial E^B$ where $S(E^B)$ is the entropy of the bath when it is held at energy $E^B$.
Assuming that the energy levels of the bath become exponentially dense with increasing energy, which seems to be a reasonable assumption for most thermodynamic systems, one can expect that $S(E^B) \approx \log(d_k(E^B))$.\footnote{This is probably the most critical step in the argument. The assumption of exponentially dense energy gaps conflicts with the assumption that $\haH_{SB}$ does not significantly influence the eigenstates of the uncoupled Hamiltonian $\haH_S \otimes \mathds{1} + \mathds{1} \otimes \haH_B$, as this can be guarantied only when the coupling is smaller than the energy gaps of the uncoupled Hamiltonian.}
Such that, if the bath is much larger than the system we have:
\begin{equation}
  d_k(E^B) \approx \ee^{S(E - E^S_k)} \approx \ee^{S(E) - \beta\,E^S_k} \propto \ee^{-\beta\,E^S_k}
\end{equation}
So that finally one reaches the conclusion that $\rho_{\mathrm{mc}}^S = \Tr_B \rho_{\mathrm{mc}} \approx \rho_{c}$ under the given conditions.\footnote{Using a similar argument, but under additional assumptions on the interaction Hamiltonian, namely that it only couples adjacent energy eigenstates, the canonical ensemble is also derived in \cite{tasaki98}.}

Now the question is: Is it possible to come to the same conclusion without using the ad hoc assumption of the microcanonical state for the composite system?
In \cite{slloydthesis} consequences of theorem~\ref{theorem:slloydensebleaveragesaresuperflous} on the equivalence of expectation values obtained using the canonical ensemble and expectation values of typical quantum states have been already been discussed.
Using similar arguments it is shown in \cite{PhysRev.114.94,tasaki98,Goldstein06} that the reduced state of a typical random state from the subspace compatible with the imposed energy constraint will, with high probability, be close to $\rho_{\mathrm{c}}$.
Herein we focus on the more rigorous exponential bounds provided by theorem~\ref{theorem:ensebleaveragesaresuperflousforthemicrocanonicalensemble} and the results obtained in \cite{Popescu05}.

Of course theorem~\ref{theorem:ensebleaveragesaresuperflousforthemicrocanonicalensemble} is also applicable to observables that act only locally on the subsystem and our considerations concerning variances and higher moments also remain valid.
Consequently theorem~\ref{theorem:ensebleaveragesaresuperflousforthemicrocanonicalensemble} and \ref{theorem:thevarianceinrandompurestatesisthesameasthevariaveofthemcensemble} already tell us that the measurement statistics of local observables does not differ much whether we assume that the composite system is in the microcanonical state corresponding to $\hiH_R$ or in one particular random pure state from $\hiH_R$.

For reduced states of random pure states an even more powerful statement can be proved. 
This is the main result of \cite{Popescu05}:
\begin{theorem}
  {\bf (Theorem 1 in \cite{Popescu05})\footnote{In many situations theorem~\ref{theorem:ensebleaveragesaresuperflousforthecanonicalensemble} can be further improved. See \cite{Popescu05} for details.}}
  \label{theorem:ensebleaveragesaresuperflousforthecanonicalensemble}
  Let $\hiH_R \subseteq \hiH$ be a subspace of dimension $d_R$ of the Hilbert space $\hiH = \hiH_S \otimes \hiH_B$ of some physical system.
  The probability that the reduced state $\rho^S = \Tr_B \psi$ of a randomly chosen pure state $\psi \in \mathcal{P}_1(\hiH_R)$ is more than $\epsilon > 0$ away from the reduced microcanonical state $\rho_{\mathrm{mc}}^S = \Tr_B \rho_{\mathrm{mc}} $ is given by
  \begin{equation}
    \probability\left\{ \tracedistance(\rho^S, \rho_{\mathrm{mc}}^S) \geq 2\,\epsilon + 2\,\sqrt{\frac{d_S}{d^{\mathrm{eff}}_B}} \right\} \leq 2\,\ee^{-C\,d_R\,\epsilon^2} ,
  \end{equation}
  with $C=(18\,\pi^3)^{-1}$ and 
  \begin{equation}
    d^{\mathrm{eff}}_B = d^{\mathrm{eff}}(\rho_{\mathrm{mc}}^B) \geq \frac{d_R}{d_S} .
  \end{equation}
\end{theorem}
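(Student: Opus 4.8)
The plan is to show that the reduced state $\rho^S$ of a random pure state concentrates around its average, and then identify that average with the reduced microcanonical state up to the correction term $\sqrt{d_S/d^{\mathrm{eff}}_B}$. I would organize the argument in two conceptually separate pieces: a \emph{typicality} (concentration) step and an \emph{average-state} step.

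First I would establish the average-state estimate. The average over random pure states $\psi \in \mathcal{P}_1(\hiH_R)$ of the full state is $\expect{\psi}_\psi = \Pi_R/d_R = \rho_{\mathrm{mc}}$, so by linearity of the partial trace $\expect{\rho^S}_\psi = \Tr_B[\rho_{\mathrm{mc}}] = \rho_{\mathrm{mc}}^S$. The content here is not this identity but the claim that a \emph{single} random $\rho^S$ is typically close to $\rho_{\mathrm{mc}}^S$ in trace distance. I would bound the expected distance using the purity: by Jensen's inequality and the relation between trace norm and Hilbert--Schmidt norm on a $d_S$-dimensional space,
\begin{equation}
  \expect{\tracedistance(\rho^S,\rho_{\mathrm{mc}}^S)}_\psi \leq \tfrac{1}{2}\sqrt{d_S}\,\expect{\|\rho^S - \rho_{\mathrm{mc}}^S\|_2^2}^{1/2}_\psi ,
\end{equation}
and then compute $\expect{\|\rho^S - \rho_{\mathrm{mc}}^S\|_2^2}_\psi = \expect{\Tr[(\rho^S)^2]}_\psi - \Tr[(\rho_{\mathrm{mc}}^S)^2]$ using the standard second-moment (swap-trick) formula for averages of $\psi\otimes\psi$ over the Haar measure on $\hiH_R$. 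This computation yields a bound of order $d_S/d^{\mathrm{eff}}_B$ with $d^{\mathrm{eff}}_B = 1/\Tr[(\rho_{\mathrm{mc}}^B)^2] = d^{\mathrm{eff}}(\rho_{\mathrm{mc}}^B)$, and the elementary inequality $d^{\mathrm{eff}}_B \geq d_R/d_S$ follows since $\Tr[(\rho_{\mathrm{mc}}^B)^2] \leq \Tr[(\rho_{\mathrm{mc}})^2] \cdot d_S$ together with $\Tr[\rho_{\mathrm{mc}}^2]=1/d_R$. This step is where the correction term $2\sqrt{d_S/d^{\mathrm{eff}}_B}$ enters.

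Second I would carry out the concentration step. I would treat $g(\psi) = \tracedistance(\rho^S,\rho_{\mathrm{mc}}^S)$ as a function on $\mathcal{P}_1(\hiH_R)$ and verify it is Lipschitz with respect to the Hilbert space norm: since the partial trace is contractive and the trace distance is a metric, $|g(\psi_1)-g(\psi_2)| \leq \tracedistance(\psi_1,\psi_2) \leq \|\ket{\psi_1}-\ket{\psi_2}\|_2$, giving Lipschitz constant essentially $1$ (possibly a small universal factor). Applying Levy's lemma (appendix on Levy's lemma) to $g$ around its mean gives $\probability\{ g(\psi) \geq \expect{g}_\psi + \epsilon' \} \leq 2\,\ee^{-C\,d_R\,\epsilon'^2}$ for a universal constant $C$. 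Combining with $\expect{g}_\psi \leq \sqrt{d_S/d^{\mathrm{eff}}_B}$ from the first step, and absorbing the factors of $2$ into the threshold $2\epsilon + 2\sqrt{d_S/d^{\mathrm{eff}}_B}$, produces the stated tail bound.

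The main obstacle I anticipate is the second-moment computation feeding the average-state step: one must evaluate $\expect{\Tr_S[(\Tr_B \psi)^2]}_\psi$ correctly over the Haar measure, which requires the averaging identity $\expect{\psi\otimes\psi}_\psi = (\mathds{1}+\mathbb{S})/(d_R(d_R+1))$ restricted to $\hiH_R\otimes\hiH_R$ (where $\mathbb{S}$ is the swap) and a careful bookkeeping of partial traces of the swap operator across the $S/B$ tensor split. Getting the constants and the precise appearance of $\Tr[(\rho_{\mathrm{mc}}^B)^2]$ right — rather than $\Tr[(\rho_{\mathrm{mc}}^S)^2]$ — is the delicate point, since it is the \emph{bath} effective dimension that controls the bound. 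The Lipschitz estimate and the Levy's lemma application are comparatively routine once the averaging is done.
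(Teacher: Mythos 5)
Your proposal is correct and reconstructs exactly the argument of the cited source: the paper itself does not reprove this theorem but imports it from \cite{Popescu05}, whose proof proceeds precisely via (i) the swap-trick second-moment computation bounding $\expect{\|\rho^S-\rho^S_{\mathrm{mc}}\|_2^2}_\psi$ by $\Tr[(\rho^B_{\mathrm{mc}})^2]$ and the norm inequality $\|\cdot\|_1\le\sqrt{d_S}\,\|\cdot\|_2$, and (ii) Levy's lemma applied to the Lipschitz function $\psi\mapsto\|\Tr_B\psi-\rho^S_{\mathrm{mc}}\|_1$. Your Lipschitz bound, the identification $\expect{\rho^S}_\psi=\rho^S_{\mathrm{mc}}$, and the Cauchy--Schwarz argument for $d^{\mathrm{eff}}_B\ge d_R/d_S$ are all sound, and the constants you obtain are consistent with (indeed slightly better than) the stated $C=(18\,\pi^3)^{-1}$.
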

Whenever $d_R \gg d_S$, which is exactly the situation we are interested in, this theorem gives a full replacement for the assumption made in \eqref{eq:microcanonicalreducedstate}.
If one trusts the argument presented above that $\rho_{\mathrm{mc}}^S \approx \rho_{c}$, this theorem, together with the usual assumption of weak interaction, proves that almost every pure state drawn from a sufficiently large subspace is locally equivalent to the canonical state.
That is, there exists \emph{no} measurement at all by which they can be distinguished.
This is a measure theoretic justification for the applicability of the canonical ensemble that does not rely on the microcanonical ensemble or the equal a priory probability postulate.
The authors of \cite{Popescu05} call it \emph{General Canonical Principle}.

\section{Average effective dimension of random pure states}
\label{sec:averageeffectivedimensionofrandompurestates}
In this section we will discuss the \emph{effective dimension}
\begin{equation}
  d^{\mathrm{eff}}(\omega) = \frac{1}{\Tr[\omega^2]} ,
\end{equation}
where $\omega = \$[\psi_0] = \expect{\psi_t}_t$, of random pure initial states $\psi_0$ drawn according to different distributions.
This quantity will be important in the following discussion.
Roughly spoken we will find that a high effective dimension causes thermodynamic behavior, while a small effective dimension will make quantum effects observable.

Before we go on it is useful to develop an intuitively understanding for the effective dimension.
Obviously we have $d^{\text{eff}}(\psi) = 1$ if $\psi$ is pure and the completely mixed state has an effective dimension of $d^{\text{eff}}(\mathds{1}/d) = d$.
Expanding an arbitrary pure initial state $\psi_0$ in the energy eigenbasis as follows
\begin{equation}
  \psi_0 = \sum_{kl} c_k\,c_l^*\,\ee^{-\iu(E_k - E_l)t} \ketbra{E_k}{E_l}
\end{equation}
we find that, under the assumption of non-degenerate energy gaps, its effective dimension is
\begin{equation}
  d^{\mathrm{eff}}(\omega) = \frac{1}{\Tr[\$[\psi_0]^2]} = \frac{1}{\sum_k |c_k|^4} .
\end{equation}
Therefrom we see that the effective dimension can be interpreted as a measure for the number of energy eigenstates that contribute significantly to the given initial state $\psi_0$.
This intuition can already serve as a justification for the assumption that for macroscopic objects $d^{\mathrm{eff}}(\omega)$ will typically be very large.

In the remainder of this section we will establish a number of rigorous measure theoretic statements supporting this intuition.
The considerations will necessarily be quite technical.
In particular, we will consider states drawn according to the Haar measure from subspaces of the total Hilbert space, product states, where both tensor components are drawn from subspaces according to the Haar measure, and states from the mean energy ensemble.
When first reading this work it is maybe better to settle with the intuitive argument given above, skip the rest of this section and continue reading in section~\ref{sec:equilibration}.

\subsection{States drawn from subspaces}
\label{sec:equilibrationoftypicalstates}
One of the centrals result derived in \cite{Linden09} is that almost all pure states drawn according to the unitary invariant Haar measure from a high dimensional subspace have a high effective dimension:
\begin{theorem}
  \label{theorem:highaveragedeffectivedimensionisgeneric}
  {\bf (Theorem 2 in \cite{Linden09})}
  i) The average effective dimension with respect to a Hamiltonian with non-degenerate energy gaps $\expect{d^{\mathrm{eff}}(\omega)}_{\psi_0}$, where the average is computed over uniformly random pure initial states $\psi_0 \in \mathcal{P}_1(\mathcal{H_R})$ drawn from some subspace $\hiH_R \in \hiH$ of dimension $d_R$, is such that
  \begin{equation}
    \expect{d^{\mathrm{eff}}(\omega)}_{\psi_0} \geq \frac{d_R}{2} .
  \end{equation}
  ii) For a random pure initial state $\psi_0 \in \mathcal{P}_1(\mathcal{H_R})$, the probability that $d^{\mathrm{eff}}(\omega)$ is smaller than $d_R/4$ is exponentially small, namely
  \begin{equation}
    \probability\left\{d^{\mathrm{eff}}(\omega) < \frac{d_R}{4}\right\} \leq 2\,\ee^{-C\,\sqrt{d_R}}
  \end{equation}
  with a constant $C = \frac{\ln(2)^2}{72\,\pi^3}$.
\end{theorem}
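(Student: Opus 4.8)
The plan is to reduce everything to the single scalar $\Tr[\omega^2]$, since $d^{\mathrm{eff}}(\omega)=1/\Tr[\omega^2]$ and, under the non-degenerate energy gaps assumption, $\omega=\$[\psi_0]$ is diagonal in the energy eigenbasis with $\Tr[\omega^2]=\sum_k p_k^2$, where $p_k=\bra{E_k}\psi_0\ket{E_k}=|\braket{E_k}{\psi_0}|^2$ and $\sum_k p_k=1$. Both parts then become statements about this purity.

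For part~i) I would first exploit convexity: since $x\mapsto 1/x$ is convex on the positive reals, Jensen's inequality gives $\expect{d^{\mathrm{eff}}(\omega)}_{\psi_0}=\expect{1/\Tr[\omega^2]}_{\psi_0}\ge 1/\expect{\Tr[\omega^2]}_{\psi_0}$, so it suffices to bound the \emph{average} purity. I would write $\expect{\Tr[\omega^2]}_{\psi_0}=\sum_k\expect{|\braket{E_k}{\psi_0}|^4}_{\psi_0}$ and insert the standard Haar fourth-moment formula for a state drawn uniformly from $\hiH_R$, namely $\expect{|\braket{E_k}{\psi_0}|^4}_{\psi_0}=2\,\bra{E_k}\Pi_R\ket{E_k}^2/(d_R(d_R+1))$, the factor $\bra{E_k}\Pi_R\ket{E_k}$ appearing because $\psi_0\in\hiH_R$ only sees the projection $\Pi_R\ket{E_k}$. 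Summing over $k$ and using $\bra{E_k}\Pi_R\ket{E_k}\le 1$ together with $\sum_k\bra{E_k}\Pi_R\ket{E_k}=\Tr[\Pi_R]=d_R$ yields $\expect{\Tr[\omega^2]}_{\psi_0}\le 2/(d_R+1)$, whence $\expect{d^{\mathrm{eff}}(\omega)}_{\psi_0}\ge (d_R+1)/2\ge d_R/2$.

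For part~ii) I would upgrade this to a high-probability statement by measure concentration, following the same strategy as Theorem~\ref{theorem:ensebleaveragesaresuperflousforthemicrocanonicalensemble} and invoking Levy's lemma (appendix~\ref{appendix:levyslemma}). The natural candidate is $f(\psi_0)=\|\omega\|_2=\sqrt{\Tr[\$[\psi_0]^2]}$: because the dephasing map is a contraction in the Hilbert–Schmidt norm and passing from a state vector to its density matrix costs only a constant factor, $f$ is Lipschitz with a constant independent of $d_R$ (of order $\sqrt{2}$, or $2\sqrt{2}$ if one works with $\Tr[\omega^2]$ directly, which is consistent with the $72\,\pi^3=8\cdot 9\,\pi^3$ appearing in the final constant). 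Levy's lemma then gives exponential concentration of $f$ around its mean, which part~i) already locates at $\lesssim\sqrt{2/d_R}$.

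The hard part will be the rate, and this is exactly why the exponent is only $\sqrt{d_R}$ rather than $d_R$. The threshold $d^{\mathrm{eff}}<d_R/4$ is the event $\|\omega\|_2>2/\sqrt{d_R}$, i.e.\ a deviation of $f$ from its mean of the \emph{same} order $1/\sqrt{d_R}$ as both the typical value of $f$ and the Levy fluctuation scale, so a single naive application of Levy's lemma to $f$ only delivers a constant-order bound. To recover genuine decay I would pass to the logarithmic variable: the gap between $d_R/2$ and $d_R/4$ is a fixed factor of two, so in terms of $\log\|\omega\|_2$ the threshold sits a constant distance $\tfrac12\ln 2$ above the mean — this is the origin of the $\ln(2)^2$ in the constant — at the price that the effective Lipschitz constant of $\log\|\omega\|_2$ on the relevant region grows like $1/\|\omega\|_2\sim\sqrt{d_R}$. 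Balancing the fixed log-gap against this growing Lipschitz constant is the delicate step; equivalently, one must control the heavy, stretched-exponential tail of $\sum_k p_k^2$, which is dominated by the rare event that a single overlap $p_{k^*}$ is anomalously large ($p_{k^*}\gtrsim\sqrt{2/d_R}$). I expect this to require either a restricted-Lipschitz argument (Levy on the region $\|\omega\|_2\le\lambda$ together with a separate union bound over the $d_R$ components for the complementary event) or a direct trade-off optimization analogous to the $\min_{\delta}$ step in the proof of Theorem~\ref{theorem:thevarianceinrandompurestatesisthesameasthevariaveofthemcensemble}. Optimizing this split is what should finally produce the rate $2\,\ee^{-C\sqrt{d_R}}$ with $C=\ln(2)^2/(72\,\pi^3)$.
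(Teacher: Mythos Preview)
The paper does not actually prove this theorem: it is quoted verbatim from \cite{Linden09} and no argument is given here. The only place the underlying techniques surface is in the proof of Theorem~\ref{theorem:highaveragedeffectivedimensionisgenericforproductstates} (the product-state variant), which explicitly says it ``uses some of the ideas from the proof of theorem~2 in \cite{Linden09}''.

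For part~i) your plan is exactly the argument visible there: Jensen on $x\mapsto 1/x$ reduces the claim to $\expect{\Tr[\omega^2]}_{\psi_0}\le 2/d_R$, and this average is computed using the Haar second-moment formula $\expect{\psi_0\otimes\psi_0}_{\psi_0}=\Pi_{RR}(\mathds{1}+\mathbb{S})/(d_R(d_R+1))$ together with the swap trick $\Tr[A^2]=\Tr[(A\otimes A)\mathbb{S}]$. Your componentwise fourth-moment computation is the same thing written in coordinates, so part~i) is fine and matches the paper's (and \cite{Linden09}'s) route.

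For part~ii) you have correctly located the real obstacle: applying Levy's lemma naively to $f(\psi_0)=\|\omega\|_2$ with its $O(1)$ Lipschitz constant and the threshold $2/\sqrt{d_R}$ yields only a \emph{constant} bound, because the deviation scale and the typical value are both $\sim d_R^{-1/2}$ and cancel the $d_R$ in the Levy exponent. Your reading of the constants ($72\pi^3=8\cdot 9\pi^3$, the $\ln(2)^2$ coming from a fixed multiplicative gap) is also on target. What remains genuinely unsettled in your proposal is the mechanism that converts this multiplicative gap into the $\sqrt{d_R}$ rate: you offer two candidate fixes (pass to $\log\|\omega\|_2$ with a restricted-Lipschitz argument, or split off the rare large-overlap event and union-bound), but you do not commit to one or show that either actually closes. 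That is the step where the proof lives; until you pin it down, part~ii) is a correct diagnosis with an incomplete cure. Since the paper itself defers the proof to \cite{Linden09}, there is nothing further to compare against here --- but be aware that turning your sketch into a proof will require making one of those two routes precise, and the bookkeeping that produces exactly $\ln(2)^2/(72\pi^3)$ is not automatic from what you have written.
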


The above theorem states that whenever one draws a state according to the Haar measure form a high dimensional subspace one will almost certainly get a state with a high effective dimension.
Note that theorem~\ref{theorem:highaveragedeffectivedimensionisgeneric} is a very strong statement.
It is actually much stronger than what we will need in the following, namely that $d^{\text{eff}}(\omega)$ is much larger than some low, fixed power of the dimension of the Hilbert space of the subsystem $d_S$.

\subsection{Product states}
\label{sec:equilibrationofproductstates}
A particularly interesting class of initial states are product states.
Theorem~\ref{theorem:highaveragedeffectivedimensionisgeneric} shows that almost all states chosen from sufficiently large subspaces have a high effective dimension.
The set of product states however is not a subspace.

The applicability of theorem~\ref{theorem:highaveragedeffectivedimensionisgeneric} to product states is therefore limited to the case where either the system or the bath states are fixed and the other is chosen from a subspace $\hiH_{S,R}$ or $\hiH_{B,R}$ of the Hilbert space of the bath or system respectively, such that $\hiH_R = \psi^S_0 \otimes \hiH_{B,R}$ or $\hiH_R = \hiH_{S,R} \otimes \psi^B_0$.

Here we show that a slightly modified version of the first part of theorem~\ref{theorem:highaveragedeffectivedimensionisgeneric} holds for product states where both the system and the bath part are chosen from subspaces $\hiH_{S,R}$ and $\hiH_{B,R}$ respectively:
\begin{theorem}
  \label{theorem:highaveragedeffectivedimensionisgenericforproductstates}
  The average effective dimension with respect to a Hamiltonian with non-degenerate energy gaps $\expect{d^{\mathrm{eff}}(\omega)}_{\psi^S_0 \otimes \psi^B_0}$ where the average is computed over product states $\psi^S_0 \otimes \psi^B_0$ consisting of uniformly random pure initial states $\psi^{S/B}_0 \in \mathcal{P}_1(\hiH_{S/B,R})$ chosen from subspaces $\hiH_{S/B,R} \subseteq \hiH_{S/B}$ of dimension $d_{S/B,R}$ respectively is such that
  \begin{equation}
    \expect{d^{\mathrm{eff}}(\omega)}_{\psi^S_0 \otimes \psi^B_0} \geq \frac{(d_{S,R}+1)\,(d_{B,R}+1)}{4} .
  \end{equation}
\end{theorem}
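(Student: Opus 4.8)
The plan is to mirror the proof of part~(i) of Theorem~\ref{theorem:highaveragedeffectivedimensionisgeneric}, reducing the claim to an upper bound on the average purity $\expect{\Tr[\omega^2]}_{\psi^S_0\otimes\psi^B_0}$ and then evaluating the Haar average by exploiting the product structure of the initial state. Since $x\mapsto 1/x$ is convex on $(0,\infty)$, Jensen's inequality gives
\begin{equation}
  \expect{d^{\mathrm{eff}}(\omega)}_{\psi^S_0\otimes\psi^B_0} = \expect{\frac{1}{\Tr[\omega^2]}}_{\psi^S_0\otimes\psi^B_0} \geq \frac{1}{\expect{\Tr[\omega^2]}_{\psi^S_0\otimes\psi^B_0}} ,
\end{equation}
so it suffices to show $\expect{\Tr[\omega^2]}_{\psi^S_0\otimes\psi^B_0} \leq 4/\big((d_{S,R}+1)(d_{B,R}+1)\big)$. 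Writing $\psi_0 = \psi^S_0\otimes\psi^B_0$ and $P_k = \ketbra{E_k}{E_k}$, the non-degenerate energy gaps assumption makes $\omega = \$[\psi_0]$ diagonal in the energy eigenbasis, so that $\Tr[\omega^2] = \sum_k (\Tr[P_k\,\psi_0])^2 = \sum_k \Tr[(P_k\otimes P_k)\,\psi_0^{\otimes 2}]$.

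Next I would take the expectation. Because $\psi^S_0$ and $\psi^B_0$ are drawn independently, the two-copy average factorizes, and the standard Haar second-moment identity on each subspace gives $\expect{(\psi^{S/B}_0)^{\otimes 2}} = (\Pi_{S/B,R}^{\otimes 2}+\mathbb{F}_{S/B})/\big(d_{S/B,R}(d_{S/B,R}+1)\big)$, where $\mathbb{F}_{S/B}$ denotes the swap operator on the two copies of the respective subspace. After regrouping the tensor factors so that the two copies of $S$ and of $B$ are adjacent, the numerator of $\expect{\psi_0^{\otimes 2}}$ splits into four pieces: the term $\Pi_R^{\otimes 2}$, the full swap $\mathbb{F}_R = \mathbb{F}_S\otimes\mathbb{F}_B$ on $\hiH_R\otimes\hiH_R$, and the two partial-swap cross terms $\Pi_{S,R}^{\otimes 2}\otimes\mathbb{F}_B$ and $\mathbb{F}_S\otimes\Pi_{B,R}^{\otimes 2}$. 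Setting $q_k = \Tr[\Pi_R P_k] = \|\Pi_R\ket{E_k}\|_2^2$, a short computation shows that both the $\Pi_R^{\otimes 2}$ and the $\mathbb{F}_R$ pieces contribute $\sum_k q_k^2$, exactly as in the single-subspace case.

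The step that goes beyond Theorem~\ref{theorem:highaveragedeffectivedimensionisgeneric}, and which I expect to be the main obstacle, is the evaluation of the two partial-swap cross terms. Expanding $\ket{E_k}$ in a product basis of $\hiH_{S,R}\otimes\hiH_{B,R}$, I would show that $\sum_k\Tr[(P_k\otimes P_k)(\Pi_{S,R}^{\otimes 2}\otimes\mathbb{F}_B)] = \sum_k \Tr[(\rho^B_k)^2]$ and likewise $\sum_k\Tr[(P_k\otimes P_k)(\mathbb{F}_S\otimes\Pi_{B,R}^{\otimes 2})] = \sum_k\Tr[(\rho^S_k)^2]$, where $\rho^{S/B}_k$ is the partial trace over the complementary factor of the subspace-projected eigenprojector $\Pi_R P_k\Pi_R$; each $\rho^{S/B}_k$ is positive-semidefinite with $\Tr[\rho^{S/B}_k] = q_k$. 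Since $\Tr[M^2]\leq(\Tr M)^2$ for any positive-semidefinite $M$, each cross term is bounded by $\sum_k q_k^2$. The difficulty is purely the bookkeeping that identifies these partial swaps with purities of partial traces; once that identification is made, the positivity bound does all the remaining work.

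Finally, I would combine the pieces. The four terms together are at most $4\sum_k q_k^2$, and since $q_k\leq 1$ and $\sum_k q_k = \Tr[\Pi_R\sum_k P_k] = \Tr[\Pi_R] = d_R = d_{S,R}\,d_{B,R}$, we have $\sum_k q_k^2 \leq d_{S,R}\,d_{B,R}$. Dividing the resulting numerator bound $4\,d_{S,R}\,d_{B,R}$ by the normalization $d_{S,R}(d_{S,R}+1)\,d_{B,R}(d_{B,R}+1)$ gives $\expect{\Tr[\omega^2]}_{\psi^S_0\otimes\psi^B_0}\leq 4/\big((d_{S,R}+1)(d_{B,R}+1)\big)$, and the Jensen inequality above then yields the claimed bound $\expect{d^{\mathrm{eff}}(\omega)}_{\psi^S_0\otimes\psi^B_0}\geq (d_{S,R}+1)(d_{B,R}+1)/4$.
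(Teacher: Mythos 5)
Your proposal is correct and follows essentially the same route as the paper's proof: Jensen's inequality, the Haar second-moment identity applied independently to the two tensor factors, regrouping of the copies, and an expansion into four terms of which the two partial-swap cross terms are bounded by the two "diagonal" contributions $\sum_k q_k^2 \leq d_{S,R}\,d_{B,R}$. The only (harmless) difference is cosmetic: you identify each cross term as $\sum_k \Tr[(\rho^{S/B}_k)^2]$ and invoke $\Tr[M^2]\leq(\Tr M)^2$ for positive-semidefinite $M$, whereas the paper expands the energy eigenstates in product-basis coefficients $c_{ksb}$ and bounds the resulting sum termwise via $a\,b^*+b\,a^*\leq|a|^2+|b|^2$ --- the two bounds are numerically identical.
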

\begin{proof}
  The proof uses some of the ideas from the proof of theorem~2 in \cite{Linden09}.
  The first step is to see that the average effective dimension is bounded by the inverse of the average purity of the time averaged state as follows.
  \begin{equation}
    \expect{d^{\mathrm{eff}}(\omega)}_{\psi^S_0 \otimes \psi^B_0} = \expect{\frac{1}{\Tr[\omega^2]}}_{\psi^S_0 \otimes \psi^B_0} \geq \frac{1}{\expect{\Tr[\omega^2]}_{\psi^S_0 \otimes \psi^B_0}}
  \end{equation}
  To bound the average purity we first use the simple identity
  \begin{equation}
    \label{eq:usefulidentityfortracesofperatorproducts}
    \Tr[A\,B] = \Tr[(A \otimes B)\,\mathbb{S}] ,
  \end{equation}
  where $\mathbb{S}$ is the swap operator of the two tensor components.
  Equation \eqref{eq:usefulidentityfortracesofperatorproducts} can easily be proved by expanding it in a basis.
  \begin{align}
    \Tr[(A \otimes B)\,\mathbb{S}] &= \sum_{kl} \bra{kl}(A \otimes B)\,\mathbb{S}\ket{kl} \\
    &= \sum_{kl} \bra{kl}(A \otimes B)\ket{lk} \\
    &= \sum_{kl} \bra{k}A\ket{l}\,\bra{l}B\ket{k} \\
    &= \sum_{k} \bra{k}A\,B\ket{k} =  \Tr[A\,B] 
  \end{align}
  Second, we need the following lemma, which follows from the representation theory of the unitary group:
  \begin{lemma}
    {\bf \cite{Linden09}}
    Let $\expect{\cdot}_\psi$ be the average over random pure states $\psi \in \mathcal{P}_1(\hiH_R)$ drawn from some subspace $\hiH_R \subseteq \hiH$ of dimension $d_R$. Then
    \begin{equation}
      \expect{\psi \otimes \psi}_\psi = \frac{\Pi_{RR}\,(\mathds{1}+\mathbb{S})}{d_R\,(d_R+1)} ,
    \end{equation}
     where $\Pi_{RR} = \Pi_R \otimes \Pi_R$ and $\Pi_R$ is the projector onto the subspace $\hiH_R$.
  \end{lemma}
  Third, we need the assumption of non-degenerate energy gaps to identify the time average with the dephasing map introduced in \eqref{eq:dephasingmap}.
  In addition we need another linear swap operator $\tilde{\mathbb{S}}$ that is defined via its action on product states,
  \begin{equation}
    \tilde{\mathbb{S}} \ket{sbs'b'} = \ket{ss'bb'}
  \end{equation}
  where $\ket{s},\ \ket{s'} \in \hiH_S$ and $\ket{b},\ \ket{b'} \in \hiH_B$.
  Note that $\tilde{\mathbb{S}}$ is unitary, $\tilde{\mathbb{S}}^2 = \mathds{1}$ and $\|\tilde{\mathbb{S}}\|_\infty = 1$.

  Writing $\ket{k}$ instead of $\ket{E_k}$ for the eigenstates to simplify the notation, the average purity can be written as follows: 
  \begin{align}
    &\expect{\Tr[\omega^2]}_{\psi^S_0 \otimes \psi^B_0} = \expect{\Tr[\$(\psi^S_0 \otimes \psi^B_0)^2]}_{\psi^S_0 \otimes \psi^B_0} = \\
    &= \expect{\Tr[(\$ \otimes \$)(\psi^S_0 \otimes \psi^B_0 \otimes \psi^S_0 \otimes \psi^B_0)\,\mathbb{S}]}_{\psi^S_0 \otimes \psi^B_0} \\
    &= \expect{\Tr[(\$ \otimes \$)(\tilde{\mathbb{S}}\,(\psi^S_0 \otimes \psi^S_0 \otimes \psi^B_0 \otimes \psi^B_0)\,\tilde{\mathbb{S}})\,\mathbb{S}]}_{\psi^S_0 \otimes \psi^B_0} \\
    &= \expect{\Tr[(\$ \otimes \$)(\tilde{\mathbb{S}}\,(\psi^S_0 \otimes \psi^S_0 \otimes \psi^B_0 \otimes \psi^B_0)\,\tilde{\mathbb{S}})\,\mathbb{S}]}_{\psi^S_0 \otimes \psi^B_0} \\
    &= \Tr[(\$ \otimes \$)(\tilde{\mathbb{S}}\,(\expect{\psi^S_0 \otimes \psi^S_0}_{\psi^S_0} \otimes \expect{\psi^B_0 \otimes \psi^B_0}_{\psi^B_0})\,\tilde{\mathbb{S}})\,\mathbb{S}] \\
    &= D^{-1}\,\Tr[(\$ \otimes \$)(\tilde{\mathbb{S}}\,(\Pi_{SS,R}\,(\mathds{1}_{SS} + \mathbb{S}_{SS}) \otimes \Pi_{SS,R}\,(\mathds{1}_{BB} + \mathbb{S}_{BB}))\,\tilde{\mathbb{S}})\,\mathbb{S}]\\
    &= D^{-1}\,\sum_{kl} \Tr[\ketbra{kl}{kl}\,\tilde{\mathbb{S}}\,(\Pi_{SS,R}\,(\mathds{1}_{SS} + \mathbb{S}_{SS}) \otimes \Pi_{BB,R}\,(\mathds{1}_{BB} + \mathbb{S}_{BB}))\,\tilde{\mathbb{S}}\,\ketbra{kl}{kl}\,\mathbb{S}]\\
    &= D^{-1}\,\sum_{kl} \Tr[\ketbra{kl}{lk}]\,\bra{kl}\tilde{\mathbb{S}}\,(\Pi_{SS,R}\,(\mathds{1}_{SS} + \mathbb{S}_{SS}) \otimes \Pi_{BB,R}\,(\mathds{1}_{BB} + \mathbb{S}_{BB}))\,\tilde{\mathbb{S}}\,\ket{kl}\\
    \label{eq:expectedpurityforproductstates}
    &= D^{-1}\,\sum_k \bra{kk}\tilde{\mathbb{S}}\,(\Pi_{SS,R}\,(\mathds{1}_{SS} + \mathbb{S}_{SS}) \otimes \Pi_{BB,R}\,(\mathds{1}_{BB} + \mathbb{S}_{BB}) )\,\tilde{\mathbb{S}}\,\ket{kk}
  \end{align}
  Thereby $\mathds{1}_{SS/BB}$ and $\mathbb{S}_{SS/BB}$ are the identity and the swap operator on the product spaces $\hiH_{S/B} \otimes \hiH_{S/B}$, the $\Pi_{SS/BB,R}$ are the projectors onto the symmetric product of subspaces $\hiH_{S/B,R} \otimes \hiH_{S/B,R}$ respectively and $D = d_{S,R}\,(d_{S,R}+1)\,d_{B,R}\,(d_{B,R}+1)$.

  If the restricted subspaces for both the system and the bath are taken to be the full Hilbert spaces the fact that $\|\tilde{\mathbb{S}}\|_\infty = 1$ and $\|\mathbb{S}_{SS/BB}\|_\infty = 1$ is sufficient to immediately see that 
  \begin{equation}
    \expect{\Tr[\omega^2]}_{\psi^S_0 \otimes \psi^B_0} \leq \frac{4}{(d_S+1)\,(d_B+1)} .
  \end{equation}
  
  To proof the full lemma all that remains is to show that the sum in \eqref{eq:expectedpurityforproductstates} is bounded by $4/(d_{S,R}\,d_{B,R})$.
  The calculation is quite lengthy but most parts are straight forward, therefore we discuss it only briefly.

  The energy eigenstates can be written as linear combinations of product states
  \begin{equation}
    \ket{k} = \sum_{sb} c_{ksb}\,\ket{sb},
  \end{equation}
  where the $\ket{s}$ and $\ket{b}$ form an orthonormal basis of the system and bath Hilbert space, which is chosen such that the first $d_{S/B,R}$ vectors span the restricted subspaces $\hiH_{S/B,R}$.
  Expanding the middle part of \eqref{eq:expectedpurityforproductstates} gives four terms.
  The two symmetric terms, the one without any $\mathbb{S}_{SS}$ or $\mathbb{S}_{BB}$, and the one with both $\mathbb{S}_{SS}$ and $\mathbb{S}_{BB}$ are both equal to
  \begin{equation}
    \label{eq:symmetrcitermcontribution}
    \sum_{k} \sum_{ss'}^{d_{S,R}} \sum_{bb'}^{d_{B,R}} \|c_{ksb}\|^2\,\|c_{ks'b'}\|^2 .
  \end{equation}
  while the two asymmetric terms are equal to 
  \begin{equation}
    \label{eq:asymmetrcitermcontribution}
    \sum_{k} \sum_{ss'}^{d_{S,R}} \sum_{bb'}^{d_{B,R}} c_{ks'b}\,c_{ksb'}\,c_{ksb}^*\,c_{ks'b'}^* .
  \end{equation}
  Both contributions are real and \eqref{eq:asymmetrcitermcontribution} is always smaller or equal than \eqref{eq:symmetrcitermcontribution}.
  This can be seen by using the fundamental inequality
  \begin{equation}
    a\,b^* + b\,a^* \leq |a|^2 + |b|^2
  \end{equation}
  with $a = c_{ksb}\,c_{ks'b'}$ and $b = c_{ks'b}\,c_{ksb'}$, which gives
  \begin{equation}
    \begin{split}
      c_{ks'b}\,c_{ksb'}\,c_{ksb}^*\,c_{ks'b'}^* + c_{ksb}\,c_{ks'b'}\,c_{ks'b}^*\,c_{ksb'}^* \\\leq \|c_{ks'b}\|^2\,\|c_{ksb'}\|^2 + \|c_{ksb}\|^2\,\|c_{ks'b'}\|^2 .      
    \end{split}
  \end{equation}
  Finally the first symmetric term can be bounded as follows:
  \begin{align}
    &\sum_k \bra{kk} \tilde{\mathbb{S}}\,(\Pi_{SS,R}\otimes\Pi_{BB,R})\,\tilde{\mathbb{S}}\ket{kk} \nonumber\\
    = &\sum_k \bra{kk} (\Pi_{S,R}\otimes\Pi_{B,R}\otimes\Pi_{S,R}\otimes\Pi_{B,R})\ket{kk} \\
    = &\sum_k \bra{k} (\Pi_{S,R}\otimes\Pi_{B,R}\ket{k} \bra{k}\Pi_{S,R}\otimes\Pi_{B,R})\ket{k} \\
    \leq &\sum_k \bra{k} (\Pi_{S,R}\otimes\Pi_{B,R}\ket{k} = d_{S,R}\,d_{B,R}
  \end{align}
  This completes the proof.
\end{proof}

First note that if either $d_S = 1$ and thereby $\hiH_R = \psi^S_0 \otimes \hiH_{B,R}$, or $d_B = 1$ and thereby $\hiH_R = \hiH_{S,R} \otimes \psi^B_0$, we recover the result of theorem~\ref{theorem:highaveragedeffectivedimensionisgeneric}.
The new version of theorem~\ref{theorem:highaveragedeffectivedimensionisgeneric} is sightly better than the original one in situations where both the system and the bath state are drawn from subspaces of comparatively large dimension.

\subsection{States from the mean energy ensemble}
\label{sec:statesfromthemeanenergyenesemble}
In theorem~\ref{theorem:highaveragedeffectivedimensionisgeneric} the bound on the probability to get a state with a low effective dimension drops of exponentially.
This raises the hope that the result does not depend on the details of the measure from which the states are drawn and that similar statements hold true for other non-singular measures.
The Haar measure and the unitary invariant ensemble used in both theorem~\ref{theorem:highaveragedeffectivedimensionisgeneric} and \ref{theorem:highaveragedeffectivedimensionisgenericforproductstates} is sometimes criticized for being unphysical.
It is therefore worth considering other more physically motivated ensembles.
In this section we will calculate the average effective dimension in the \emph{mean energy ensemble}.\footnote{The results presented in this section partially originate from a discussion with Markus M\"{u}ller and Jens Eisert in October 2009.}

Without loss of generality we assumed that the Hamiltonian $\haH$ of the system under consideration is positive, has non-degenerate energy gaps and that its eigenvalues are ordered such that $E_k < E_{k+1} \forall k$.
The mean energy ensemble to energy $E$ is defined as the set of normalized pure states $\psi$ with energy expectation value $E$
\begin{equation}
  \label{eq:meanenergyensemble}
  M_E = \{\psi : \Tr[\psi] = 1, \Tr[\haH\,\psi] = E\} .
\end{equation}
In \cite{Brody05,Bender05,Brody07,0909.3175v1} this ensemble was suggested as a natural alternative to the conventional definition of the microcanonical ensemble we discussed in section~\ref{sec:themicrocanonicalensemble}.

Note that the mean energy ensemble is invariant under shifting and rescaling of all involved energies, i.e. the energy $E$ and all eigenvalues $E_k$ of the respective Hamiltonian, whereas the Harmonic mean
\begin{equation}
  E_H = \frac{d}{\sum_k \frac{1}{E_k}}
\end{equation}
is a nonlinear function of the $E_k$ and is therefore not invariant.
By appropriately shifting all energies $E_H$ can be adjusted to all values between the ground state energy $E_0$ and the mean energy $E_{\diameter} = \Tr[\haH]/d$ while at the same time keeping all energies positive \cite{Mueller09}.
It is therefore always possible to shift the energies such that $E \approx E_H$ when $E_0 < E < E_{\diameter}$.

Using this trick it is shown in \cite{Mueller09} that the manifold of states defined \eqref{eq:meanenergyensemble} shows a strong concentration of measure phenomenon and a method to approximately sample states from the mean energy ensemble is derived:
\begin{theorem}
  \label{theorem:approcimatesamplingscemeforthemeanenergyensemble}
  {\bf (Algorithm 21 in \cite{1003.4982})}
  Consider the mean energy ensemble to energy $E$ of a $d$-dimensional quantum system whose Hamiltonian $\haH$ with eigenvectors $\ket{E_k}$ and eigenvalues $E_k$ is assumed to be positive.
  If $E < E_{\diameter} = \Tr[\haH]/d$ so that all energies can be shifted such that $E \approx E_H = d/\sum_k 1/E_k$ and if $E$ satisfies some mild additional constraints (see the original paper \cite{Mueller09} for more details) a state vector 
  \begin{equation}
    \ket{\psi} = \sum_k c_k \ket{E_k}
  \end{equation}
  from the mean energy ensemble at energy $E$ can be sampled to good approximation by choosing the real and imaginary parts of the expansion coefficients $c_k$ independently from normal distributions with variances
  \begin{equation}
    \label{eq:variancestosamplethemeanenergyensemble}
    \sigma_k = \sqrt{\frac{E}{d\,E_k}} .
  \end{equation}
  If the spectrum of $\haH$ fulfills some additional constraints the described procedure becomes exact in the thermodynamic limit $d \to \infty$ (for details see \cite{Mueller09}).
\end{theorem}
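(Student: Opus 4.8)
The plan is to realize the uniform (microcanonical) measure on the manifold $M_E$ as a \emph{constrained} version of the Haar measure on the unit sphere, and then to show that this constrained measure is well approximated by an \emph{unconstrained} product Gaussian measure whose covariance is tilted mode by mode --- which is exactly the equivalence-of-ensembles philosophy underlying the relation between the microcanonical and canonical ensembles. Recall that if one draws the coefficients $c_k = x_k + \iu\,y_k$ with $x_k,y_k$ independent real Gaussians and normalizes $\ket{\psi} = \sum_k c_k \ket{E_k}$, the induced measure on $\mathcal{P}_1(\hiH)$ depends only on the joint density of the $c_k$ restricted to the sphere; for equal variances this reproduces the Haar measure, while a mode-dependent variance instead produces a Gibbs-tilted measure concentrated on a prescribed energy shell. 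The prescription of the theorem amounts to choosing the variance of each real and imaginary part proportional to $1/E_k$, i.e. $\sigma_k^2 = E/(d\,E_k)$.

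First I would compute the first two moments of the unnormalized energy and norm under this product measure. With $\expect{|c_k|^2} = 2\,\sigma_k^2 \propto 1/E_k$ one finds
\begin{equation}
  \expect{\sum_k E_k\,|c_k|^2} \propto \sum_k 1 = d, \qquad \expect{\sum_k |c_k|^2} \propto \sum_k \frac{1}{E_k} = \frac{d}{E_H},
\end{equation}
so that after normalization the expected energy is precisely the harmonic mean $E_H$. Fixing the overall constant to $\sigma_k^2 = E/(d\,E_k)$ pins the expected norm and matches the target energy exactly when $E \approx E_H$; this is where the shifting-and-rescaling trick of the preceding discussion enters, since it is what permits $E_H$ to be tuned to the desired $E$ within the admissible window $E_0 < E < E_{\diameter}$.

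Second, I would establish concentration. Each $|c_k|^2$ is an independent scaled exponential, hence sub-exponential, so the weighted sums $\sum_k E_k|c_k|^2$ and $\sum_k |c_k|^2$ concentrate sharply about their means by a Bernstein-type inequality, \emph{provided} no single mode carries an anomalously large share of the weight --- precisely the ``mild additional constraints'' on the spectrum in the statement, which amount to demanding a large effective dimension so that the law of large numbers operates across the energy modes. Combining the two concentration statements through a ratio bound then shows that the normalized energy $\sum_k E_k|c_k|^2 / \sum_j |c_j|^2$ lies within $\epsilon$ of $E_H \approx E$ except on a set of exponentially small measure, so a Gaussian sample lands in a thin shell around $M_E$ with overwhelming probability.

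The hard part is the last, genuinely quantitative step: showing that, conditioned on that shell, the tilted product measure is \emph{close in distribution} to the uniform measure on $M_E$, with a controlled error. This is the equivalence-of-ensembles estimate proper --- one must argue that integrating out the single global energy constraint turns the tilted product measure into the flat measure on the constraint surface, up to corrections that vanish as $d\to\infty$ --- and it is here that the extra spectral hypotheses and the precise meaning of ``good approximation'' (closeness of all low-order moments, or of the induced statistics of any fixed bounded observable) must be made rigorous. Since the full quantitative argument is carried out in the cited work, at this level I would content myself with the moment-matching and concentration bounds above and invoke the large-$d$ equivalence to conclude that the prescribed Gaussian sampling reproduces the mean energy ensemble to good approximation, becoming exact in the thermodynamic limit as claimed.
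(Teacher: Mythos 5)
This theorem is imported verbatim from the cited reference (it is ``Algorithm~21'' of M\"uller--Gross--Eisert), and the paper offers no proof of it at all --- it simply points the reader to the original work. So there is nothing in the paper to compare your argument against; what can be judged is whether your sketch is a sound reconstruction of the strategy. On that score the outline is right and the elementary part checks out: with $\expect{|c_k|^2}=2\sigma_k^2 = 2E/(d\,E_k)$ one indeed gets $\expect{\sum_k E_k|c_k|^2}=2E$ and $\expect{\sum_k|c_k|^2}=2E/E_H$, so the ratio of means is the harmonic mean $E_H$, which explains why the prescription targets $E$ only after the shift that arranges $E\approx E_H$; and the Bernstein-type concentration of the two sub-exponential sums is the correct mechanism by which a Gaussian sample lands near the shell $\sum_k E_k|c_k|^2/\sum_j|c_j|^2 = E$.

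The genuine gap is the one you yourself flag and then wave at: nothing in your argument shows that the tilted product Gaussian, conditioned on (or restricted near) the shell, induces the \emph{uniform} measure on $M_E$ rather than some other measure supported there. Concentration on the constraint surface is necessary but nowhere near sufficient --- one could concentrate on $M_E$ with a badly skewed distribution across it. The actual content of the cited result is a local-central-limit-theorem / saddle-point computation identifying the marginal density of the constrained uniform measure with the product of mode-wise Gaussians up to errors controlled by the spectrum (this is also where the ``mild additional constraints'' and the exactness as $d\to\infty$ come from, neither of which your sketch touches). A second, smaller imprecision: you assert that after normalization the expected energy ``is precisely'' $E_H$, but only the ratio of expectations equals $E_H$; the expectation of the normalized energy agrees with it only up to the same concentration estimates you invoke later, so the two steps are not independent. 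As it stands your proposal is a correct roadmap with the decisive quantitative step outsourced to the very reference the theorem already cites, which leaves it in essentially the same evidentiary position as the paper itself.
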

 
The sampling method is similar to the sampling procedure for the Haar measure ensemble (see appendix~\ref{appendix:thehaarmeasure}).
But the variances of the normal distributions from which the real and imaginary parts of the expansion coefficients are drawn are now functions of the energy of the respective eigenstate of the Hamiltonian.

We are particularly interested in calculating the average effective dimension in the mean energy ensemble.
Whether or not the average effective dimension is large depends on the structure of the energy spectrum and the energy $E$ to which the mean energy ensemble is taken.

If $E$ is close to $E_0$, the shift to make $E \approx E_H$ is such that after the shift $E_0$ is close to zero.
The variances \eqref{eq:variancestosamplethemeanenergyensemble} and the expected moduli of the expansion coefficients with respect to the energy eigenbasis are then very nonuniform.
This in general leads to a small average effective dimension.
In the extreme case that $E = E_0$ the only state in $M_E$ is the ground state $\ket{E_0}$ and $d^{\mathrm{eff}}(\omega) = 1$.
If on the other hand $E$ is close to $E_{\diameter} = \Tr[\haH]/d$, a large positive shift is necessary to make $E \approx E_H$.
The variances \eqref{eq:variancestosamplethemeanenergyensemble} and the expected moduli of the expansion coefficients are then relatively uniform.
This results in a large average effective dimension.
In the extreme case that the shift is much larger than the spread of the energy spectrum we recover the Haar measure ensemble for which we already know that the average effective dimension is large (theorem~\ref{theorem:highaveragedeffectivedimensionisgeneric}).
The more uniform the shifted energies are, the higher is the average effective dimension.

This is reflected in the following theorem, which establishes estimates for the average effective dimension in the mean energy ensemble:
\begin{theorem}
  \label{theorem:boundsontheaverageeffectivedimensioninthemeanergyensemble}
  Consider a $d$-dimensional quantum system whose Hamiltonian $\haH$, with eigenvalues $E_k>0$, has non-degenerate energy gaps.
  Let $E_{\diameter} = \Tr[\haH]/d$ and assume that $E$ is such that theorem~\ref{theorem:approcimatesamplingscemeforthemeanenergyensemble} can be applied and that the energies have be shifted such that $E \approx E_H = d/\sum_k \frac{1}{E_k}$.
  Then the average purity of the time averaged state $\expect{\Tr[\omega^2]}_{\psi_0 \in M_E}$, where the average is computed over pure initial states $\psi_0$ drawn from the mean energy ensemble at energy $E$, is to good approximation given by 
  \begin{equation}
    \expect{\Tr[\omega^2]}_{\psi_0 \in M_E} \approx \frac{2\,E^2}{d^2} \sum_k \frac{1}{E_k^2} \lessapprox \frac{2}{d} \frac{E_{\diameter}^2}{E_0^2}
  \end{equation}
  and the average effective dimension $\expect{d^{\mathrm{eff}}(\omega)}_{\psi_0 \in M_E}$ is to good approximation lower bounded by
  \begin{equation}
    \expect{d^{\mathrm{eff}}(\omega)}_{\psi_0 \in M_E} \gtrapprox \frac{d^2}{2\,E^2} \frac{1}{\sum_k \frac{1}{E_k^2}} \gtrapprox \frac{d}{2} \frac{E_0^2}{E_{\diameter}^2}.
  \end{equation}
  If the spectrum of $\haH$ fulfills some additional constraints both statements become exact in the thermodynamic limit $d \to \infty$ (for details see \cite{Mueller09}).
\end{theorem}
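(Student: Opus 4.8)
The plan is to use the approximate sampling scheme of theorem~\ref{theorem:approcimatesamplingscemeforthemeanenergyensemble} in order to replace the geometrically awkward manifold $M_E$ by a product of independent Gaussians. Writing $\ket{\psi_0} = \sum_k c_k \ket{E_k}$, the prescription draws unnormalized coefficients $\tilde{c}_k = x_k + \iu\,y_k$, with $x_k$ and $y_k$ independent and $N(0,\sigma_k^2)$-distributed for $\sigma_k^2 = E/(d\,E_k)$, and then normalizes via $c_k = \tilde{c}_k/\sqrt{N}$ where $N = \sum_k |\tilde{c}_k|^2$. Because the Hamiltonian has non-degenerate energy gaps, I may invoke the identity established earlier, $\Tr[\omega^2] = \sum_k |c_k|^4$, so that the random variable to be averaged is simply $\sum_k |c_k|^4 = N^{-2} \sum_k |\tilde{c}_k|^4$.

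The next step is to evaluate the relevant Gaussian moments, all of which are elementary. For the chosen coefficients one finds $\expect{|\tilde{c}_k|^2} = 2\,\sigma_k^2$ and $\expect{|\tilde{c}_k|^4} = 8\,\sigma_k^4$. Summing the former gives the expected squared norm $\expect{N} = \sum_k 2\,\sigma_k^2 = (2E/d)\sum_k 1/E_k = 2E/E_H \approx 2$, where I have used the definition of the harmonic mean together with the standing assumption $E \approx E_H$. Summing the latter gives $\expect{\sum_k |\tilde{c}_k|^4} = 8\sum_k \sigma_k^4 = (8E^2/d^2)\sum_k 1/E_k^2$.

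The delicate point — the step I expect to carry the real weight of the theorem — is to replace the average of a ratio by the ratio of averages,
\begin{equation}
  \expect{\sum_k |c_k|^4} = \expect{\frac{\sum_k |\tilde{c}_k|^4}{N^2}} \approx \frac{\expect{\sum_k |\tilde{c}_k|^4}}{\expect{N}^2} .
\end{equation}
This is legitimate in the large-$d$ limit precisely because $N$ is a sum of $d$ independent, comparably sized contributions and hence concentrates sharply about its mean; rigorous control of the resulting error is exactly what the concentration-of-measure estimates of \cite{Mueller09} supply, and it is here that the qualifier ``to good approximation'' and the claimed exactness as $d \to \infty$ enter. Granting this, the two moments combine to give $\expect{\Tr[\omega^2]} \approx 2\,E^2 d^{-2}\sum_k 1/E_k^2$, which is the first asserted expression.

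Finally I would convert this into the stated closed-form bounds using two elementary facts about the spectrum. Since $E_k \geq E_0$ for every $k$, we have $\sum_k 1/E_k^2 \leq d/E_0^2$, and since the harmonic mean never exceeds the arithmetic mean, $E \approx E_H \leq E_\diameter$; together these yield $\expect{\Tr[\omega^2]} \lessapprox 2\,E_\diameter^2/(d\,E_0^2)$. For the effective dimension I would invoke convexity of $x \mapsto 1/x$ (Jensen's inequality), $\expect{d^{\mathrm{eff}}(\omega)} = \expect{1/\Tr[\omega^2]} \geq 1/\expect{\Tr[\omega^2]}$, which immediately produces the lower bound $\gtrapprox d^2/(2\,E^2)\cdot(\sum_k 1/E_k^2)^{-1}$, and the same two estimates on the spectrum then turn this into $\gtrapprox (d/2)\,E_0^2/E_\diameter^2$, completing the argument.
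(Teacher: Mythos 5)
Your proposal is correct and follows essentially the same route as the paper: reduce $\Tr[\omega^2]$ to $\sum_k|c_k|^4$ via non-degenerate gaps, evaluate the fourth Gaussian moments of the sampling scheme, pass to $d^{\mathrm{eff}}$ by Jensen, and finish with $E_H\leq E_\diameter$ and $E_k\geq E_0$. The only (harmless) difference is that you track the normalization $N$ explicitly and argue it concentrates, whereas the paper chooses the variance convention so that $\expect{\sum_k|c_k|^2}=E/E_H\approx 1$ and simply computes $\expect{|c_k|^4}\approx 2\sigma_k^4$ for the unnormalized coefficients; both conventions give the same $\frac{2E^2}{d^2}\sum_k E_k^{-2}$.
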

\begin{proof}
  The bound on the average effective dimension in the mean energy ensemble follows from the estimate of the average purity of the time averaged state as
  \begin{equation}
    \label{eq:expectationvalueofdeffinverioninequality}
    \expect{d^{\mathrm{eff}}(\omega)}_{\psi_0 \in M_E} = \expect{\frac{1}{\Tr[\omega^2]}}_{\psi_0 \in M_E} \geq \frac{1}{\expect{\Tr[\omega^2]}_{\psi_0 \in M_E}} .
  \end{equation}
  As the Hamiltonian has non-degenerate energy gaps the average purity of the time averaged state is
  \begin{equation}
    \expect{\Tr[\omega^2]}_{\psi_0 \in M_E} = \sum_k \expect{|c_k|^4}_{\psi_0 \in M_E} .
  \end{equation}
  According to theorem~\ref{theorem:approcimatesamplingscemeforthemeanenergyensemble} we can sample from the mean energy ensemble to good approximation by choosing the real and imaginary parts $a_k$ and $b_k$ of the expansion coefficients $c_k = a_k + \iu\,b_k$ from normal distributions with variances $\sigma_k = \sqrt{\frac{E}{d\,E_k}}$. Therefore
  \begin{align}
    \expect{|c_k|^4}_{\psi_0 \in M_E} &\approx \left(\frac{1}{2\,\sqrt{2\,\pi\,\sigma_k^2}}\right)^2 \int_{-\infty}^\infty \int_{-\infty}^\infty (a_k^2+b_k^2)^2 \ee^{-\frac{a_k^2}{2\,\sigma_k^2}} \ee^{-\frac{a_k^2}{2\,\sigma_k^2}} da_k\,db_k \\
    &= 2\,\sigma_k^4 = \frac{2\,E^2}{d^2\,E_k^2} ,
  \end{align}
  and we find that\footnote{\label{footnote:unitaryinvariantensembleisaspecialcaseofthemeanenergyensembel}Note that in the limit $d\to\infty$, where the sampling procedure becomes exact, and if $E$ and all the $E_k$ are identical we recover the first part of theorem~\ref{theorem:highaveragedeffectivedimensionisgeneric} for $d_R = d$. The unitary invariant ensemble for the full Hilbert space is a special case of the mean energy ensemble}  
  \begin{equation}
    \expect{\Tr[\omega^2]}_{\psi_0 \in M_E} \approx \frac{2\,E^2}{d^2} \sum_k \frac{1}{E_k^2} .
  \end{equation}
  To prove the second inequalities we use the fact that the harmonic mean is upper bounded by the arithmetic mean, which follows from the generalized means inequality \cite{sd268119982}:
  \begin{equation}
    E \approx E_H = \frac{d}{\sum_k \frac{1}{E_k}} \leq \frac{1}{d} \sum_k E_k = E_{\diameter}
  \end{equation}
  Using theorem~\ref{theorem:boundsontheaverageeffectivedimensioninthemeanergyensemble} and that $E \approx E_H $ we get
  \begin{equation}
    \expect{\Tr[\omega^2]}_{\psi_0 \in M_E} \approx \frac{2\,E_H^2}{d^2} \sum_k \frac{1}{E_k^2} \leq \frac{2\,E_{\diameter}^2}{d^2} \sum_k \frac{1}{E_k^2} \leq \frac{2}{d} \frac{E_{\diameter}^2}{E_0^2} .
  \end{equation}
\end{proof}

Concluding we can say that as long as $E$ is comparatively high, such that after shifting the energy levels $E_0$ is not too many orders of magnitude lower than $E_{\diameter}$ the average effective dimension will be high.\footnote{Note that \emph{high} will usually mean that $d^{\mathrm{eff}(\omega)}$ is much larger than the dimension $d_S$ of some small subsystem (see section~\ref{sec:subsystemequilibration}, \ref{sec:speedoffluctuationsaroudequilibrium} and \ref{sec:equilibrationandeinselection}). Keeping in mind that $d$ grows exponentially with the number of constituents of the system $2\,E_{\diameter}^2/(d\,E_0^2)$ will be large compared to $d_S$ even if $E_0$ is several orders of magnitude smaller than $E_{\diameter}$.}
The closer $E$ is to the ground state energy, the smaller is the average effective dimension.
This is not surprising.
Lowering the energy we expect to observe a transition from thermodynamic to quantum behavior.
This is precisely what happens, for high $E$ we get a high effective dimension, which, as we will see later, causes thermodynamic behavior, while for lower and lower $E$ the effective dimension will decrease making quantum effects observable.

\section{Equilibration}
\label{sec:equilibration}
One of the most obvious features of thermodynamic systems is the tendency to evolve towards equilibrium.
It is therefore not surprising that the oldest and best understood part of Thermodynamics and Statistical Mechanics is concerned with systems in thermal equilibrium.
The tendency to equilibrate is postulated in the Second Law of Thermodynamics.
Starting from this postulate one can use ensemble theory or the condition of detailed balance to derive equilibrium properties of physical systems.
How, and under which conditions, the microscopic, time reversal invariant dynamics of such systems leads to equilibration and thermalization remains unexplained.

In a time reversal invariant theory equilibration in the usual sense is impossible.\footnote{At least in finite dimensional systems \cite{PhysRev.107.33}.}
We therefore use an extended notion of equilibration and say that a system is in equilibrium when its density matrix stays close to some state, for almost all times and say that it evolves towards equilibrium if it approaches such a state, and then stays close to it, when started in a state far from equilibrium.
Likewise we will say that an observable gives the impression of equilibration when its measurement statistics is compatible with the assumption of an equilibrated system.
This is the case if its expectation value and higher moments are nearly stationary for almost all times.

\subsection{Equilibration of expectation values}
\label{sec:equilibrationofexpectationvalues}
Under which conditions observables can create the impression of equilibration was recently investigated by Peter~Reimann in \cite{Reimann08}.
The main result of this paper is the following, very useful theorem which we present here in the form given in \cite{0907.1267v1}:
\begin{theorem}
  {\bf \cite{Reimann08,0907.1267v1}}
  \label{theorem:expectationvaluesimmilartotimeaverage}
  Let $A$ be an observable and let $\rho_t$ evolve under a Hamiltonian with non-degenerate energy gaps, then
  \begin{equation}
    \expect{(\Tr[A\,\rho_t] - \Tr[A\,\omega])^2}_t \leq \frac{\|A\|_\infty^2}{d^{\mathrm{eff}}(\omega)} ,
  \end{equation}
  where $\omega = \expect{\rho_t}_t$.
\end{theorem}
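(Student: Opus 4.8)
The plan is to work in the energy eigenbasis $\{\ket{E_k}\}$ and use the non-degenerate energy gaps assumption to perform the time average explicitly. Writing $\rho_0=\sum_{kl}\rho_{kl}\ketbra{E_k}{E_l}$ with $\rho_{kl}=\bra{E_k}\rho_0\ket{E_l}$, the evolved state is $\rho_t=\sum_{kl}\rho_{kl}\,\ee^{-\iu\,(E_k-E_l)\,t}\ketbra{E_k}{E_l}$ and its time average is the dephased state $\omega$ with $\omega_{kl}=\rho_{kk}\,\delta_{kl}$. First I would expand the deviation as $\Tr[A\,\rho_t]-\Tr[A\,\omega]=\sum_{k\neq l}A_{kl}\,\rho_{lk}\,\ee^{\iu\,(E_k-E_l)\,t}$, where $A_{kl}=\bra{E_k}A\ket{E_l}$, square it, and average over time. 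The oscillating phases average to a factor that is nonzero precisely when $E_k-E_l=E_m-E_n$; since the non-degenerate gaps condition forces $k=m,\,l=n$ once $k\neq l$ and $m\neq n$, the quadruple sum collapses to

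\begin{equation}
\expect{(\Tr[A\,\rho_t]-\Tr[A\,\omega])^2}_t=\sum_{k\neq l}|\rho_{kl}|^2\,|A_{kl}|^2 .
\end{equation}

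The theorem then reduces to the purely algebraic estimate $\sum_{k\neq l}|\rho_{kl}|^2\,|A_{kl}|^2\leq\|A\|_\infty^2\,\Tr[\omega^2]$. The naive step $|A_{kl}|\leq\|A\|_\infty$ is far too lossy, because $\sum_{k\neq l}|\rho_{kl}|^2$ is of order one rather than of order $\Tr[\omega^2]$; the coherences must instead be controlled through the positivity of $\rho_0$. The key input is that positive semidefiniteness makes every principal $2\times2$ minor nonnegative, i.e. $|\rho_{kl}|^2\leq\rho_{kk}\,\rho_{ll}$. Substituting this and then restoring the (nonnegative) diagonal terms gives $\sum_{k\neq l}|\rho_{kl}|^2\,|A_{kl}|^2\leq\sum_{kl}\rho_{kk}\,\rho_{ll}\,|A_{kl}|^2=\Tr[\omega\,A\,\omega\,A]$, the last equality being immediate since $\omega$ is diagonal in the energy basis.

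It then remains to show $\Tr[\omega\,A\,\omega\,A]\leq\|A\|_\infty^2\,\Tr[\omega^2]$, which I would obtain from two elementary operator inequalities. Writing both sides in the eigenbasis as $\Tr[\omega\,A\,\omega\,A]=\sum_{kl}\rho_{kk}\,\rho_{ll}\,|A_{kl}|^2$ and $\Tr[\omega^2\,A^2]=\sum_{kl}\rho_{kk}^2\,|A_{kl}|^2$, the symmetry $|A_{kl}|=|A_{lk}|$ together with $\rho_{kk}\,\rho_{ll}\leq\tfrac12(\rho_{kk}^2+\rho_{ll}^2)$ yields $\Tr[\omega\,A\,\omega\,A]\leq\Tr[\omega^2\,A^2]$; and since $A^2\leq\|A\|_\infty^2\,\mathds{1}$ while $\omega^2\geq0$, one has $\Tr[\omega^2\,A^2]\leq\|A\|_\infty^2\,\Tr[\omega^2]$. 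Chaining these bounds and recalling $d^{\mathrm{eff}}(\omega)=1/\Tr[\omega^2]$ gives the claim. I expect the main obstacle to be exactly the middle step $\sum_{k\neq l}|\rho_{kl}|^2\,|A_{kl}|^2\leq\Tr[\omega\,A\,\omega\,A]$: the whole strength of the bound, namely the appearance of $\Tr[\omega^2]$ rather than a constant of order one, hinges on exploiting the positivity constraint on the coherences instead of a crude norm bound on $A_{kl}$.
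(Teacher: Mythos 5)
Your argument is correct and complete: the collapse of the quadruple sum to $\sum_{k\neq l}|\rho_{kl}|^2|A_{kl}|^2$ under the non-degenerate-gaps condition, the positivity bound $|\rho_{kl}|^2\leq\rho_{kk}\,\rho_{ll}$, and the chain $\Tr[\omega A\omega A]\leq\Tr[\omega^2 A^2]\leq\|A\|_\infty^2\Tr[\omega^2]$ all check out. The paper itself does not prove this theorem --- it imports it from \cite{Reimann08,0907.1267v1} --- and your proof is essentially the argument given in those references, correctly identifying the positivity of the coherences (rather than a crude bound on $A_{kl}$) as the step that produces the factor $\Tr[\omega^2]=1/d^{\mathrm{eff}}(\omega)$.
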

A similar result is derived in \cite{Gemmer09}.

Theorem~\ref{theorem:expectationvaluesimmilartotimeaverage} is a very remarkable result.
Whenever the effective dimension $d^{\mathrm{eff}}(\omega)$ is large, the time average of the square deviation of the expectation value of any observable from its time average will be small.
Therefore, systems which are in a state with a high effective dimension will look like they were in equilibrium most of the time although in reality they evolve unitarily.
Theorem~\ref{theorem:expectationvaluesimmilartotimeaverage} shows that a time reversal invariant theory can create the impression of equilibration.

It shall be stressed that theorem~\ref{theorem:expectationvaluesimmilartotimeaverage} is a statement about the \emph{dynamics} of states with a high effective dimension.
It is crucial to note, that theorem~\ref{theorem:expectationvaluesimmilartotimeaverage} it is a much stronger statement than the usual typicality arguments often made in Statistical Mechanics. 
Such arguments state that there is a large set of equilibrium states and that one can therefore \emph{expect} that starting in a non-equilibrium state not in this set, one will sooner or later end up in an equilibrium state.
In contrast, theorem~\ref{theorem:expectationvaluesimmilartotimeaverage} implies that initial states which are out of equilibrium, i.e states for which $\Tr[A\,\rho_0]$ is far from $\Tr[A\,\omega]$ \emph{definitely} will equilibrate whenever $d^{\mathrm{eff}}(\omega)$ is large.
It does however not make an assertion about how long it takes to reach equilibrium.
We will come back to this problem in section~\ref{sec:equilibrationtime}.

\subsection{Subsystem equilibration}
\label{sec:subsystemequilibration}
An even stronger result can be obtained for subsystems of large quantum mechanical systems.
Very recently it has been shown in \cite{Linden09} that the dynamics of almost every large quantum system is such that for almost every pure initial state every small subsystem equilibrates.
The main result of a recent work of Noah~Linden et al. \cite{Linden09} is a rigorous bound on the expectation value of the trace distance of the reduced state of the subsystems from its time average in terms of the effective dimension $d^{\mathrm{eff}}(\omega)$:
\begin{theorem}
  \label{theorem:distancefromtimeaverage}
  {\bf (Theorem 1 in \cite{Linden09})}
  Consider any pure state $\psi_t$ evolving under a Hamiltonian with non-degenerate energy gaps.
  Then the average distance between $\rho^S_t = \Tr_B \psi_t$ and its time average $\omega^S = \expect{\rho^S_t}_t$ is bounded by
  \begin{equation}
    \expect{\tracedistance(\rho^S_t,\omega^S)}_t \leq \frac{1}{2} \sqrt{\frac{d_S}{d^\mathrm{eff}(\omega^B)}} \leq \frac{1}{2} \sqrt{\frac{d_S^2}{d^\mathrm{eff}(\omega)}}
  \end{equation}
\end{theorem}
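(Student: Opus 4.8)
The plan is to bound the time-averaged trace distance by a Hilbert--Schmidt quantity and then evaluate that quantity explicitly using the non-degenerate gaps assumption. First I would use that $\rho^S_t - \omega^S$ acts on the $d_S$-dimensional space $\hiH_S$, so by the rank inequality $\|X\|_1 \leq \sqrt{d_S}\,\|X\|_2$ together with the concavity of the square root (Jensen for the normalized time average) one gets
\[
\expect{\tracedistance(\rho^S_t,\omega^S)}_t \leq \tfrac{1}{2}\sqrt{d_S}\,\expect{\|\rho^S_t-\omega^S\|_2}_t \leq \tfrac{1}{2}\sqrt{d_S\,\expect{\Tr[(\rho^S_t-\omega^S)^2]}_t}.
\]
Expanding the square and using $\expect{\rho^S_t}_t = \omega^S$ collapses the cross term, leaving $\expect{\Tr[(\rho^S_t-\omega^S)^2]}_t = \expect{\Tr[(\rho^S_t)^2]}_t - \Tr[(\omega^S)^2]$, so everything reduces to estimating the time-averaged purity of the reduced state.

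Next I would expand the initial state as $\ket{\psi_0} = \sum_k c_k \ket{E_k}$, write $\rho^S_{kl} = \Tr_B[\ketbra{E_k}{E_l}]$, and carry out the time average of $\Tr[(\rho^S_t)^2]$. Under non-degenerate energy gaps only the index patterns $(k=l,\,m=n)$ and $(k=n,\,l=m)$ survive the averaging of the phases; the first reproduces $\Tr[(\omega^S)^2]$ exactly, so
\[
\expect{\Tr[(\rho^S_t)^2]}_t - \Tr[(\omega^S)^2] = \sum_{kl}|c_k|^2|c_l|^2\,\Tr[\rho^S_{kl}\,\rho^S_{lk}] - \sum_k |c_k|^4\,\Tr[(\rho^S_{kk})^2].
\]
Since $\rho^S_{lk} = (\rho^S_{kl})^\dagger$, the subtracted diagonal term equals $\sum_k|c_k|^4\|\rho^S_{kk}\|_2^2 \geq 0$ and may be dropped, giving an upper bound by the double sum alone.

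The key step -- and the part I expect to be the main obstacle -- is to re-express this double sum in terms of the bath. Writing $\ket{E_k} = \sum_{ab} M^k_{ab}\ket{a}_S\ket{b}_B$ one has $\rho^S_{kl} = M^k (M^l)^\dagger$, and a short cyclic-trace manipulation shows the identity $\Tr[\rho^S_{kl}\,\rho^S_{lk}] = \Tr[(M^k)^\dagger M^k\,(M^l)^\dagger M^l] = \Tr[\rho^B_{kk}\,\rho^B_{ll}]$, i.e. an off-diagonal coherence on $S$ equals a product of diagonal bath blocks. This converts the sum into $\Tr[(\sum_k|c_k|^2\rho^B_{kk})^2] = \Tr[(\omega^B)^2] = 1/d^{\mathrm{eff}}(\omega^B)$, exactly the bound needed for the first inequality. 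Getting this swap identity right (tracking the transposes and conjugations of the Schmidt-coefficient matrices, and checking the relevant traces are real) is the delicate bookkeeping of the argument.

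Finally, for the second inequality I would establish $\Tr[(\omega^B)^2] \leq d_S\,\Tr[\omega^2]$. Writing $\Tr[(\omega^B)^2] = \Tr[(\mathds{1}_S\otimes\omega^B)\,\omega]$ and applying Cauchy--Schwarz in the Hilbert--Schmidt inner product with $\|\mathds{1}_S\otimes\omega^B\|_2 = \sqrt{d_S\,\Tr[(\omega^B)^2]}$ yields $\Tr[(\omega^B)^2]\leq d_S\,\Tr[\omega^2]$, hence $d^{\mathrm{eff}}(\omega^B)\geq d^{\mathrm{eff}}(\omega)/d_S$ and therefore $d_S/d^{\mathrm{eff}}(\omega^B)\leq d_S^2/d^{\mathrm{eff}}(\omega)$. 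Combining the three estimates gives both claimed bounds.
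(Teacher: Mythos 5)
Your proof is correct and follows essentially the same route as the original argument of Linden, Popescu, Short and Winter, which the paper cites without reproducing: the reduction $\expect{\tracedistance(\rho^S_t,\omega^S)}_t \leq \frac{1}{2}\sqrt{d_S\,\expect{\Tr[(\rho^S_t-\omega^S)^2]}_t}$, the evaluation of the time average via the non-degenerate-gaps condition, the swap identity $\Tr[\rho^S_{kl}\rho^S_{lk}]=\Tr[\rho^B_{kk}\rho^B_{ll}]$ (equation (A5) of that reference, also invoked in the paper's proof of the purity-equilibration theorem), and the Cauchy--Schwarz step giving $d^{\mathrm{eff}}(\omega^B)\geq d^{\mathrm{eff}}(\omega)/d_S$. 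All steps check out, including the sign of the dropped diagonal term and the bookkeeping of the transposes in the Schmidt-matrix identity.
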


Again it is of utter importance to understand that theorem~\ref{theorem:distancefromtimeaverage} is a statement about the \emph{dynamics} of states with a high effective dimension and therefore much stronger than a typicality argument. 
It implies that initial states which are out of equilibrium, i.e states for which $\rho^S_0$ is far from $\omega^S$ \emph{definitely} will equilibrate towards $\omega^S$ whenever $d^{\mathrm{eff}}(\omega)$ is large.
Again it is difficult to make assertions about the time scales on which equilibration happens (see section~\ref{sec:equilibrationtime}).

\begin{figure}[tb]
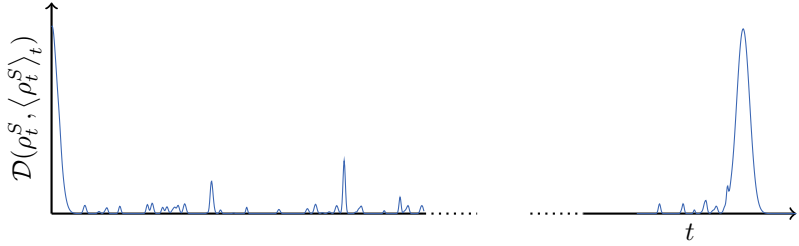

  \centering
  \inputTikZ{distanceevolution}
  \caption{Plot of how the time evolution of the trace distance can be imagined.
    Starting in a state far from equilibrium the subsystem will evolve towards states close to the equilibrium state (see section~\ref{sec:equilibrationtime} for more information about the time scales on which equilibration happens.). It will then stay close to the equilibrium state for almost all times. Occasionally fluctuations will drive it out of equilibrium but these events are extremely rare. After an extremely long time the system recurs to its initial state but the time scale on which this happens is enormously large for macroscopic systems.}
  \label{fig:timeevoutionofthetracedistance}
\end{figure}

Of course this theorem only makes sense as long as the Hilbert spaces involved are finite dimensional.
Only then can their dimension serve as a sensible measure for smallness of the subsystem.
However, in \cite{Cramer08,Cramer09,Devi09} it has been demonstrated that small subsystems of quantum systems with infinite dimensional Hilbert spaces also tend to equilibrate.
The example considered in \cite{Cramer08,Cramer09} is a bosonic chain with quadratic coupling and in \cite{Devi09} a system consisting of oscillators coupled with a harmonic interaction Hamiltonian is investigated.
In both works the measure of smallness of the subsystem is the number of units that constitute the subsystem.
It is shown that all small subsystems equilibrate for squeezed pure initial product states while the whole system undergoes a unitary time evolution.
Further numerical studies that confirm the analytical results presented above can be found for example in \cite{Barthel08,Gemmer09}.

\subsection{Equilibration of the purity}
\label{sec:equilibrationofthepurity}
To further illustrate the phenomenon of subsystem equilibration we look at the purity of the subsystem state $p(\rho^S_t) = Tr[(\rho^S_t)^2]$.
A necessary, though not sufficient, condition for equilibration is that the time average of the purity $\expect{p(\rho^S_t)}_t$ and the purity of the time averaged state $p(\omega^S)$ are almost identical
\begin{equation}
  \expect{p(\rho^S_t)}_t \approx p(\omega^S) .
\end{equation}
Their distance can be bounded as follows:
\begin{theorem}
  \label{theorem:purityequilibrates}
  In a system evolving under a Hamiltonian with non-degenerate energy gaps the difference of the time average of the purity $\expect{p(\rho^S_t)}_t$ and the purity of the time averaged state $p(\omega^S)$ is bounded by
  \begin{equation}
    \left| \expect{p(\rho^S_t)}_t - \Tr_S[\omega_S^2] \right| \leq \Tr_B[{\omega^B}^2] + 2\,\Tr[\omega^2] \leq \frac{d_s+2}{d^{\mathrm{eff}}(\omega)} .
  \end{equation}
\end{theorem}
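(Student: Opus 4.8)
The plan is to compute $\expect{p(\rho^S_t)}_t$ exactly and then reduce the whole estimate to a single, manifestly non-negative, off-diagonal sum. Writing $\rho_{kl}=\bra{E_k}\rho_0\ket{E_l}$ and expanding the reduced state directly in the energy eigenbasis, $\rho^S_t=\Tr_B\rho_t=\sum_{kl}\rho_{kl}\,\ee^{-\iu(E_k-E_l)t}\,\Tr_B\ketbra{E_k}{E_l}$, I would square it, take the trace over $S$, and time-average. Under the non-degenerate-gaps hypothesis the phase $E_k-E_l+E_m-E_n$ averages to zero unless $(k=l\wedge m=n)$ or $(k=n\wedge l=m)$, so only two index patterns survive. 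The first pattern assembles exactly $\Tr_S[(\omega^S)^2]=p(\omega^S)$, the second gives $\sum_{kl}|\rho_{kl}|^2\,\|\Tr_B\ketbra{E_k}{E_l}\|_2^2$, and the overlap $k=l=m=n$, which is counted twice, cancels the diagonal of the second. This yields the clean identity
\begin{equation}
  \expect{p(\rho^S_t)}_t - p(\omega^S) = \sum_{k\neq l} |\rho_{kl}|^2\,\big\|\Tr_B\ketbra{E_k}{E_l}\big\|_2^2 \;\geq\; 0 ,
\end{equation}
so the absolute value in the statement is automatic and only an upper bound on a positive quantity remains.

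The heart of the argument --- and the step I expect to be the main obstacle --- is to control the Hilbert--Schmidt norms of the off-diagonal blocks. The key fact I would establish is the exact identity
\begin{equation}
  \big\|\Tr_B\ketbra{E_k}{E_l}\big\|_2^2 = \Tr_B\big[(\Tr_S\ketbra{E_k}{E_k})\,(\Tr_S\ketbra{E_l}{E_l})\big] =: C_{kl} \geq 0 .
\end{equation}
This is proved by inserting a bath basis $\{\ket{\beta}\}$, setting $\ket{u_k^\beta}=(\mathds{1}_S\otimes\bra{\beta})\ket{E_k}$, and checking that both sides equal $\sum_{\beta\beta'}\braket{u_l^\beta}{u_l^{\beta'}}\,\braket{u_k^{\beta'}}{u_k^\beta}$. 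For $k=l$ this is just the familiar equality of the two marginal purities of the pure vector $\ket{E_k}$; the nontrivial point is that it persists for the off-diagonal blocks, expressing $\|\Tr_B\ketbra{E_k}{E_l}\|_2^2$ through the \emph{bath} marginals of the two energy eigenstates. With this, the difference becomes $\sum_{k\neq l}|\rho_{kl}|^2\,C_{kl}$.

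Finally I would close the estimate using positivity of the initial state. Since every principal $2\times2$ minor of $\rho_0$ is positive semidefinite, $|\rho_{kl}|^2\leq\rho_{kk}\rho_{ll}$; and because $\omega^B=\sum_k\rho_{kk}\,\Tr_S\ketbra{E_k}{E_k}$ we have $\Tr_B[(\omega^B)^2]=\sum_{kl}\rho_{kk}\rho_{ll}\,C_{kl}$. Dropping the (non-negative) diagonal terms and using $C_{kl}\geq0$ then gives
\begin{equation}
  \sum_{k\neq l} |\rho_{kl}|^2\,C_{kl} \;\leq\; \sum_{kl}\rho_{kk}\,\rho_{ll}\,C_{kl} = \Tr_B[(\omega^B)^2] \;\leq\; d_S\,\Tr[\omega^2] = \frac{d_S}{d^{\mathrm{eff}}(\omega)} ,
\end{equation}
where the last inequality is the standard partial-trace contraction $\Tr_B[(\omega^B)^2]=\|\Tr_S\omega\|_2^2\leq d_S\,\|\omega\|_2^2$ already implicit in theorem~\ref{theorem:distancefromtimeaverage}. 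In particular the difference is bounded by $\Tr_B[(\omega^B)^2]\leq\Tr_B[(\omega^B)^2]+2\,\Tr[\omega^2]\leq (d_S+2)\,\Tr[\omega^2]=(d_S+2)/d^{\mathrm{eff}}(\omega)$, which is the claim (this route is in fact slightly sharper, so the additive $2\,\Tr[\omega^2]$ is slack here). The only genuinely delicate bookkeeping is the index-coincidence analysis under the non-degenerate-gaps assumption and the verification of the norm identity; everything after that is positivity of $\rho_0$ and a one-line contraction bound for the partial trace.
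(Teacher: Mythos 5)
Your proof is correct, and it rests on the same two pillars as the paper's own argument: the expansion of $p(\rho^S_t)$ in the energy eigenbasis, with the non-degenerate-gaps condition selecting exactly the index patterns $(k=l\wedge m=n)$ and $(k=n\wedge l=m)$, and the marginal-swap identity $\|\Tr_B\ketbra{E_k}{E_l}\|_2^2=\Tr_B[(\Tr_S\ketbra{E_k}{E_k})(\Tr_S\ketbra{E_l}{E_l})]$, which is precisely the equation (A5) of \cite{Linden09} that the paper invokes. Where you genuinely depart is in the bookkeeping and in the strength of the conclusion. First, you carry a general initial density matrix $\rho_0$ through the computation and close the estimate with the positivity inequality $|\rho_{kl}|^2\le\rho_{kk}\,\rho_{ll}$, whereas the paper expands a pure initial state in coefficients $c_k$ (for which this inequality is an equality); your version therefore covers mixed initial states at no extra cost. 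Second, your inclusion--exclusion shows that the $k=l=m=n$ overlap exactly cancels the diagonal of the second surviving sum, so that $\expect{p(\rho^S_t)}_t-p(\omega^S)$ equals the manifestly non-negative quantity $\sum_{k\ne l}|\rho_{kl}|^2\,\|\Tr_B\ketbra{E_k}{E_l}\|_2^2$, which you then bound by $\Tr_B[(\omega^B)^2]\le d_S\,\Tr[\omega^2]$. The paper instead groups the surviving terms so that two residual diagonal sums of the form $\sum_k|c_k|^4\,\Tr_S[(\Tr_B\ketbra{E_k}{E_k})^2]$ remain, each of which it bounds crudely by $\Tr[\omega^2]$; that is the origin of the additive $2\,\Tr[\omega^2]$ in the statement. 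Your cancellation shows these terms in fact drop out, so your bound $d_S/d^{\mathrm{eff}}(\omega)$ is strictly sharper than the stated $(d_S+2)/d^{\mathrm{eff}}(\omega)$ and, being smaller, of course implies it. All the individual steps you flag as delicate (the dephasing combinatorics, the norm identity verified via the vectors $\ket{u_k^\beta}$, and the partial-trace contraction) check out.
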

\begin{proof}
  Expanding the initial state in the energy eigenbasis we can write the purity as
  \begin{equation}
    p(\rho^S_t) = \sum_{klmn} c_k\,c_l^*\,c_m\,c_n^*\,\ee^{-\iu(E_k-E_l+E_m-E_n)t} \Tr_S[\Tr_B\ketbra{E_k}{E_l}\,\Tr_B\ketbra{E_m}{E_n}] .
  \end{equation}
  The sum over $k,l,m,n$ can be split up into four parts as follows:
  \begin{equation}
    \sum_{klmn} \dots = \sum_{k \neq l,m \neq n} \dots + \sum_{k \neq l,m = n} \dots + \sum_{k = l,m \neq n} \dots + \sum_{k = l,m = n}
  \end{equation}
  The second and third sum contain only oscillating terms for which the time average $\expect{\dots}_t$ vanishes.
  The time average of the first sum is positive and equal to
  \begin{equation}
    \label{eq:equilibrationofthepurity1}
    \begin{split}
      \sum_{k,m} |c_k|^2\,|c_m|^2\,\Tr_B[\Tr_S\ketbra{E_k}{E_k}\,\Tr_S\ketbra{E_m}{E_m}] \\
      = \Tr_B[{\omega^B}^2] - \sum_k |c_k|^4 \Tr_S[(\Tr_B \ketbra{E_k}{E_k})^2] ,
    \end{split}
  \end{equation}
  and the fourth sum contains only terms that are time independent and is thus equal to
  \begin{equation}
    \label{eq:equilibrationofthepurity2}
    \begin{split}
    \sum_{k,m} |c_k|^2\,|c_m|^2\,\Tr_S[\Tr_B\ketbra{E_k}{E_k}\,\Tr_B\ketbra{E_m}{E_m}] \\
    = \underbrace{\Tr_S[{\omega^S}^2]}_{p(\omega^S)} - \sum_k |c_k|^4 \Tr_B[(\Tr_S \ketbra{E_k}{E_k})^2] .      
    \end{split}
  \end{equation}
  In the derivation of both \eqref{eq:equilibrationofthepurity1} and \eqref{eq:equilibrationofthepurity2} we have used equation (A5) from appendix A in \cite{Linden09}.
  The sums in the right hand side of \eqref{eq:equilibrationofthepurity1} and \eqref{eq:equilibrationofthepurity2} are both bounded by $\Tr[\omega^2]$ so that we find:
  \begin{equation}
    \left| \expect{p(\rho^S_t)}_t - p(\omega^S) \right| \leq  \Tr_B[{\omega^B}^2] + 2\,\Tr[\omega^2]
  \end{equation}
  Using the fact that
  \begin{equation}
    \Tr_B[{\omega^B}^2] \leq \frac{d_s}{d^{\mathrm{eff}}(\omega)}
  \end{equation}
  gives the second bound.
\end{proof}

As we would have already expected from theorem~\ref{theorem:expectationvaluesimmilartotimeaverage} and \ref{theorem:distancefromtimeaverage} we find that the purity is close to a typical value, namely $\Tr_S[{\omega^S}^2]$, most of the time whenever $d^{\mathrm{eff}}$ is large.

\section{Ergodicity}
\label{sec:ergodicity}
The question whether physical systems are (quasi) ergodic plays a central role in all Gibbs like attempts to justify the methods of Statistical Mechanics from Newtonian Mechanics \cite{RevModPhys.27.289}.
Ergodicity is either used directly to identify time and ensemble averages, or as a way to justify the choice of \emph{a priory probabilities} and the microcanonical ensemble.
The question whether all, and if this is not true than which thermodynamic systems are quasi ergodic was investigated by many authors. 
In the classical setup the problem was reduced to the problem of showing \emph{metrical transitivity}, and quasi ergodicity is proven for so called \emph{Kanonische Normalsysteme}.
Irrespective of these efforts the problem still awaits a full solution, so that quasi ergodicity is rather a hypothesis than anything close to a stable foundation for a physical theory.

The approach towards the foundations of Quantum Mechanics we follow herein does not depend on quasi ergodicity.
Nevertheless, due to its historical importance, ergodicity is a property that deserves investigation in its own right.
It turns out that the approach based on measure concentration techniques can be used to \emph{prove} ergodicity:
\begin{theorem}
  \label{theorem:ergodicityistypical}
  Let $\hiH_R \subseteq \hiH$ be a subspace of dimension $d_R$ corresponding to a microcanonical constraint.
  The probability that the time average $\expect{\Tr[B\,\psi_t]}_t$ of the expectation value of an arbitrary observable $B$ computed for a randomly chosen pure initial state $\psi_0 \in \mathcal{P}_1(\hiH_R)$ differs from its microcanonical expectation value with respect to $\hiH_R$ is exponentially small in the sense that for every $\epsilon > 0$
  \begin{equation}
    \probability\left\{ | \expect{\Tr[B\,\psi_t]}_t - \expect{B}_{\mathrm{mc}}| \geq \epsilon \right\} \leq 2\,\ee^{-\frac{C\,d_R\,\epsilon^2}{\|\$[B]\|_\infty^2}} ,
  \end{equation}
  where $C$ is a constant with $C = (36\,\pi^3)^{-1}$ and
  \begin{equation}
    \$[B] = \sum_k \ketbra{E_k}{E_k} B \ketbra{E_k}{E_k} .
  \end{equation}
\end{theorem}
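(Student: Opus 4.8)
The plan is to reduce the statement to Theorem~\ref{theorem:ensebleaveragesaresuperflousforthemicrocanonicalensemble}, applied not to $B$ itself but to the \emph{dephased} observable $\$[B]$. The starting observation is that, because the Hamiltonian has non-degenerate energy gaps, time averaging acts on the state as the dephasing map. By linearity of the trace the time average passes through onto the state, so that
\begin{equation}
  \expect{\Tr[B\,\psi_t]}_t = \Tr[B\,\expect{\psi_t}_t] = \Tr[B\,\$[\psi_0]] ,
\end{equation}
where $\omega = \$[\psi_0]$ is the dephased initial state.

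The next step is to move the dephasing off the state and onto the observable, using that the pinching map is self-adjoint with respect to the Hilbert-Schmidt inner product. Writing $P_k = \ketbra{E_k}{E_k}$, cyclicity of the trace gives $\Tr[B\,P_k\,\psi_0\,P_k] = \Tr[(P_k\,B\,P_k)\,\psi_0]$ term by term, and summing over $k$ yields
\begin{equation}
  \expect{\Tr[B\,\psi_t]}_t = \Tr[\$[B]\,\psi_0] .
\end{equation}
Thus the time-averaged expectation value of $B$ in the random state $\psi_0$ is nothing but the ordinary expectation value of the operator $\$[B]$ in that same state. Since $B$ is Hermitian and each $P_k\,B\,P_k$ is Hermitian, $\$[B]$ is itself a legitimate observable, so Theorem~\ref{theorem:ensebleaveragesaresuperflousforthemicrocanonicalensemble} applies to it and produces exactly the factor $\|\$[B]\|_\infty^2$ in the exponent.

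The remaining point is to check that dephasing leaves the microcanonical expectation value unchanged, i.e. $\expect{\$[B]}_{\mathrm{mc}} = \expect{B}_{\mathrm{mc}}$. Here one uses that the microcanonical constraint is imposed through a conserved quantity $A$ with $[A,\haH]=0$, so that $\hiH_R$ is a union of energy eigenspaces and $[\Pi_R,P_k]=0$ for every $k$. Combining this commutation with cyclicity, idempotency of the $P_k$, and $\sum_k P_k = \mathds{1}$ gives $\Tr[\$[B]\,\Pi_R] = \Tr[B\,\Pi_R]$, and dividing by $d_R$ yields the desired equality of microcanonical values.

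With both identities established, one simply invokes Theorem~\ref{theorem:ensebleaveragesaresuperflousforthemicrocanonicalensemble} for the observable $\$[B]$ and substitutes the two identities to recover the assertion verbatim, including the stated constant $C = (36\,\pi^3)^{-1}$. There is no genuine analytic obstacle: the entire content is the algebraic reduction, and the only place demanding care is the invariance of the microcanonical value under dephasing, which is precisely where the compatibility of $\hiH_R$ with the energy eigenbasis is needed.
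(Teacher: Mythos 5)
Your proposal is correct and follows essentially the same route as the paper: both arguments hinge on moving the dephasing from the state onto the observable via $\expect{\Tr[B\,\psi_t]}_t = \Tr[\$[B]\,\psi_0]$ and on the invariance of the microcanonical average under dephasing (which uses $[\Pi_R,\haH]=0$), the only difference being that you invoke theorem~\ref{theorem:ensebleaveragesaresuperflousforthemicrocanonicalensemble} as a black box for the observable $\$[B]$ while the paper re-runs the Levy's-lemma argument explicitly with the Lipschitz constant $2\,\|\$[B]\|_\infty$. This packaging is clean and loses nothing.
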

\begin{proof}
  The proof is completely analogous to the proof of theorem~\ref{theorem:ensebleaveragesaresuperflousforthemicrocanonicalensemble} and relies on Levy's lemma.
  For an arbitrary fixed observable $B$ we define the function
  \begin{equation}
    f_B(\psi_0) = \expect{\Tr[B\,\psi_t]}_t = \Tr[B\,\$[\psi_0]].
  \end{equation}
  As $\hiH_R$ corresponds to a microcanonical constraint the projector $\Pi_R$ on $\hiH_R$ commutes with the Hamiltonian $[\Pi_R,\haH]= 0$.
  Thus the expectation value $\expect{f_B(\psi)}_{\psi_0}$ of this function with respect to a randomly chosen pure initial states $\psi_0 \in \mathcal{P}_1(\hiH_R)$ is
  \begin{equation}
    \expect{f_B(\psi_0)}_{\psi_0} = \expect{\Tr[B\,\$[\psi_0]]}_{\psi_0} = \Tr[B\,\$[\expect{\psi_0}_{\psi_0}]] = \expect{B}_{\mathrm{mc}} .
  \end{equation}
  Its Lipschitz constant $\eta$ with respect to the Hilbert space norm is upper bounded by $2 \|\$[B]\|_\infty$, as \cite{Popescu05}:
  \begin{align}
    | &f_B(\psi_1) - f_B(\psi_2) | = |\Tr[B \$[\psi_1-\psi_2]]| = |\Tr[\$[B] (\psi_1-\psi_2)]| \\
    &= \frac{1}{2} |(\bra{\psi_1}+\bra{\psi_2})\,\$[B]\,(\ket{\psi_1}-\ket{\psi_2}) + (\bra{\psi_1}-\bra{\psi_2})\,\$[B]\,(\ket{\psi_1}+\ket{\psi_2})|\\
    &\leq \|\$[B]\|_\infty\,\|\ket{\psi_1} + \ket{\psi_2}\|_2\,\|\ket{\psi_1} - \ket{\psi_2}\|_2 \\
    &\leq 2\,\|\$[B]\|_\infty\,\|\ket{\psi_1} - \ket{\psi_2}\|_2 
  \end{align}
  Applying Levy's lemma (s. appendix~\ref{appendix:levyslemma}) to $f_B(\psi_0)$ gives the desired result.
\end{proof}
Note that for time independent observables $[\haH,B] = 0$, and consequently $\$[B] = B$ so that we recover theorem~\ref{theorem:ensebleaveragesaresuperflousforthemicrocanonicalensemble}.
If an observable is time dependent then $\|\$[B]\|_\infty \leq \|B\|_\infty$ and the bound on the deviation is tighter than the bound of theorem~\ref{theorem:ensebleaveragesaresuperflousforthemicrocanonicalensemble}.

\section{Dynamics of the state of the subsystem}
\label{sec:dynamicsofthestateofthesubsystem}
In the previous sections we have shown under which conditions expectation values and reduced subsystems of large quantum mechanical system equilibrate.
We have however left out a crucial point, namely the timescales on which equilibration happens. 
This section will be concerned with the dynamical properties of the state of the subsystem.

\subsection{Speed of fluctuations around equilibrium}
\label{sec:speedoffluctuationsaroudequilibrium}
Knowing under which conditions the state of the subsystem equilibrates a natural question is: How fast will the fluctuations around the equilibrium state typically be? 
This question was investigated very recently by Noah~Linden et al. \cite{0907.1267v1}.

The first step is to introduce a physically meaningful notion of \emph{speed}.
This is achieved by setting \cite{0907.1267v1}
\begin{equation}
  v(t) = \lim_{\delta t \to 0} \frac{\tracedistance(\rho_t,\rho_{t+\delta t})}{\delta t} = \frac{1}{2} \left\| \frac{d\rho_t}{dt} \right\|_1 ,
\end{equation}
where according to the von~Neumann equation
\begin{equation}
  \frac{d\rho_t}{dt} = \iu\,[\rho_t,\haH] .
\end{equation}
Equivalently one defines the speed of the state of the subsystem as
\begin{equation}
  v_S(t) = \lim_{\delta t \to 0} \frac{\tracedistance(\rho^S_t,\rho^S_{t+\delta t})}{\delta t} = \frac{1}{2} \left\| \frac{d\rho^S_t}{dt} \right\|_1 ,
\end{equation}
with 
\begin{equation}
  \frac{d\rho^S_t}{dt} = \iu\,\Tr_B[\rho_t,\haH] .
\end{equation}
As the choice of the origin of the energy scale does not influence the speed it is convenient to split up the Hamiltonian of the system as follows
\begin{equation}
  \label{eq:generalhamiltonianwithconstanttermabsorbed1}
  \haH = \haH_0 + \haH_S \otimes \mathds{1} + \mathds{1} \otimes \haH_B + \haH_{SB} .
\end{equation}
Thereby $\haH_0$ is taken to be proportional to the identity and $\haH_S$, $\haH_B$ and $\haH_{SB}$ are traceless.\footnote{Note that this decomposition is not unique and the freedom can be used to optimize the quantities appearing in the following theorems.}

Using this it is shown in \cite{0907.1267v1} that:
\begin{theorem}
  \label{theorem:averagespeedisslow}
  {\bf \cite{0907.1267v1}}
  For every pure initial state $\psi_0$ of a composite system evolving under a Hamiltonian of the form of \eqref{eq:generalhamiltonianwithconstanttermabsorbed1} and with non-degenerate energy gaps, it holds that
  \begin{equation}
    \expect{v_S(t)}_t = \frac{1}{2} \left\| \frac{d\rho^S_t}{dt} \right\|_1 \leq \| \haH_S \otimes \mathds{1} + \haH_{SB}\|_\infty \sqrt{\frac{d_S^3}{d^{\mathrm{eff}}(\omega)}} ,
  \end{equation}
  where $\rho^S_t = \Tr_B \psi_t$ and $\omega = \expect{\psi_t}_t$.
\end{theorem}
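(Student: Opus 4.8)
The plan is to bound the time-averaged speed $\expect{v_S(t)}_t=\expect{\frac12\|\tfrac{d\rho^S_t}{dt}\|_1}_t$ by trading the trace norm, which is hard to time-average because the optimal test operator depends on $t$, for the Hilbert-Schmidt norm, which is quadratic and hence amenable to averaging. Since $\rho^S_t=\Tr_B\psi_t$ acts on $\hiH_S$, the operator $\frac{d\rho^S_t}{dt}$ has rank at most $d_S$, so the standard inequality $\|A\|_1\le\sqrt{\rank A}\,\|A\|_2$ gives $\|\frac{d\rho^S_t}{dt}\|_1\le\sqrt{d_S}\,\|\frac{d\rho^S_t}{dt}\|_2$. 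Combining this with Jensen's inequality in the time average, $\expect{X}_t\le\sqrt{\expect{X^2}_t}$, reduces the claim to showing $\expect{\|\frac{d\rho^S_t}{dt}\|_2^2}_t\le 4\,d_S^2\,\|\haH_S\otimes\mathds 1+\haH_{SB}\|_\infty^2/d^{\mathrm{eff}}(\omega)$. First I would simplify $\frac{d\rho^S_t}{dt}=\iu\Tr_B[\rho_t,\haH]$: the term $\haH_0$ drops since it is proportional to the identity, and $\mathds 1\otimes\haH_B$ drops because $\Tr_B[(\mathds 1\otimes\haH_B)\,M]=\Tr_B[M\,(\mathds 1\otimes\haH_B)]$ for every $M$, leaving $\frac{d\rho^S_t}{dt}=\iu\Tr_B[\rho_t,R]$ with $R:=\haH_S\otimes\mathds 1+\haH_{SB}$.

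Next I would expand $\psi_t=\sum_{kl}c_k c_l^*\,\ee^{-\iu(E_k-E_l)t}\ketbra{E_k}{E_l}$ and set $G_{kl}:=\Tr_B\ketbra{E_k}{E_l}$, so that $\frac{d\rho^S_t}{dt}=-\iu\sum_{kl}c_k c_l^*(E_k-E_l)\ee^{-\iu(E_k-E_l)t}G_{kl}$. Writing $\|\frac{d\rho^S_t}{dt}\|_2^2=\Tr_S[(\frac{d\rho^S_t}{dt})^2]$ and invoking the non-degenerate-gaps assumption (so that time averaging equals dephasing and only the frequency-cancelling terms survive), the surviving contributions are those with $k=n$, $l=m$, yielding $\expect{\|\frac{d\rho^S_t}{dt}\|_2^2}_t=\sum_{k\neq l}|c_k|^2|c_l|^2(E_k-E_l)^2\|G_{kl}\|_2^2$.

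The main obstacle is that this sum still carries the energy gaps $(E_k-E_l)^2$, which are not controlled by $\|R\|_\infty$ term by term; a uniform bound on $\|(E_k-E_l)G_{kl}\|_2$ discards the factor $1/d^{\mathrm{eff}}(\omega)$ entirely. The key trick is to trade each gap for a commutator with $R$: using $\Tr_S[M_S G_{kl}]=\bra{E_l}(M_S\otimes\mathds 1)\ket{E_k}$ together with $(E_k-E_l)\bra{E_l}X\ket{E_k}=\bra{E_l}[X,\haH]\ket{E_k}=\bra{E_l}[X,R]\ket{E_k}$, one finds, in an orthonormal basis $\{\ket s\}$ of $\hiH_S$, that $(E_k-E_l)\bra s G_{kl}\ket{s'}=\bra{E_l}K_{s's}\ket{E_k}$ where $K_{s's}:=[\,\ketbra{s'}{s}\otimes\mathds 1,\ R\,]$. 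Expanding $\|G_{kl}\|_2^2$ in this basis, extending the sum to all $k,l$ (the diagonal terms vanish), and collapsing the weights through $\sum_k|c_k|^2\ketbra{E_k}{E_k}=\omega$ rewrites the expression as $\sum_{s,s'}\Tr[\omega K_{s's}\,\omega K_{s's}^\dagger]$.

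Finally I would bound each term by the inequality $\Tr[\omega K\omega K^\dagger]\le\|K\|_\infty^2\,\Tr[\omega^2]$, which follows by diagonalizing $\omega=\sum_j p_j\ketbra jj$, writing $\Tr[\omega K\omega K^\dagger]=\sum_{ij}p_ip_j|K_{ij}|^2$, applying $p_ip_j\le\frac12(p_i^2+p_j^2)$, and using $\sum_j|K_{ij}|^2\le\|K\|_\infty^2$. Since $\|K_{s's}\|_\infty\le 2\|R\|_\infty$, summing over the $d_S^2$ pairs gives $\expect{\|\frac{d\rho^S_t}{dt}\|_2^2}_t\le 4\,d_S^2\,\|R\|_\infty^2\,\Tr[\omega^2]=4\,d_S^2\,\|R\|_\infty^2/d^{\mathrm{eff}}(\omega)$, and feeding this back through the rank and Jensen steps produces the claimed $\expect{v_S(t)}_t\le\|R\|_\infty\sqrt{d_S^3/d^{\mathrm{eff}}(\omega)}$. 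The delicate point throughout is the gap-for-commutator exchange, since it is exactly what converts the bare energy differences into the controllable quantity $\|\haH_S\otimes\mathds 1+\haH_{SB}\|_\infty$ while leaving the $1/d^{\mathrm{eff}}(\omega)$ suppression intact.
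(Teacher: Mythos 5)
Your argument is correct and follows essentially the same route as the paper's (sketched) proof: decompose $\tfrac{d\rho^S_t}{dt}$ over a basis of operators on $\hiH_S$, bound the time-averaged squared coefficients via the estimate $\Tr[\omega K\omega K^\dagger]\le\|K\|_\infty^2\Tr[\omega^2]$ with $K=[\ketbra{s'}{s}\otimes\mathds 1,\haH_S\otimes\mathds 1+\haH_{SB}]$, and convert from the Hilbert--Schmidt to the trace norm. The only difference is that where the paper invokes theorem~\ref{theorem:expectationvaluesimmilartotimeaverage} as a black box applied to the observables $\iu[\haH_S\otimes\mathds 1+\haH_{SB},e_k\otimes\mathds 1]$, you re-derive that bound inline (the energy-eigenbasis expansion, the gap-for-commutator exchange, and the $p_ip_j\le\tfrac12(p_i^2+p_j^2)$ step are exactly its proof), which is a presentational rather than a substantive difference.
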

\begin{proof}
  We will only give a short sketch of the proof herein, for the full proof see the original article \cite{0907.1267v1}.
  The first step is to show that the speed of the subsystem state can be written as
  \begin{equation}
    \label{eq:expectedvelocityintermsofcoefficeints}
    \frac{d\rho^S_t}{dt} = \sum_{k=1}^{d_S^2} c_k(t)\,e_k
  \end{equation}
  such that 
  \begin{equation}
    \label{eq:expectedspeedintermsofcoefficeints}
    \left\langle \left\| \frac{d\rho^S_t}{dt}\right\|_2^2 \right\rangle_t = \sum_k \expect{(c_k(t))^2}_t
  \end{equation}
  where the $d_S^2$ operators $e_k$ form an orthonormal basis for the set of hermitian operators on the Hilbert space of the subsystem such that $\Tr[e_k\,e_l] = \delta_{kl}$ and
  \begin{equation}
    \label{eq:coefficeintsforexpectedvelocity}
    c_k(t) = \Tr\big[\rho_t\,\iu\,[\haH_S + \haH_{SB},e_k\otimes \mathds{1}]\big] .
  \end{equation}
  One then applies theorem~\ref{theorem:expectationvaluesimmilartotimeaverage} from \cite{Reimann08} to $A = \iu\,[\haH_S + \haH_{SB},e_k\otimes \mathds{1}]$ to bound the expectation value of the squared coefficients $\expect{(c_k(t))^2}_t$.
  The final step is to use a standard bound connecting the Hilbert-Schmidt norm used in \eqref{eq:expectedspeedintermsofcoefficeints} with the trace norm. 
\end{proof}

From theorem~\ref{theorem:highaveragedeffectivedimensionisgeneric}, \ref{theorem:highaveragedeffectivedimensionisgenericforproductstates} and the discussion in section~\ref{sec:averageeffectivedimensionofrandompurestates} we know that the effective dimensions $d^{\mathrm{eff}}(\omega)$ typically is very large in realistic situations.
In particular, as all dimensions grow exponentially with the number of constituents of the system it will usually be much larger than any fixed power of $d_S$.
Therefore, the speed of the subsystem will, most of the time, be much smaller than $\| \haH_S \otimes \mathds{1} + \haH_{SB}\|_\infty$, which is the natural unit in which the speed of $\rho^S$ is to be measured \cite{0907.1267v1}.

\subsection{Fluctuations of the purity of the reduced state}
\label{sec:fluctutaionsofthepurityofthereducedstate}
The bound on the speed of the state of the subsystem we have discussed in the last section depends on the interaction Hamiltonian $\haH_{SB}$ and the local Hamiltonian $\haH_{S}$.
We are however primarily interested in understanding how the interaction with the environment leads to equilibration.
It is therefore instructive to consider a quantity that does not feel the local dynamics of the subsystem and instead is a good measure for the correlations with the environment.
Such a quantity is the purity of the subsystem \cite{Kimura07,Kimura09-1}
\begin{equation}
  p^S_t = p(\rho^S_t) = \Tr[{\rho^S_t}^2] .
\end{equation}
We can establish a bound on the time average of the rate of change of the purity
\begin{equation}
  \label{eq:rateofchangeofthepurity}
  \frac{dp^S_t}{dt} = \lim_{\delta t \to 0} \frac{p^S_t - p^S_{t + \delta t}}{\delta t}
\end{equation}
that depends only on the strength of the interaction Hamiltonian:
\begin{theorem}
  \label{theorem:puritychangesslowly}
  For every initial pure state $\psi_0$ of a composite system evolving under a Hamiltonian of the form of \eqref{eq:generalhamiltonianwithconstanttermabsorbed1} and with non-degenerate energy gaps, it holds that:
  \begin{equation}
    \expect{\left|\frac{dp^S_t}{dt}\right|}_t = \expect{\left|\frac{dp^B_t}{dt}\right|}_t \leq 2\,\|\haH_{SB}\|_\infty \sqrt{\frac{d_S^3}{d^{\mathrm{eff}}(\omega)}} .
  \end{equation}
  where $\omega = \expect{\psi_t}_t$.
\end{theorem}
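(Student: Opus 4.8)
The plan is to follow the template of the proof of theorem~\ref{theorem:averagespeedisslow}, but first to massage the rate of change of the purity into a form from which the local Hamiltonian $\haH_S$ has been eliminated, so that only $\haH_{SB}$ survives. The announced identity $\expect{|dp^S_t/dt|}_t = \expect{|dp^B_t/dt|}_t$ is immediate and I would dispose of it first: since $\psi_t$ is globally pure, $\rho^S_t$ and $\rho^B_t$ have the same nonzero eigenvalues (Schmidt decomposition), so $p^S_t = \Tr[(\rho^S_t)^2] = \Tr[(\rho^B_t)^2] = p^B_t$ for every $t$ and the two rates coincide pointwise.

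Next I would compute the rate itself. From $p^S_t=\Tr[(\rho^S_t)^2]$ one gets $\frac{dp^S_t}{dt} = 2\,\Tr[\rho^S_t\,\dot\rho^S_t]$ with $\dot\rho^S_t = \iu\,\Tr_B[[\rho_t,\haH]]$. Decomposing $\haH$ as in \eqref{eq:generalhamiltonianwithconstanttermabsorbed1}, the $\haH_0$ term drops (proportional to the identity), the $\mathds{1}\otimes\haH_B$ term drops under $\Tr_B$ by cyclicity, and $\haH_S\otimes\mathds{1}$ contributes $\iu[\rho^S_t,\haH_S]$. The crucial point is that this last piece vanishes after contraction with $\rho^S_t$, because $\Tr[\rho^S_t\,\iu[\rho^S_t,\haH_S]] = 0$. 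Hence $\frac{dp^S_t}{dt} = 2\,\Tr[\rho^S_t\,w_S(t)] = 2\iu\,\Tr[\rho_t\,[\haH_{SB},\rho^S_t\otimes\mathds{1}]]$ with $w_S(t)=\iu\,\Tr_B[[\rho_t,\haH_{SB}]]$; this is exactly why the final bound carries $\|\haH_{SB}\|_\infty$ and not $\|\haH_S\otimes\mathds{1}+\haH_{SB}\|_\infty$.

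To bound the time average I would mirror the speed proof: fix a Hilbert--Schmidt orthonormal basis $\{e_k\}_{k=1}^{d_S^2}$ of Hermitian operators on $\hiH_S$, write $w_S(t)=\sum_k c_k(t)\,e_k$ with $c_k(t)=\Tr[\rho_t\,A_k]$, $A_k=\iu[\haH_{SB},e_k\otimes\mathds{1}]$ and $\|A_k\|_\infty\le 2\|\haH_{SB}\|_\infty$, so that $\|w_S(t)\|_2^2=\sum_k c_k(t)^2$. Using $\|\rho^S_t\|_\infty\le 1$, then $\|w_S\|_1\le\sqrt{d_S}\,\|w_S\|_2$, and finally Jensen, this gives $\expect{|dp^S_t/dt|}_t \le 2\sqrt{d_S}\,\sqrt{\textstyle\sum_k\expect{c_k(t)^2}_t}$, after which theorem~\ref{theorem:expectationvaluesimmilartotimeaverage} applied to each $A_k$ controls $\expect{c_k^2}_t$; summing the $d_S^2$ contributions produces the factor $\sqrt{d_S^3/d^{\mathrm{eff}}(\omega)}$.

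The hard part is that theorem~\ref{theorem:expectationvaluesimmilartotimeaverage} bounds the time variance $\expect{(c_k-\expect{c_k}_t)^2}_t$, not the raw second moment, and here the mean $\expect{c_k}_t=\Tr[A_k\omega]=-\iu\,\Tr[[\haH_S,e_k]\,\omega^S]$ is generically nonzero, with $\sum_k(\Tr[A_k\omega])^2 = \|\,\iu[\omega^S,\haH_S]\,\|_2^2$ failing to decay in $d^{\mathrm{eff}}(\omega)$. Thus $w_S$ carries a large stationary component and bounding $\expect{\|w_S\|_2^2}_t$ by itself is too lossy; this stationary part must cancel physically because $\rho^S_t$ stays near $\omega^S$. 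I would make this rigorous by splitting $\rho^S_t=\omega^S+\delta_t$, giving $\frac{dp^S_t}{dt}=2\,\Tr[\rho_t\,\iu[\haH_{SB},\omega^S\otimes\mathds{1}]]+2\,\Tr[\delta_t\,w_S(t)]$. The first term is $\Tr[\rho_t\,A]$ for the \emph{fixed} observable $A=2\iu[\haH_{SB},\omega^S\otimes\mathds{1}]$, whose mean $\Tr[A\omega]$ vanishes by $[\omega,\haH]=0$, so theorem~\ref{theorem:expectationvaluesimmilartotimeaverage} applies cleanly; the second term is controlled by $\|w_S(t)\|_1\le 2\|\haH_{SB}\|_\infty$ together with the subsystem equilibration estimate $\expect{\tracedistance(\rho^S_t,\omega^S)}_t$ of theorem~\ref{theorem:distancefromtimeaverage}. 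Both pieces involve only $\haH_{SB}$ and vanish as $d^{\mathrm{eff}}(\omega)\to\infty$, which is what delivers the stated bound.
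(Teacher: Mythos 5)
Your argument is correct in substance but follows a genuinely different --- and more careful --- route than the paper. The paper's own proof is much shorter: it bounds $|dp^S_t/dt| \le 2\,v_S(t)$ via $|p(\rho)-p(\sigma)|\le 2\,\tracedistance(\rho,\sigma)$, invokes theorem~\ref{theorem:averagespeedisslow} to obtain the bound with $\|\haH_S\otimes\mathds{1}+\haH_{SB}\|_\infty$, and then simply asserts that ``noting that $\haH_S$ does not influence the rate of change of the purity one obtains the desired result.'' The obstruction you identify is precisely why that final step is not as innocent as it looks: rerunning the speed-theorem proof with the $\haH_{SB}$-only velocity $w_S(t)=\iu\,\Tr_B[\rho_t,\haH_{SB}]$ fails because $\expect{w_S(t)}_t = -\iu\,[\omega^S,\haH_S]\neq 0$, so $\expect{\|w_S(t)\|_1}_t \geq \|[\omega^S,\haH_S]\|_1$ need not decay with $d^{\mathrm{eff}}(\omega)$, and the stationary component is only killed by the contraction against $\rho^S_t\approx\omega^S$. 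Your splitting $\rho^S_t=\omega^S+\delta_t$ --- treating $2\,\iu\,[\haH_{SB},\omega^S\otimes\mathds{1}]$ as a fixed observable whose time average vanishes (which indeed follows from $[\omega,\haH]=0$ together with $\Tr[\omega^S\,[\omega^S,\haH_S]]=0$) and controlling the remainder by $\|w_S(t)\|_1\le 2\,\|\haH_{SB}\|_\infty$ times theorem~\ref{theorem:distancefromtimeaverage} --- repairs this cleanly, and each piece carries only $\|\haH_{SB}\|_\infty$ and a factor $1/\sqrt{d^{\mathrm{eff}}(\omega)}$ as required. What your route buys is a rigorous derivation where the paper hand-waves; what it costs is the precise prefactor: the two pieces give roughly $4\,\|\haH_{SB}\|_\infty\,(1+d_S)/\sqrt{d^{\mathrm{eff}}(\omega)}$, which scales better in $d_S$ than the stated $2\,\|\haH_{SB}\|_\infty\sqrt{d_S^3/d^{\mathrm{eff}}(\omega)}$ but only implies it literally once $d_S\gtrsim 16$. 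Since none of these constants are sharp this is cosmetic rather than substantive, but you should state the bound you actually obtain rather than the one in the theorem.
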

\begin{proof}
  The first equality is trivial as the purity of the system and the purity of the bath are always identical if $\psi$ is pure. This follows from the Schmidt decomposition of the pure state \cite{nielsenm.a.c}.
  
  Note that 
  \begin{align}
    |\Tr[\rho^2] - \Tr[\sigma^2]| &= |\Tr[\rho^2 - \sigma^2]| \leq \Tr|\rho^2 - \sigma^2| \\
    &= 2 \Tr[\frac{\rho + \sigma}{2} (\rho - \sigma)] \\
    &\leq 2 \max_{0 \leq A \leq \mathds{1}}\Tr[A\,(\rho - \sigma)] \\
    &= 2 \tracedistance(\rho,\sigma) .
  \end{align}
  We can therefore bound the average rate of change of the purity by
  \begin{align}
    \left|\frac{dp^S_t}{dt}\right| &= \lim_{\delta t \to 0} \frac{|p^S_t - p^S_{t + \delta t}|}{\delta t} \\
    &\leq 2\,\lim_{\delta t \to 0} \frac{\tracedistance(\rho^S_t,\rho^S_{t + \delta t})}{\delta t} \\
    &\leq 2\,v_S(t) .
  \end{align}
  Inserting this into theorem~\ref{theorem:averagespeedisslow} gives
  \begin{equation}
    \expect{\left|\frac{dp^S_t}{dt}\right|}_t = \expect{\left|\frac{dp^B_t}{dt}\right|}_t \leq 2\,\| \haH_S \otimes \mathds{1} + \haH_{SB}\|_\infty \sqrt{\frac{d_S^3}{d^{\mathrm{eff}}(\omega)}} .
  \end{equation}
  Noting that $\haH_S$ does not influence the rate of change of the purity one obtains the desired result.
\end{proof}

The above theorem tells us that the average rate of change of the purity is small, in addition we can establish bounds telling us when the rate of change of the purity must be small during a particular evolution.
Form \eqref{eq:expectedvelocityintermsofcoefficeints} and \eqref{eq:coefficeintsforexpectedvelocity} we see that $d\rho^S_t/dt$ depends on $\haH_B$ only implicitly through the trajectory $\rho_t$. 
We therefore have:
\begin{equation}
  \rho^S_{t+\delta t} = \rho^S_t + \delta t\,\left( \iu\,[\rho^S_t,\haH_S] + \iu\,[\rho_t,\haH_{SB}] \right) + \mathcal{O}(\delta t^2)
\end{equation}
Inserting this into \eqref{eq:rateofchangeofthepurity} gives
\begin{align}
  \label{eq:rateofchangeofthepurityofthesubsystem}
  \frac{dp^S_t}{dt} &= \Tr[\rho^S_t\,2\,\iu\,\Tr_B[\rho_t,\haH_{SB}]] \\
  &= \Tr[\rho_t\,(2\,\iu\,\Tr_B[\rho_t,\haH_{SB}]\otimes\mathds{1})] .
\end{align}
The operator $2\,\iu\,\Tr_B[\rho_t,\haH_{SB}]$ is hermitian and traceless.
The more mixed $\rho^S_t$ is the more likely will it have overlap with both the eigenstates with positive and negative eigenvalues of this operator.
Therefore, the more mixed the subsystem is, the slower is the rate of change of its purity.
Rates near the maximal rate of change of $2\,\|\Tr_B[\rho_t,\haH_{SB}]\|_\infty \leq 2\,\|\haH_{SB}\|_\infty $ can only occur if $\rho_t$ is such that the subsystem state $\rho^S_t$ is relatively pure.
On the other hand if $\rho_t$ is a pure product state $p^S_t = 0$ and consequently $dp^S_t/dt = 0$ as $p^S_t$ is positive and differentiable.
Obviously, too little entanglement also leads to a slow rate of change of the purity.

The consequence of the interplay of these two counter acting influences is the subject of the following theorem:
\begin{theorem}
  \label{theorem:whenistherateofchangeofthepurityslow}
  For every initial state $\rho_0$ of a composite system evolving under a Hamiltonian of the form of \eqref{eq:generalhamiltonianwithconstanttermabsorbed1} and with non-degenerate energy gaps the absolute value of the rate of change of the purity is upper bounded by
  \begin{align}
    \left|\frac{dp^S_t}{dt}\right| &\leq 2\,\|\rho^S_t\|_\infty\,\sqrt{2\,I_{SB}(\rho_t)}\,\|\haH_{SB}\|_\infty\\
    &\leq 2\,p^S_t\,\sqrt{2\,I_{SB}(\rho_t)}\,\|\haH_{SB}\|_\infty ,
    \intertext{and if $\rho_t$ is pure this implies that}
    \left|\frac{dp^S_t}{dt}\right| &\leq 4\,p^S_t\,\sqrt{S(\rho^S_t)}\,\|\haH_{SB}\|_\infty ,
  \end{align}
  where $\rho^{\mathrm{cor}}_t = \rho_t - \rho^S_t \otimes \rho^B_t$, $S(\rho^S_t)$ is the von~Neumann entropy of the reduced state and $I_{SB}(\rho_t)$ is the quantum mutual information between the subsystem and the bath.
\end{theorem}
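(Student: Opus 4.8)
The plan is to start from the exact formula for the rate of change of the purity established in \eqref{eq:rateofchangeofthepurityofthesubsystem}, which I rewrite as
\[
  \frac{dp^S_t}{dt} = 2\,\iu\,\Tr[(\rho^S_t\otimes\mathds{1})\,[\rho_t,\haH_{SB}]],
\]
and to isolate from it the dependence on the system--bath correlations. Inserting $\rho_t = \rho^S_t\otimes\rho^B_t + \rho^{\mathrm{cor}}_t$ and repeatedly using the cyclic commutator identity $\Tr[A\,[B,C]] = \Tr[B\,[C,A]] = \Tr[C\,[A,B]]$, the uncorrelated contribution equals $2\,\iu\,\Tr[\haH_{SB}\,[\rho^S_t\otimes\mathds{1},\rho^S_t\otimes\rho^B_t]]$ and vanishes, since $\rho^S_t\otimes\mathds{1}$ commutes with $\rho^S_t\otimes\rho^B_t$. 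What survives is governed entirely by $\rho^{\mathrm{cor}}_t$, and can be written as
\[
  \frac{dp^S_t}{dt} = 2\,\iu\,\Tr[\rho^{\mathrm{cor}}_t\,[\haH_{SB},\rho^S_t\otimes\mathds{1}]].
\]

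Next I would apply Hölder's inequality $|\Tr[XY]|\leq\|X\|_1\,\|Y\|_\infty$ with $X=\rho^{\mathrm{cor}}_t$ and $Y = \iu\,[\haH_{SB},\rho^S_t\otimes\mathds{1}]$, giving $|dp^S_t/dt| \leq 2\,\|\rho^{\mathrm{cor}}_t\|_1\,\|[\haH_{SB},\rho^S_t\otimes\mathds{1}]\|_\infty$. The step I expect to be the main obstacle is bounding the commutator with the \emph{sharp} constant $\|[\haH_{SB},\rho^S_t\otimes\mathds{1}]\|_\infty \leq \|\haH_{SB}\|_\infty\,\|\rho^S_t\|_\infty$, i.e.\ with prefactor one rather than the two that the naive triangle inequality produces; only with this constant do the coefficients in the theorem come out as stated. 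This uses the positivity of $\rho^S_t\otimes\mathds{1}$: for a projector $P$ the commutator $[\haH_{SB},P]$ is purely block off-diagonal with respect to the splitting $P,\,\mathds{1}-P$, so its operator norm is the larger of the norms of the two off-diagonal blocks and hence at most $\|\haH_{SB}\|_\infty$; decomposing the positive operator as a spectral layer-cake integral $\rho^S_t\otimes\mathds{1} = \int_0^{\|\rho^S_t\|_\infty}\Pi_\lambda\,d\lambda$ and pulling the norm inside the integral then promotes this to the general bound.

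It remains to turn $\|\rho^{\mathrm{cor}}_t\|_1$ into mutual information. By \eqref{eq:mutualinformation} the quantity $I_{SB}(\rho_t)$ is exactly the relative entropy $S(\rho_t\,\|\,\rho^S_t\otimes\rho^B_t)$, so the quantum Pinsker inequality $S(\rho\|\sigma)\geq\tfrac{1}{2}\|\rho-\sigma\|_1^2$ yields $\|\rho^{\mathrm{cor}}_t\|_1 \leq \sqrt{2\,I_{SB}(\rho_t)}$. Combining this with the commutator estimate gives the first displayed bound, and the second form follows by relating $\|\rho^S_t\|_\infty$ to the purity $p^S_t$. Finally, for a pure global state $\rho_t$ the Schmidt decomposition forces $S(\rho_t)=0$ and $S(\rho^S_t)=S(\rho^B_t)$, so $I_{SB}(\rho_t)=2\,S(\rho^S_t)$ and $\sqrt{2\,I_{SB}(\rho_t)}=2\,\sqrt{S(\rho^S_t)}$, which converts the mutual-information bound into the stated entropy bound. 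Apart from the sharp commutator estimate and the invocation of Pinsker's inequality, every step is bookkeeping with the trace identities above.
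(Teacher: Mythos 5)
Your argument is correct and follows essentially the same route as the paper's proof: split $\rho_t$ into $\rho^S_t\otimes\rho^B_t+\rho^{\mathrm{cor}}_t$, observe that only the correlation operator contributes to $dp^S_t/dt$, and control $\|\rho^{\mathrm{cor}}_t\|_1$ by $\sqrt{2\,I_{SB}(\rho_t)}$. The paper imports both of these ingredients from Kimura et al.\ \cite{Kimura07}, whereas you rederive the first by cyclicity of the trace and correctly identify the second as Pinsker's inequality applied to $I_{SB}(\rho_t)=S(\rho_t\|\rho^S_t\otimes\rho^B_t)$. The one substantive difference is where H\"older's inequality is applied. The paper estimates $|\Tr[\rho^S_t\,2\iu\,\Tr_B[\rho^{\mathrm{cor}}_t,\haH_{SB}]]|\leq 2\|\rho^S_t\|_\infty\,\|[\rho^{\mathrm{cor}}_t,\haH_{SB}]\|_1$ and then uses $\|[\rho^{\mathrm{cor}}_t,\haH_{SB}]\|_1\leq\|\haH_{SB}\|_\infty\|\rho^{\mathrm{cor}}_t\|_1$ --- a step which, for a general traceless hermitian $\rho^{\mathrm{cor}}_t$, the triangle inequality only delivers with an extra factor of $2$ (take $\rho^{\mathrm{cor}}_t\propto\sigma_x$, $\haH_{SB}\propto\sigma_y$ to see the factor is needed in general). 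Your cyclic rearrangement to $\Tr[\rho^{\mathrm{cor}}_t\,[\haH_{SB},\rho^S_t\otimes\mathds{1}]]$ places the \emph{positive} operator inside the commutator, where the block-off-diagonal/layer-cake argument legitimately gives the sharp constant $\|[\haH_{SB},\rho^S_t\otimes\mathds{1}]\|_\infty\leq\|\haH_{SB}\|_\infty\|\rho^S_t\|_\infty$. So your version actually repairs a factor-of-two gap in the paper's own derivation of the first displayed bound.

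One caveat, which applies to the paper's proof as much as to your sketch: the passage from $\|\rho^S_t\|_\infty$ to $p^S_t$ in the second displayed line cannot be filled in as stated. The paper invokes $\|\rho^S_t\|_\infty\leq\|\rho^S_t\|_2=p^S_t$, but $\|\rho^S_t\|_2=\sqrt{p^S_t}$, and in fact $p^S_t=\sum_k\lambda_k^2\leq\lambda_{\max}=\|\rho^S_t\|_\infty$, so the inequality between the first and second lines runs the wrong way. What does follow is $|dp^S_t/dt|\leq 2\sqrt{p^S_t}\,\sqrt{2\,I_{SB}(\rho_t)}\,\|\haH_{SB}\|_\infty$; if you intend to make precise your phrase ``relating $\|\rho^S_t\|_\infty$ to the purity,'' this is the relation available, and the pure-state corollary should be adjusted accordingly (or stated directly in terms of $\|\rho^S_t\|_\infty$).
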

\begin{proof}
  In \cite{Kimura07} it is shown that instead of \eqref{eq:rateofchangeofthepurityofthesubsystem} we may write the rate of change of the purity as
  \begin{equation}
    \frac{dp^S_t}{dt} = \Tr[\rho^S_t\,2\,\iu\,\Tr_B[\rho^{\mathrm{cor}}_t,\haH_{SB}]]
  \end{equation}
  where $\rho^{\mathrm{cor}}_t = \rho_t - \rho^S_t \otimes \rho^B_t$ is the \emph{correlation operator} \cite{Kimura07} which satisfies
  \begin{equation}
    \|\rho^{\mathrm{cor}}_t\|_1 \leq \sqrt{2\,I_{SB}(\rho_t)} 
  \end{equation}
  and $I_{SB}(\rho_t)$ is the quantum mutual information.
  Using this one can easily see that
  \begin{align}
    \left|\frac{dp^S_t}{dt}\right| &\leq 2\,\|\rho^S_t\|_\infty\,\|[\rho^{\mathrm{cor}}_t,\haH_{SB}]\|_1 \\
    &\leq 2\,\|\rho^S_t\|_\infty\,\|\haH_{SB}\|_\infty\,\|\rho^{\mathrm{cor}}_t\|_1 \\
    &\leq 2\,\|\rho^S_t\|_\infty\,\|\haH_{SB}\|_\infty\,\sqrt{2 I_{SB}(\rho_t)}\\
    &\leq 2\,p^S_t\,\sqrt{2 I_{SB}(\rho_t)}\,\|\haH_{SB}\|_\infty ,
  \end{align}
  where in the last inequality we have used the fact that $\|\rho^S_t\|_\infty \leq \|\rho^S_t\|_2 = p^S_t$.
  For pure states $\rho_t$ the quantum mutual information reduces to
  \begin{equation}
      I_{SB}(\rho_t) = S(\rho^S_t) + S(\rho^B_t) - S(\rho_t) = 2\,S(\rho^S_t) .
  \end{equation}
\end{proof}

\subsection{Equilibration time}
\label{sec:equilibrationtime}
Only if equilibration happens on reasonable time scales the mechanism of equilibration presented herein gives a satisfactory explanation for the irreversible behavior of our every days world.

It is obvious that when the Hamiltonian is multiplied by a positive constant factor the dynamics of the system speeds up or slows down by exactly this factor.
Similarly one might expect that a rescaling of the interaction part of the Hamiltonian will increase or decrease the equilibration time.
But, other properties, like the relative orientation of the eigenbasis of the subsystem Hamiltonian and that of the interaction Hamiltonian, its interaction range, the spacial extend of the system, as well as the fine structure of the spectrum also have crucial influence on these timescales.
Due to the generality of the approach pursued herein we cannot say much about the timescales on which equilibration happens, although we will give some bounds below.
This has provoked well justified criticism \cite{Cho09}.

However, In specific models it is possible to calculate equilibration times explicitly and it turns out that they have reasonable values.
This was shown in some analytical works \cite{Wang08,Cramer08,Cramer09} as well as in numerical studies \cite{Borowski03,0708.1324v1,Barthel08,0904.1501v1,Devi09,Gemmer09}.

Some simple estimates can be made even without specifying a model in detail:
Assume that the initial state $\psi_0$ was drawn from the subspace of energy eigenstates with eigenvalues in the interval $[E,E + \Delta E]$.
Then
\begin{align}
  v_S(t) &= \lim_{\delta t \to 0} \frac{\tracedistance(\rho^S_t,\rho^S_{t+\delta t})}{\delta t} \\
  &= \frac{1}{2} \left\| \frac{d\rho^S_t}{dt} \right\|_1 \leq \frac{1}{2} \left\| \frac{d\rho_t}{dt} \right\|_1 \\
  &= \frac{1}{2} \left| [\haH,\rho_t]\right|_1 \leq \Delta E
\end{align}
If the system starts in a non-equilibrium state the initial distance $\tracedistance(\rho^S_0,\omega^S)$ from the time averaged, equilibrium state $\omega^S$ can be expected to be of order 1.
So that even if the subsystem state immediately starts to head towards $\omega$ with maximal speed it will take at least a time span of the order of magnitude of the Heisenberg time  
\begin{equation}
  T \approx \frac{1}{\Delta E}
\end{equation}
until the equilibrium state is reached.
That equilibration can indeed happen on timescales that are roughly of the order of magnitude of $T$ can be seen from the numerical simulations presented in \cite{Borowski03}.

Another non trivial bound on the equilibration time can be obtained from theorem~\ref{theorem:whenistherateofchangeofthepurityslow}.
For pure joint system states we found the following bound on the rate of change of the purity:
\begin{align}
  \label{eq:purityspeedbounddgl}
  \left|\frac{dp^S_t}{dt}\right| &\leq 4\,p^S_t\,\sqrt{S(\rho^S_t)}\,\|\haH_{SB}\|_\infty \\
  &\leq 4\,p^S_t\,\sqrt{\log(d_S)}\,\|\haH_{SB}\|_\infty 
\end{align}
Now, assume that the initial state is a pure product state so that $p^S_0 = 1$ and that the equilibrium state has purity $p^S_{\mathrm{eq}}$.
By integrating the differential equation for the purity \eqref{eq:purityspeedbounddgl} one finds that the time $T$ to reach the equilibrium purity $p^S_{\mathrm{eq}}$ is bounded by 
\begin{equation}
  \label{eq:boundontheequilibrationtmeforfixedtargetpurity}
  T \geq \frac{\log(\frac{1}{p^S_{\mathrm{eq}}})}{4\,\sqrt{\log(d_S)}\,\|\haH_{SB}\|_\infty} .
\end{equation}
Equilibration in a shorter time is impossible, even if the evolution is such that the purity decreases with the maximal possible rate.
If the equilibrium state is the maximally mixed state $p^S_{\mathrm{eq}} = 1/d_S$ the minimum time until equilibration is
\begin{equation}
  T \geq \frac{\sqrt{\log(d_S)}}{4\,\|\haH_{SB}\|_\infty} .
\end{equation}

We find that equilibration to a state with fixed purity can happen the faster the larger the system is.
The minimal time decreases like $1/\sqrt{d_S}$. 
In contrast, equilibration to the completely mixed state takes longer the larger the system is.
Here the minimal time increases with $\sqrt{d_S}$.
In both cases most time is spend during the final approach, as according to \eqref{eq:purityspeedbounddgl} the rate of change gets slower the lower the purity is.
The more relevant time scale, even for equilibration towards the completely mixed state therefore is \eqref{eq:boundontheequilibrationtmeforfixedtargetpurity}, as there will be some value of the purity, independent of $d_S$, from which on the state will be practically indistinguishable from the completely mixed state.

\section{Equilibration and einselection}
\label{sec:equilibrationandeinselection}
The term \emph{einselection}, which stands for \emph{environment-induced super selection}, is due to Zurek \cite{PhysRevD.26.18,RevModPhys.75.715}.
Einselection is known to occur in situations where the Hamiltonian of the composite system leaves a certain orthonormal basis of the subsystem, spanned by so called \emph{pointer states} $\ket{p}$, invariant \cite{Hornberger09}.
If this is the case, the Hamiltonian and the time evolution operator have the form
\begin{align}
  \label{eq:einselectionhamiltonian}
  \haH &= \sum_p \ketbra{p}{p} \otimes \haH^{(p)} \\
  U_t &= \sum_p \ketbra{p}{p} \otimes U^{(p)}_t ,
\end{align}
where $U^{(p)}_t = \ee^{-\iu\,\haH^{(p)}\,t}$ and the $\haH^{(p)}$ are arbitrary hermitian matrices.
One finds that the subsystem state of an initial product state of the form $\rho_0 = \rho^S_0 \otimes \psi^B_0$, where the state of the bath can be assumed to be pure without loss of generality, evolves into
\begin{equation}
  \rho^S_t = \sum_{pp'} \ketbra{p}{p}\rho^S_0\ketbra{p'}{p'}\,\bra{\psi^B_0}{U^{(p')}_t}^\dagger\,U^{(p)}_t\ket{\psi^B_0}
\end{equation}
Under the evolution induced by such a Hamiltonian the diagonal entries of $\rho^S_0$, when expressed in the pointer basis, remain unchanged while the off-diagonal entries are suppressed by a factor of $\bra{\psi^B_0}{U^{(p')}_t}^\dagger\,U^{(p)}_t\ket{\psi^B_0} \leq 1$.
The actual time development of the $\bra{\psi^B_0}{U^{(p')}_t}^\dagger\,U^{(p)}_t\ket{\psi^B_0}$ depends on the explicit model under consideration, but for many models they have been found to decrease rapidly over short time scales \cite{zeh96,Breuer02,Hornberger09,RevModPhys.75.715,PhysRevD.26.18}.
If some of the $\haH^{(p)}$ lead to an identical time development for the chosen initial bath state there exist subspaces of $\hiH_S$ in which coherence is preserved and in which quantum mechanical superpositions survive the interaction with the environment. 

Note that, the diagonal entries, which survive the decoherence, are completely determined by $\rho^S_0$ and do not depend on the initial state of the bath $\psi^B_0$ at all.
The direct opposite situation is the thermodynamic case where the final state is completely determined by the properties of the bath.
Most realistic situations surely lie between these two extremes.

Using the results discussed in section~\ref{sec:speedoffluctuationsaroudequilibrium} it is possible to get rid of the quite limiting assumption on the form of the Hamiltonian and to shown that einselection is a more general phenomenon.
Besides the usual assumption of non-degenerate energy gaps and $d_B \gg d_S$, to ensure a large average effective dimension for almost all pure initial states, we only need to assume that the interaction Hamiltonian $\haH_{SB}$ is weak.
Note however, that in this more general setting, we can currently not say much about time scale on which the decoherence happens (see section~\ref{sec:equilibrationtime} for more details). 
The following discussion was earlier published by the author in \cite{Gogolin09-1}.

We have seen in section~\ref{sec:speedoffluctuationsaroudequilibrium} that for a system with a Hamiltonian of the form
\begin{equation}
  \label{eq:generalhamiltonianwithconstanttermabsorbed2}
  \haH = \haH_0 + \haH_S \otimes \mathds{1} + \mathds{1} \otimes \haH_B + \haH_{SB} ,
\end{equation}
where $\haH_0$ is proportional to the identity and $\haH_S$, $\haH_B$ and $\haH_{SB}$ are traceless, the velocity of the subsystem state $\rho^S_t$ is 
\begin{equation}
  \frac{d\rho^S_t}{dt} = \sum_{k=1}^{d_S^2} c_k(t)\,e_k
\end{equation}
where the $d_S^2$ operators $e_k$ form a hermitian orthonormal basis for $\hiH_S$ such that $\Tr[e_k\,e_l] = \delta_{kl}$ and
\begin{equation}
  c_k(t) = \Tr\big[\rho(t)\,\iu\,[\haH_S \otimes \mathds{1} + \haH_{SB},e_k\otimes \mathds{1}]\big] .
\end{equation}
The velocity depends on $\haH_B$ only implicitly through the trajectory $\rho_t$, but for an arbitrary fixed state $\rho$ the velocity is solely determined by $\haH_S$ and $\haH_{SB}$:
\begin{equation}
  \label{eq:subsystemstatevelocity}
  \frac{d\rho^S}{dt} = \iu\,[\rho^S,\haH_S] + \iu\,\Tr_B[\rho,\haH_{SB}]
\end{equation}
Now if $\haH_{SB}$ is much weaker than $\haH_S$, \eqref{eq:subsystemstatevelocity} is dominated by the first term.
Consequently, the system can only become slow when $[\rho^S,\haH_S]$ is small.

To see when this happens we first establish a general lower bound on the norm of commutators between states and arbitrary hermitian matrices:
\begin{lemma}
  \label{lemma:lowerboundonnormsofcommuators}
  Let $\rho$ be a normalized state and $A$ a hermitian observable with eigenvalues $a_k$ and eigenvectors $\ket{a_k}$, then
  \begin{align}
    \label{eq:commutatorbound}
    \| [\rho,A] \|_1 = \| \iu\,[\rho,A] \|_1 &\geq 2\max_{\{(k,l)\}} \sum_{(k,l)} |a_k - a_l|\,|\rho_{kl}|\\
    &\geq 2\max_{kl} |a_k - a_l|\,|\rho_{kl}| .
  \end{align}
  where the maximization in \eqref{eq:commutatorbound} is performed over all decompositions of the index set $\{1,\dots,d_S\}$ into non-overlapping pairs $(k,l)$ over which the sum is performed and $\rho_{kl} = \bra{a_k}\rho\ket{a_l}$.
\end{lemma}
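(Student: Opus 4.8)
The plan is to pass to the eigenbasis $\{\ket{a_k}\}$ of $A$ and reduce the estimate to a family of $2\times 2$ blocks. Writing $M = \iu\,[\rho,A]$, a direct evaluation of matrix elements gives
\[
M_{kl} = \bra{a_k}\iu\,(\rho A - A\,\rho)\ket{a_l} = \iu\,(a_l - a_k)\,\rho_{kl},
\]
so that $M$ is Hermitian (because $\rho$ and $A$ are), its diagonal entries vanish, and $|M_{kl}| = |a_k - a_l|\,|\rho_{kl}|$. The first equality in the statement is immediate, since multiplying by the phase $\iu$ does not change the singular values and hence leaves $\|\cdot\|_1$ invariant.

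First I would invoke the variational characterization of the trace norm for Hermitian operators,
\[
\|M\|_1 = \max_{\substack{X = X^\dagger\\ \|X\|_\infty \le 1}} \Tr[X\,M],
\]
which turns the problem into that of exhibiting one good test operator $X$. Given any decomposition of $\{1,\dots,d_S\}$ into non-overlapping pairs $(k,l)$, I would take $X$ block diagonal with respect to this pairing,
\[
X = \sum_{(k,l)} \left( \ee^{\iu\,\theta_{kl}}\,\ketbra{a_k}{a_l} + \ee^{-\iu\,\theta_{kl}}\,\ketbra{a_l}{a_k} \right),
\]
with phases $\theta_{kl}$ still free. Because the pairs are disjoint, $X$ splits into orthogonal $2\times 2$ blocks, each off-diagonal with eigenvalues $\pm 1$; hence $\|X\|_\infty = 1$ and $X$ is admissible.

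Then I would compute, using $M_{lk} = \overline{M_{kl}}$,
\[
\Tr[X\,M] = \sum_{(k,l)} 2\,\Re\!\big(\ee^{-\iu\,\theta_{kl}}\,M_{kl}\big),
\]
and choose $\theta_{kl} = \arg M_{kl}$ to make every term maximal, yielding $\Tr[X\,M] = 2\sum_{(k,l)} |a_k - a_l|\,|\rho_{kl}|$. Since $\|M\|_1 \ge \Tr[X\,M]$ for each admissible pairing, maximizing over all pairings gives the first inequality of \eqref{eq:commutatorbound}. The second inequality then follows by specializing to any pairing containing the single pair $(k,l)$ that maximizes $|a_k - a_l|\,|\rho_{kl}|$, the remaining pairs contributing nonnegative terms; when $d_S$ is odd one index stays unpaired and simply contributes a $1\times 1$ zero block.

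Every step is routine once the reduction to $2\times 2$ blocks is set up, so the only point demanding care is the bookkeeping that the constructed $X$ really has operator norm one, which relies on the pairs being non-overlapping so that the blocks live on mutually orthogonal subspaces. An alternative that avoids the dual norm is to pinch $M$ with the rank-two projectors $\ketbra{a_k}{a_k}+\ketbra{a_l}{a_l}$ of the chosen pairing: pinching is contractive in trace norm, the pinched operator is block diagonal, and the off-diagonal $2\times 2$ block for the pair $(k,l)$ has trace norm $2\,|M_{kl}|$, reproducing the same bound.
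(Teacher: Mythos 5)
Your proof is correct and follows essentially the same route as the paper's: both reduce to the $2\times 2$ blocks of the disjoint pairs in the eigenbasis of $A$ and test the trace-norm duality against an operator supported on those blocks (the paper uses the rank-one projectors onto $\tfrac{1}{\sqrt{2}}(\ket{a_k}+\ee^{\iu\phi_{kl}}\ket{a_l})$ together with $\|B\|_1 = 2\max_{\Pi}\Tr[\Pi B]$ for traceless hermitian $B$, which is equivalent to your choice of hermitian contraction $X$). Your write-up is in fact more explicit than the paper's about why the test operator is admissible and how the phases are chosen.
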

\begin{proof}
  The equality is trivial.
  For all traceless, hermitian, bounded operators $B$ on some finite dimensional Hilbert space $\hiH$ it holds that \cite{nielsenm.a.c}
  \begin{equation}
    \|B\|_1 = 2\,\max_{\Pi \in \mathcal{P}(\hiH)} \Tr[\Pi\,B] ,
  \end{equation}
  where $\mathcal{P}(\hiH)$ is the set of all projectors on $\hiH$ and the maximum is obtained when $\Pi$ is the projector onto the positive subspace of $B$.
  By expanding $\rho$ in the eigenbasis of $A$, using the above equality for $B=[\rho,A]$ and considering all sums of mutually orthogonal rank one projectors $\Pi_{kl}$ of the form
  \begin{equation}
    \Pi_{kl} = \ketbra{\pi_{kl}}{\pi_{kl}} \qquad \ket{\pi_{kl}} = \frac{1}{\sqrt{2}}(\ket{a_k} + \ee^{\iu \phi_{kl}} \ket{a_l}) ,
  \end{equation}
  where $\phi_{kl}$ are phase factors, one easily verifies \eqref{eq:commutatorbound}.
  The second inequality is trivial.
\end{proof}
Using the above lemma we can now proof the following theorem:
\begin{theorem}
  \label{theorem:slowstatesmustdecohere}
  Consider a physical system evolving under a Hamiltonian of the form given in \eqref{eq:generalhamiltonianwithconstanttermabsorbed1} and with non-degenerate energy gaps.
  All reduced states $\rho^S$ satisfy
  \begin{align}
    \|\haH_{SB}\|_{\infty} + \frac{1}{2}\left\|\frac{d\rho^S}{dt}\right\|_1 &\geq \max_{\{(k,l)\}} \sum_{(k,l)} |E^S_k - E^S_l|\,|\rho^S_{kl}|\\
    &\geq \max_{kl} |E^S_k - E^S_l|\,|\rho^S_{kl}| ,
  \end{align}
  where $\rho^S_{kl} = \bra{E^S_k} \rho^S \ket{E^S_l}$ and $E^S_k$ and $\ket{E^S_k}$ are the eigenvalues and eigenstates of $\haH_S$.
\end{theorem}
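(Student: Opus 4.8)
The plan is to combine the explicit velocity formula \eqref{eq:subsystemstatevelocity} with Lemma~\ref{lemma:lowerboundonnormsofcommuators}. First I would rearrange \eqref{eq:subsystemstatevelocity} to isolate the commutator of the reduced state with the local Hamiltonian,
\begin{equation}
  \iu\,[\rho^S,\haH_S] = \frac{d\rho^S}{dt} - \iu\,\Tr_B[\rho,\haH_{SB}] ,
\end{equation}
and take trace norms, using the triangle inequality to obtain
\begin{equation}
  \|[\rho^S,\haH_S]\|_1 \leq \left\|\frac{d\rho^S}{dt}\right\|_1 + \|\Tr_B[\rho,\haH_{SB}]\|_1 .
\end{equation}

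Next I would control the interaction term. Since the partial trace is trace preserving and completely positive, it is contractive in the trace norm, so $\|\Tr_B[\rho,\haH_{SB}]\|_1 \leq \|[\rho,\haH_{SB}]\|_1$. A H\"older-type estimate, $\|A\,B\|_1 \leq \|A\|_\infty\,\|B\|_1$, together with the normalization $\|\rho\|_1 = 1$, then gives $\|[\rho,\haH_{SB}]\|_1 \leq 2\,\|\haH_{SB}\|_\infty$. Combining these yields $\|[\rho^S,\haH_S]\|_1 \leq \|d\rho^S/dt\|_1 + 2\,\|\haH_{SB}\|_\infty$.

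The final step is to apply Lemma~\ref{lemma:lowerboundonnormsofcommuators} with the observable $A = \haH_S$ and the normalized state $\rho^S$, which lower bounds the left-hand side by $2\max_{\{(k,l)\}}\sum_{(k,l)} |E^S_k - E^S_l|\,|\rho^S_{kl}|$. Dividing through by two rearranges the chain of inequalities into the claimed first bound, and the second bound follows at once since any single pair is dominated by the optimal pairing, exactly as in the second line of the lemma.

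I expect no serious obstacle, as the argument is a short chain of standard norm inequalities. The one point that warrants care is the interaction term: one must use contractivity of the partial trace on the trace norm and keep the operator-norm bound on $\haH_{SB}$ (finite by the standing assumption that all observables are bounded), rather than attempting to trace-normalize $\haH_{SB}$ itself. It is precisely this route that produces the constant $2\,\|\haH_{SB}\|_\infty$ and hence the additive $\|\haH_{SB}\|_\infty$ term appearing in the theorem.
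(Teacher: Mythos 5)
Your proposal is correct and follows essentially the same route as the paper: both isolate $\iu[\rho^S,\haH_S]$ from the velocity decomposition \eqref{eq:subsystemstatevelocity} via a triangle inequality (the paper phrases it as the reverse triangle inequality, which is the same estimate rearranged), bound the interaction term by $2\|\haH_{SB}\|_\infty$ using contractivity of the partial trace in trace norm together with the H\"older bound, and then invoke Lemma~\ref{lemma:lowerboundonnormsofcommuators} to lower bound the local commutator. No gaps.
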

\begin{proof}
  Using the inverse triangle inequality and \eqref{eq:subsystemstatevelocity} we see that
  \begin{equation}
    | \| \iu\,[\rho^S,\haH_S] \|_1 - \| \iu\,\Tr_B[\rho,\haH_{SB}] \|_1 | \leq \left\|\frac{d\rho^S}{dt}\right\|_1 .
  \end{equation}
  For $\|d\rho^S/dt\|_1$ to become small the norms of the two commutators must be approximately equal.
  Applying lemma~\ref{lemma:lowerboundonnormsofcommuators} to the norm of the first commutator yields:
  \begin{equation}
    \| \iu\,[\rho^S,\haH_S] \|_1 \geq 2\,\max_{k \neq l} |E^S_k - E^S_l|\,|\rho^S_{kl}|
  \end{equation}
  The norm of the second commutator can be upper bounded, using the well-known fact that the trace norm of traceless, hermitian matrices is non-increasing under completely positive, hermitian, trace-non-increasing maps \cite{nielsenm.a.c} as follows:
  \begin{equation}
    \| \iu\,\Tr_B[\rho,\haH_{SB}] \|_1 \leq \| [\rho,\haH_{SB}] \|_1 \leq 2\,\|\haH_{SB}\|_{\infty}
  \end{equation}
  This completes the proof.
\end{proof}

\begin{figure}[p]
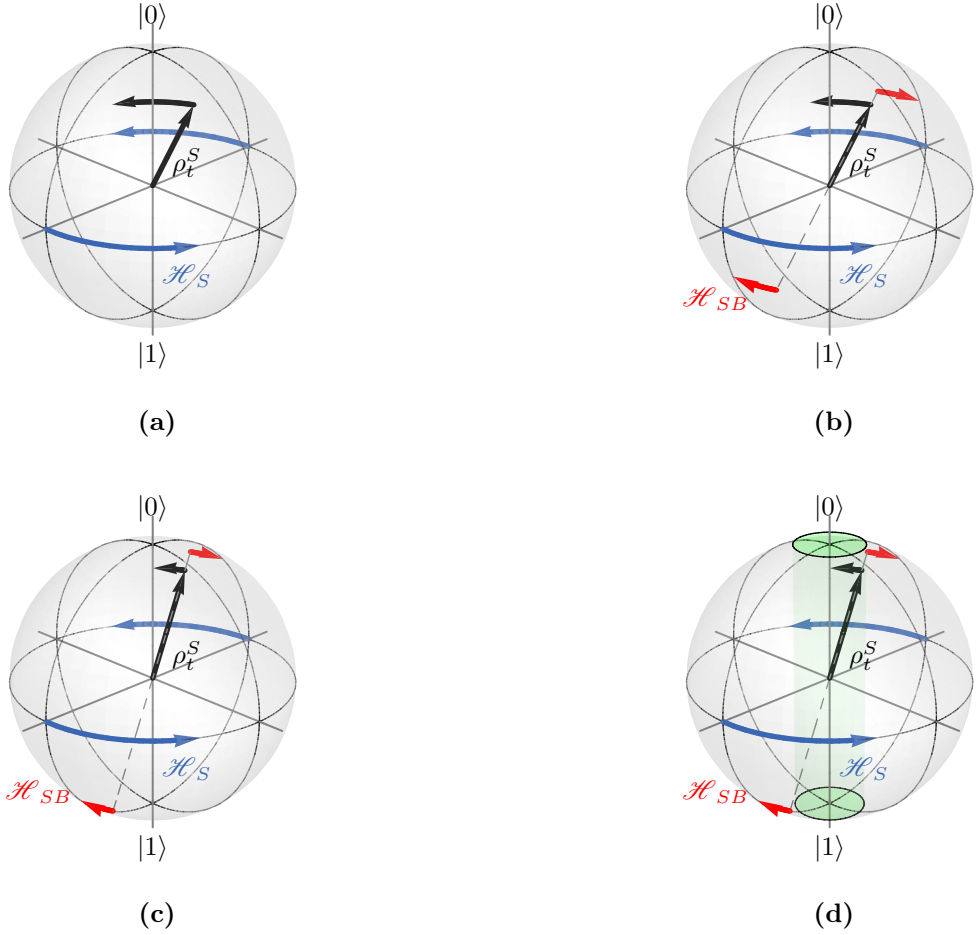

  \centering\footnotesize
  \begin{equation*}
    \frac{d\rho_t^S}{dt} = \iu\,\Tr_B[\psi_t,\haH_S \otimes \mathds{1} + \haH_{SB}] = {\color{structure}\iu\,[\rho^S_t,\haH_S]}  + {\color{red}\iu\, \Tr_B[\psi_t,\haH_{SB}]}      
  \end{equation*}
  \subfloat[]{%
    \inputTikZ{speed1}
    }\hfill\subfloat[]{%
      \inputTikZ{speed2}
    }\\
    \subfloat[]{%
      \inputTikZ{speed3}
  }\hfill\subfloat[]{%
    \inputTikZ{speed4}
  }
  \caption{
    Imagine the small subsystem is a single spin $1/2$ particle and $\haH_S$ has eigenvectors $\ket{0}$ and $\ket{1}$.
    If no interaction with the bath were present the state of the subsystem would rotate around the $z$-axis with a constant speed (a).
    The interaction Hamiltonian $\haH_{SB}$ gives an additional contribution to the velocity  of the reduced state.
    As the interaction is assumed to be weak it can in general not significantly slow down the rotation of the reduced state due to the local Hamiltonian, even if it its contribution to the velocity points in the directly opposite direction (b).
    Only if $\rho^S_t$ is close to an eigenstate of $\haH_S$, such that $[\rho^S_t,\haH_S]$ becomes small, the speed of the reduced state can become slow (c).
    The interplay of the two Hamiltonians defines a set of \emph{slow states} (green) (d).
    As we know from theorem~\ref{theorem:averagespeedisslow} that the speed of the subsystem state is almost always slow $\rho^S_t$ must spend most of the time in this allowed region.
  }
  \label{fig:slowstates}
\end{figure}

The assertion of theorem~\ref{theorem:slowstatesmustdecohere} is almost intuitively clear, but combined with theorem~\ref{theorem:averagespeedisslow} it allows to draw the following powerful conclusion:
Whenever $d^{\mathrm{eff}}(\omega)$ is large the subsystem is slow most of the time and if this is the case coherent superpositions of eigenstates of $\haH_S$ with eigenvalue differences that are much larger than $\|\haH_{SB}\|_\infty$ may not contribute significantly to the state of the subsystem.
That is, the corresponding off-diagonal elements of the reduced state $\rho^S_t$ in the $\haH_S$ eigenbasis must be small.
A similar behavior was observed for a specific model in \cite{PhysRevLett.82}.
Without using any approximations we have shown that coherence can only be retained between eigenstates of $\haH_S$ whose energies differ by less than $\|\haH_{SB}\|_\infty$.
This statement remains meaningful even when the subsystem is large and its energy spectrum thus very dense.
Theorem~\ref{theorem:slowstatesmustdecohere} then still implies that coherent superpositions of eigenstates with far apart energies (sometimes called Sch\"{o}dinger cat states) must decohere.
If the subsystem is small and the interaction Hamiltonian weak compared to the energy gaps of the subsystem Hamiltonian it implies an even stronger statement.
The state of the subsystem must then, most of the time, be approximately diagonal in the eigenbasis of $\haH_S$.

The consequences of theorem~\ref{theorem:slowstatesmustdecohere} are twofold:

1) It proofs the existence of a natural decoherence mechanism in weakly interacting systems that causes decoherence in the local energy eigenbasis.
This effect is indeed observed in many situations where the local Hamiltonian is much stronger than the interaction.
A well-known example are electronic excitations of gases at moderate temperature.
The energy gaps between the ground state and the first few excited states are typically much larger than the thermal energy.
The dynamics of such systems is successfully described using transition rates between energy eigenstates.
Ultimately theorem~\ref{theorem:slowstatesmustdecohere} explains why this is eligible.

2) It can be seen as an intermediate step of proving relaxation to the Gibbs state.
It goes beyond the results of \cite{Linden09} and sheds some light on the roll of the weakness of the interaction in thermalization.
This is interesting as the derivations of the canonical state given so fare \cite{slloydthesis,tasaki98,Gemmer02,Goldstein06,Cho09} either need to make very special assumptions, or at least partly depend on heuristic arguments.
See section~\ref{sec:thecanonicalensemble} and \ref{sec:conditionsonthehamiltonian} for a more detailed analysis of these results and a more elaborate explanation of this criticism.

\section{Initial state independence and the Second~Law}
\label{sec:initialstateindependence}
The final state of a small subsystem coupled to a large heat bath is typically independent of its initial state and is completely determined by some macroscopic properties of the bath.
This key feature of thermodynamic systems is expressed in the Second~Law of Thermodynamics.
The obvious example is the canonical Boltzmann state which is completely specified by giving the local Hamiltonian and the temperature of the bath.
One would therefore like to know under which conditions the equilibrium state of the subsystem of a large quantum mechanical system is independent from the initial state and depends only on the subspace from which the bath state was drawn.

Showing initial state independence is a prerequisite for thermalization and an important step towards a derivation of a Quantum Mechanical replacement of the Second~Law.
Unfortunately, it seems to be a quite difficult task to identify reasonable conditions under which initial state independence holds and there certainly are situations where it is violated.

\subsection{Conditions on the Hamiltonian}
\label{sec:conditionsonthehamiltonian}
From the form of the time averaged, equilibrium state it is obvious that the Hamiltonian, and in particular the form of the marginals of the populated energy eigenstates, is decisive for whether the equilibrium state depends on the initial state of the subsystem or not.

In theorem~3 in \cite{Linden09} the following bound on the trace distance of the time averaged system state $\omega^S = \expect{\Tr_B[\psi_t]}_t$, that belongs to a random initial pure state $\rho_0 \in \mathcal{P}_1(\hiH_R)$ chosen from a subspace $\hiH_R \subset \hiH = \hiH_S \otimes \hiH_B$, form the reduced microcanonical state $\rho_{\mathrm{mc}}^S = \Tr_B[\rho_{\mathrm{mc}}]$ is established:
\begin{equation}
  \expect{\tracedistance(\omega^S,\rho_{\mathrm{mc}}^S)}_{\psi_0} \leq \sqrt{\frac{d_S\,\delta}{4\,d_R}}
\end{equation}
Thereby $\expect{\cdot}_{\psi_0}$ is the average over random pure initial states,
\begin{equation}
  \delta = \sum_k \bra{E_k} \frac{\Pi_R}{d_R} \ket{E_k} \Tr_S[(\Tr_B \ketbra{E_k}{E_k})^2 ] \leq 1 ,
\end{equation}
and $\Pi_R$ is the projector onto $\hiH_R$.
This theorem is used to argue that if the state of the bath is fixed, i.e. $\hiH_R = \hiH_S \otimes \ket{\varphi}_B$ and thereby $d_R = d_S$, the time averaged state of the subsystem is independent of its initial state if the energy eigenstate are highly entangled, i.e if
\begin{equation}
  \Tr_B \ketbra{E_k}{E_k} \approx \frac{1}{d_S} \mathds{1}_{d_S \times d_S} .
\end{equation}
However, even under this assumption one has:
\begin{align}
  \delta &\approx \sum_k \bra{E_k}\frac{\Pi_R}{d_R}\ket{E_k} \Tr_S[(\frac{1}{d_S} \mathds{1}_{d_S \times d_S})^2 ]\\
  &= \frac{1}{d_R\,d_S} \sum_k \bra{E_k}\Pi_R\ket{E_k} = \frac{1}{d_S}
\end{align}
Consequently the best estimate one can gain by applying the above bound is:
\begin{equation}
  \expect{\tracedistance(\omega^S_\psi,\rho_{\mathrm{mc}}^S)}_\psi \leq \sqrt{\frac{1}{4\,d_S}}
\end{equation}
This result is quite counter intuitive.
One would expect that the time averaged equilibrium state $\omega_S$ depends less on the initial state of the subsystem the smaller it is compared to the bath.
In contrast to this the above bound gets tighter the larger $d_S$ is and does not depend on $d_B$ at all.
One can therefore anticipate that the above bound is not tight for small subsystems and large baths.

And indeed, if all energy eigenstates in some subspace have similar marginals, and in particular, if they are all highly entangled, it is possible to show initial state independence whenever $d_S \ll d^{\mathrm{eff}}(\omega)$:
\begin{theorem}
  \label{theorem:initialstateindependence}
  Let all energy eigenstates that span some subspace $\hiH_R \subseteq \hiH$ of the total systems Hilbert space have similar marginals in the sense that
  \begin{equation}
    \max_{\ket{E_k},\ket{E_l} \in \hiH_R} \tracedistance(\Tr_B\ketbra{E_k}{E_k},\Tr_B\ketbra{E_l}{E_l}) = \delta ,
  \end{equation}
  or be highly entangled in the sense that
  \begin{equation}
    \max_{\ket{E_k} \in \hiH_R} 2\,\tracedistance(\Tr_B\ketbra{E_k}{E_k},\frac{\mathds{1}}{d_S}) = \delta .
  \end{equation}
  Then the time averaged distance of the marginals $\rho^S_t$ and $\sigma^S_t$ of any two pure initial states from $\hiH_R$ evolving under a Hamiltonian with non-degenerate energy gaps is upper bounded by 
  \begin{equation}
    \expect{\tracedistance(\rho^S_t,\sigma^S_t)}_t \leq  \frac{1}{2} \sqrt{\frac{d_S}{d^\mathrm{eff}(\expect{\rho^B_t}_t)}} + \frac{1}{2} \sqrt{\frac{d_S}{d^\mathrm{eff}(\expect{\sigma^B_t}_t)}} + \delta.
  \end{equation}
\end{theorem}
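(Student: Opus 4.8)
The plan is to insert the two time-averaged states as references and use the triangle inequality. Writing $\omega^S_\rho = \expect{\rho^S_t}_t$ and $\omega^S_\sigma = \expect{\sigma^S_t}_t$, I first note that
\begin{equation}
  \tracedistance(\rho^S_t,\sigma^S_t) \leq \tracedistance(\rho^S_t,\omega^S_\rho) + \tracedistance(\omega^S_\rho,\omega^S_\sigma) + \tracedistance(\omega^S_\sigma,\sigma^S_t)
\end{equation}
holds pointwise in $t$. Taking the time average and applying theorem~\ref{theorem:distancefromtimeaverage} separately to the pure evolutions of $\rho_0$ and $\sigma_0$ bounds the first and third terms by $\frac{1}{2}\sqrt{d_S/d^{\mathrm{eff}}(\expect{\rho^B_t}_t)}$ and $\frac{1}{2}\sqrt{d_S/d^{\mathrm{eff}}(\expect{\sigma^B_t}_t)}$ respectively. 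Since $\omega^S_\rho$ and $\omega^S_\sigma$ are time independent, the middle term carries no time average, so the whole task reduces to showing $\tracedistance(\omega^S_\rho,\omega^S_\sigma) \leq \delta$.

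Next I would unpack the structure of these equilibrium marginals. Because the Hamiltonian has non-degenerate energy gaps, time averaging equals dephasing, so for a pure initial state $\ket{\psi} = \sum_k c_k \ket{E_k}$ one has $\expect{\rho_t}_t = \$[\psi] = \sum_k |c_k|^2 \ketbra{E_k}{E_k}$. Tracing out the bath gives
\begin{equation}
  \omega^S_\rho = \sum_k p_k\,\rho^{(k)}, \qquad \rho^{(k)} := \Tr_B\ketbra{E_k}{E_k}, \qquad p_k = |c_k|^2 ,
\end{equation}
and likewise $\omega^S_\sigma = \sum_k q_k\,\rho^{(k)}$ with populations $q_k$. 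The crucial observation is that both equilibrium marginals are convex combinations of the \emph{same} family of eigenstate marginals $\rho^{(k)}$, and since $\rho_0,\sigma_0 \in \hiH_R$ the weights $p_k,q_k$ are supported only on those $k$ with $\ket{E_k} \in \hiH_R$, for which the hypotheses control the $\rho^{(k)}$.

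Finally I would bound the distance between the two mixtures. Setting $r_k = p_k - q_k$, which satisfies $\sum_k r_k = 0$, I split the index set into $P = \{k : r_k > 0\}$ and $N = \{k : r_k < 0\}$ with common mass $R = \sum_{k\in P} r_k = \sum_{k \in N}|r_k| \leq 1$. Then $\omega^S_\rho - \omega^S_\sigma = R\,(\mu - \nu)$, where $\mu = R^{-1}\sum_{k\in P} r_k \rho^{(k)}$ and $\nu = R^{-1}\sum_{k\in N}|r_k|\rho^{(k)}$ are genuine states. Expanding $\mu-\nu$ over pairs $(k,l)\in P\times N$ with the probability weights $r_k|r_l|/R^2$ and using joint convexity of the trace distance together with the similar-marginals hypothesis $\tracedistance(\rho^{(k)},\rho^{(l)}) \leq \delta$ yields $\tracedistance(\mu,\nu) \leq \delta$, hence $\tracedistance(\omega^S_\rho,\omega^S_\sigma) \leq R\,\delta \leq \delta$. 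The highly-entangled hypothesis is handled even more directly: convexity gives $\tracedistance(\omega^S_\rho,\tfrac{\mathds{1}}{d_S}) \leq \tfrac{\delta}{2}$ and the same for $\sigma$, so a single triangle inequality through $\mathds{1}/d_S$ again produces $\delta$. I expect the main obstacle to be getting the clean constant $\delta$ rather than $2\,\delta$ in the similar-marginals case; the naive estimate $\sum_k |r_k|\,\tracedistance(\rho^{(k)},\rho^{(k_0)})$ against a single fixed reference $\rho^{(k_0)}$ loses a factor of two, and it is precisely the positive/negative splitting together with the bound $R \leq 1$ that recovers the sharp constant.
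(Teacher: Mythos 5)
Your proof is correct and follows essentially the same route as the paper: a double triangle inequality through the two time-averaged marginals, with theorem~\ref{theorem:distancefromtimeaverage} applied to each evolution and the remaining task reduced to showing $\tracedistance(\expect{\rho^S_t}_t,\expect{\sigma^S_t}_t)\leq\delta$. The paper simply asserts that last bound, whereas you supply the missing detail --- writing both dephased marginals as mixtures of the same eigenstate marginals and using the positive/negative splitting with joint convexity to get the sharp constant $\delta$ --- which is a genuine improvement in completeness over the published argument.
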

\begin{proof}
  Using the triangle inequality twice we see that
  \begin{equation}
    \label{eq:initialstateindepenceconditionfromtriangularinequality}
    \begin{split}
      \expect{\tracedistance(\rho^S_t,\sigma^S_t)}_t \leq \expect{\tracedistance(\rho^S_t,\expect{\rho^S_\tau}_\tau)}_t + \expect{\tracedistance(\sigma^S_t,\expect{\sigma^S_\tau}_\tau)}_t + \tracedistance(\expect{\rho^S_t}_t,\expect{\sigma^S_t}_t) .
    \end{split}
  \end{equation}
  The terms $\expect{\tracedistance(\rho^S_t,\expect{\rho^S_\tau}_\tau)}_t$ and $\expect{\tracedistance(\sigma^S_t,\expect{\sigma^S_\tau}_\tau)}_t$ can both be bounded using theorem~\ref{theorem:distancefromtimeaverage} and
  \begin{equation}
    \label{eq:badestimateofthetrace}
    \tracedistance(\expect{\rho^S_t}_t,\expect{\sigma^S_t}_t) \leq \delta .
  \end{equation}
\end{proof}

On the first sight theorem~\ref{theorem:initialstateindependence} seems to be a quite nice result.
The assumption of similar reduced states allows for some dependence of the final state of the subsystem on the macroscopic features of the bath.
For example, assume that the marginals of the energy eigenstates are all close to the canonical state for the temperature associated with their respective energy.
All marginals of the eigenstates that belong to some energy interval would then be close to the correct canonical state and we would recover the situation of equilibration towards a Boltzmann distribution known from classical Statistical Mechanics. 

Unfortunately, if the macroscopic properties are not primarily determined by the subsystem, and this is exactly the situation we are interested in, it seems to be unreasonable to assume that most energy eigenstates from some energy subspace have similar marginals.
This assumption is called the \emph{eigenstate thermalization hypothesis} and was first suggested in \cite{PhysRevE.50.88} (see also \cite{0708.1324v1}) to explain thermalization.
Although it was shown in \cite{0708.1324v1} that the expectation values in the energy eigenstates of some reduced observables of an example system are close to a continuous function of energy the author does not find this assumption very convincing.
In the very weak coupling limit, where we would like to recover the Boltzmann distribution, the energy eigenstates are usually assumed to be close to product \cite{Goldstein06} and their marginals thus are not at all similar.
Changing the subsystem part of the initial state will have a significant impact on which eigenstates have a non vanishing overlap $|c_k|^2 = |\braket{E_k}{\psi_0}|^2$ with the initial state.
Obviously only those $\ket{E_k}$ whose system marginal are similar to the system part of the initial state, i.e. the states with $\Tr_B[\ketbra{E_k}{E_k}] \approx \psi^S_0$, will be populated.
Thus, the initial state of the system will have a non negligible impact on how the equilibrium state 
\begin{equation}
  \omega^S = \expect{\rho^S_t}_t = \sum_k |c_k|^2 \Tr_B[\ketbra{E_k}{E_k}] ,
\end{equation}
will look like if the energy eigenstates $\ket{E_k}$ are close to product.
This is true even for initial states with a high average effective dimension.
Thus we can get equilibration without initial state independence and thus without thermalization.\footnote{A work that will elaborate more on this point is currently in preparation.}

What do we learn from that?
To have a chance of proving initial state independence, one at least needs \emph{some} entanglement in the energy eigenstates and therefore a coupling Hamiltonian $\haH_{SB}$ which is in strength at least comparable with the energy gaps of the non-interacting Hamiltonian $\haH_S \otimes \mathds{1} + \mathds{1} \otimes \haH_B$. 

This condition is not to be confused with the assumption on the gaps of $\haH_S$ we have worked with in section~\ref{sec:equilibrationandeinselection}.
It is well possible that $\|\haH_{SB}\|_\infty$ is small compared to the gaps $\Delta(\haH_S)$ of the subsystem Hamiltonian while at the same time large compared to the gaps $\Delta(\haH_S \otimes \mathds{1} + \mathds{1} \otimes \haH_B)$ of the non-interacting part of the Hamiltonian:
\begin{equation}
  \Delta(\haH_S \otimes \mathds{1} + \mathds{1} \otimes \haH_B) \ll \|\haH_{SB}\|_\infty \ll \Delta(\haH_S)
\end{equation}
This can be expected to be the natural situation in thermodynamically large systems, as the density of energy states typically increases exponentially with the size of the system. 

If $\|\haH_{SB}\|_\infty \ll \Delta(\haH_S \otimes \mathds{1} + \mathds{1} \otimes \haH_B)$ the equilibrium state can be expected to not be robust against unitary transformations of the subsystem part of the initial state.
Works claiming to derive the canonical ensemble under this assumption, or equivalently under the assumption that the energy eigenstates are close to product, and works that do not explicitly exclude the case $\|\haH_{SB}\|_\infty \ll \Delta(\haH_S \otimes \mathds{1} + \mathds{1} \otimes \haH_B)$ should therefore be considered with a healthy amount of mistrust (compare \cite{Goldstein06,PhysRev.114.94,Cho09}).

\subsection{Highly entangled eigenstates and random Hamiltonians}
\label{sec:highlyentangledeigenstatesaregeneric}
As we have seen in the last section, highly entangled energy eigenstates are sufficient for initial state independence.
However, this intuitively seems to be a very special property presumably not found in most realistic systems.
It is however possible to show that almost all random Hamiltonians actually have highly entangled eigenstates.
Where \emph{almost all} has a mathematically precise and well defined meaning.

The argument is based on the fact that random pure states are highly entangled with very high probability:
\begin{lemma}
  \label{lemma:randomstatesarehighlyentangled}
  Given the Hilbert space $\hiH = \hiH_S \otimes \hiH_B$ of a composite system with $d_S \ll d_B$.
  The reduced state $\rho^S = \Tr_B \psi$ of a random pure state $\psi \in \mathcal{P}_1(\hiH)$ is with very high probability highly entangled in the sense that
  \begin{equation}
    \label{eq:randomstateoneoverepsilonentanglementprobability}
    \probability\left\{ \tracedistance( \rho^S, \frac{\mathds{1}}{d_S}) \geq \epsilon \right\} \leq 2 \left(\frac{10\,d_S}{\epsilon}\right)^{2\,d_S} \ee^{-C\,d_B\,\epsilon^2} ,
  \end{equation}
  where $C=(14\,\ln(2))^{-1}$.
\end{lemma}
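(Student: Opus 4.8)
The plan is to reduce the trace-distance statement to a one-parameter family of scalar concentration estimates, one for each direction $\ket{\phi}\in\mathcal{P}_1(\hiH_S)$, and then make them uniform by an $\epsilon$-net argument over the unit sphere of $\hiH_S$. Writing $M=\rho^S-\mathds{1}/d_S$, which is hermitian, traceless and satisfies $\|M\|_\infty\leq 1$, the first step is the elementary norm comparison
\begin{equation}
  \tracedistance(\rho^S,\mathds{1}/d_S)=\tfrac{1}{2}\|M\|_1\leq \tfrac{d_S}{2}\|M\|_\infty=\tfrac{d_S}{2}\max_{\ket{\phi}}|\bra{\phi}M\ket{\phi}| ,
\end{equation}
valid because $M$ has at most $d_S$ eigenvalues. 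Hence it suffices to show that the overlap $\bra{\phi}\rho^S\ket{\phi}$ stays within $2\epsilon/d_S$ of its mean for \emph{every} unit vector $\ket{\phi}$.

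For a \emph{fixed} $\ket{\phi}$ I would study $f_\phi(\psi)=\Tr[(\ketbra{\phi}{\phi}\otimes\mathds{1})\,\psi]=\bra{\phi}\rho^S\ket{\phi}$. By unitary invariance of the Haar measure $\expect{\psi}_\psi=\mathds{1}/(d_S\,d_B)$, so $\expect{f_\phi}_\psi=1/d_S$, and exactly as in the proof of Theorem~\ref{theorem:ensebleaveragesaresuperflousforthemicrocanonicalensemble} the Lipschitz constant of $f_\phi$ with respect to the Hilbert-space norm is at most $2\,\|\ketbra{\phi}{\phi}\otimes\mathds{1}\|_\infty=2$. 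Levy's lemma (appendix~\ref{appendix:levyslemma}) then yields an exponential concentration of $f_\phi$ about $1/d_S$.

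The final step is discretisation. I would fix a $\nu$-net $\mathcal{N}$ of the unit sphere of $\hiH_S\cong\mathbb{C}^{d_S}$ with $|\mathcal{N}|\leq(1+2/\nu)^{2d_S}$; since $\ket{\phi}\mapsto\bra{\phi}M\ket{\phi}$ is $2\|M\|_\infty$-Lipschitz, two vectors at distance $\nu$ give overlaps differing by at most $2\nu$, so $\max_{\ket{\phi}}|\bra{\phi}M\ket{\phi}|\leq \max_{\ket{\phi}\in\mathcal{N}}|\bra{\phi}M\ket{\phi}|+2\nu$. Choosing $\nu\sim\epsilon/d_S$ and applying the union bound over $\mathcal{N}$ to the event $|\bra{\phi}M\ket{\phi}|\geq\epsilon/d_S$ produces the prefactor $(10\,d_S/\epsilon)^{2d_S}$ from the net cardinality, while the exponential from Levy's lemma supplies the factor $\ee^{-C\,d_B\,\epsilon^2}$.

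The hard part is the exponent. A naive use of Levy's lemma with the global Lipschitz constant $2$ measures the fluctuations of $f_\phi$ on the scale $1/\sqrt{d_S\,d_B}$, whereas the true fluctuations are of order $1/(d_S\sqrt{d_B})$, i.e.\ smaller by a factor $\sqrt{d_S}$; feeding the crude scale into the union bound would only give $\ee^{-C\,d_B\,\epsilon^2/d_S}$. To recover the stated $\ee^{-C\,d_B\,\epsilon^2}$ one must exploit the true variance. I would do this either by computing $\expect{(\bra{\phi}\rho^S\ket{\phi}-1/d_S)^2}_\psi\approx 1/(d_S^2\,d_B)$ from the identity $\expect{\psi\otimes\psi}_\psi=(\mathds{1}+\mathbb{S})/(d\,(d+1))$ used earlier and then invoking a variance-aware tail bound, or by recognising $\bra{\phi}\rho^S\ket{\phi}$ as the squared length of the projection of a Haar-random unit vector onto a fixed $d_B$-dimensional subspace, a Beta-distributed quantity whose sharp tail carries exactly the right scale. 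Balancing the net fineness $\nu$ against this sharp deviation threshold is then the book-keeping that fixes both the constant $C=(14\,\ln 2)^{-1}$ and the precise form of the polynomial prefactor.
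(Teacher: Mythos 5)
Your proposal is sound in outline, but it takes a very different (and much longer) route than the paper: the paper's entire proof consists of citing Lemma~III.4 of Hayden, Leung and Winter \cite{Hayden06}, which bounds the probability that any eigenvalue $\lambda_i$ of $\rho^S$ deviates from $1/d_S$ by more than $\epsilon/d_S$, and then observing that if all eigenvalues lie within $\epsilon/d_S$ of $1/d_S$ then $\tracedistance(\rho^S,\mathds{1}/d_S)=\tfrac{1}{2}\sum_i|\lambda_i-1/d_S|\leq\epsilon/2\leq\epsilon$. What you have written is essentially a reconstruction of the proof of that cited lemma: the reduction to $\max_{\ket{\phi}}|\bra{\phi}M\ket{\phi}|$, the $\nu$-net over the unit sphere of $\hiH_S$ with cardinality $(1+2/\nu)^{2d_S}$, and the union bound are exactly the ingredients used in \cite{Hayden06}, and they are where the prefactor $(10\,d_S/\epsilon)^{2d_S}$ comes from. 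You also correctly identified the one place where a naive argument fails: Levy's lemma with the global Lipschitz constant $2$ only concentrates $f_\phi$ at scale $(d_S\,d_B)^{-1/2}$ and would leave a spurious $1/d_S$ in the exponent. Of your two proposed fixes, however, only the second is actually a proof: knowing that the variance is of order $(d_S^2\,d_B)^{-1}$ does not by itself yield an exponential tail, so ``a variance-aware tail bound'' is not something you can invoke off the shelf, and Levy's lemma as stated in appendix~\ref{appendix:levyslemma} is blind to the variance. The workable route, and the one taken in \cite{Hayden06}, is to note that $\bra{\phi}\rho^S\ket{\phi}=\|(\bra{\phi}\otimes\mathds{1})\ket{\psi}\|_2^2$ is the squared length of the projection of a Haar-random unit vector onto a $d_B$-dimensional subspace, a Beta-distributed quantity with an explicit large-deviation bound at scale $\epsilon/d_S$ whose exponent is of order $d_B\,\epsilon^2$; that computation is what fixes $C=(14\ln 2)^{-1}$. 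In short, your approach buys self-containedness at the price of redoing the measure-concentration work, while the paper's buys brevity at the price of an external citation; to make yours airtight you must commit to the Beta-tail computation rather than the variance heuristic.
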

\begin{proof}
  The lemma is a direct corollary of lemma III.4 in \cite{Hayden06}, which establishes a bound on the probability that one of the eigenvalues $\lambda_i$ of the reduced state $\rho^S = \Tr_B \psi $ of a random pure state $\psi \in \mathcal{P}_1(\hiH)$ differs from $1/d_S$ by more than $\epsilon/d_S$, namely:
  \begin{equation}
    \probability\left\{\exists\ i :\ | \lambda_i - \frac{1}{d_S} | \geq \frac{\epsilon}{d_S} \right\} \leq 2 \left(\frac{10\,d_S}{\epsilon}\right)^{2\,d_S} \ee^{-C\,d_B\,\epsilon^2}
  \end{equation}
  where $C=(14\,\ln(2))^{-1}$.
  If non of the eigenvalues of $\Tr_B \psi $ differs from $1/d_s$ by more than $\epsilon/d_S$ then $\tracedistance(\psi^S,\mathds{1}/d_S) \leq \epsilon$.
\end{proof}

Before we can proceed we must specify what we mean by \emph{random Hamiltonian}.
A random Hamiltonian is a hermitian matrix whose eigenbasis was chosen according to the unitary invariant Haar measure.\footnote{After finishing this section of the present work related considerations, but with a quite different intention, were published in \cite{Goldstein09}.}
We do not put any special restrictions on the eigenvalues of the Hamiltonian except that we assume that the energy gaps are non-degenerate and the spectrum bounded.
A uniform random orthogonal basis for $\hiH = \hiH_S \otimes \hiH_B$, which we think of as the eigenbasis of such a random Hamiltonian, can be constructed by applying the same random unitary transformation, chosen according to the Haar measure of the unitary group, on every element of an arbitrary initial basis.
The eigenvectors $\ket{E_k}$ of such a basis thus each look exactly as if they where random vectors in $\hiH$.
Using the union bound and lemma~\ref{lemma:randomstatesarehighlyentangled} we therefore find that:\footnote{This simple proof of lemma~\ref{lemma:entangledeigenstasaregeneric} was suggested by Andreas Winter after the author had established a slightly weaker statement with a much more involved proof.}
\begin{lemma}
  \label{lemma:entangledeigenstasaregeneric}
  All eigenstates $\ket{E_k}$ of a random Hamiltonian on $\hiH = \hiH_S \otimes \hiH_B$ with $d_S \ll d_S$ are with high probability close to maximally entangled in the sense that 
  \begin{equation}
    \probability\left\{ \exists k: \tracedistance( \Tr_B \ketbra{E_k}{E_k} , \frac{\mathds{1}}{d_S}) \geq \epsilon \right\} \leq 2\,d \left(\frac{10\,d_S}{\epsilon}\right)^{2\,d_S} \ee^{-C\,d_B\,\epsilon^2} ,
  \end{equation}
  where $C$ is a constant with $C = (14\,\ln(2))^{-1}$.
\end{lemma}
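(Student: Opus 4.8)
The plan is to follow the hint embedded in the footnote: the statement is an immediate consequence of Lemma~\ref{lemma:randomstatesarehighlyentangled} combined with the union bound. The one substantive observation, already prepared in the paragraph preceding the lemma, is that each \emph{individual} eigenvector of a random Hamiltonian is marginally distributed according to the Haar measure on $\mathcal{P}_1(\hiH)$, even though the full family of eigenvectors is of course correlated, being orthonormal.

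First I would make the marginal-distribution claim precise. Fix an arbitrary orthonormal basis $\{\ket{f_1},\dots,\ket{f_d}\}$ of $\hiH$ and write the eigenbasis of the random Hamiltonian as $\ket{E_k} = U\ket{f_k}$, where $U$ is drawn from the Haar measure of the unitary group on $\hiH$. By left-invariance of the Haar measure, for any fixed unitary $V$ the variables $U$ and $VU$ are identically distributed, hence so are $U\ket{f_k}$ and $VU\ket{f_k}$; since the unitary group acts transitively on the unit sphere, the law of $U\ket{f_k}$ must be the unique unitarily invariant measure on $\mathcal{P}_1(\hiH)$. Thus each $\ket{E_k}$ looks exactly like a uniformly random pure state of the joint system, and Lemma~\ref{lemma:randomstatesarehighlyentangled} applies verbatim to each eigenstate separately, yielding for every $k$
\begin{equation}
  \probability\left\{ \tracedistance( \Tr_B \ketbra{E_k}{E_k}, \frac{\mathds{1}}{d_S}) \geq \epsilon \right\} \leq 2 \left(\frac{10\,d_S}{\epsilon}\right)^{2\,d_S} \ee^{-C\,d_B\,\epsilon^2} .
\end{equation}
The final step is the union bound over the $d = d_S\,d_B$ eigenstates,
\begin{equation}
  \probability\left\{ \exists k: \tracedistance( \Tr_B \ketbra{E_k}{E_k}, \frac{\mathds{1}}{d_S}) \geq \epsilon \right\} \leq \sum_{k=1}^{d} \probability\left\{ \tracedistance( \Tr_B \ketbra{E_k}{E_k}, \frac{\mathds{1}}{d_S}) \geq \epsilon \right\} ,
\end{equation}
after which bounding each summand by the previous display produces exactly the claimed prefactor $2\,d\,(10\,d_S/\epsilon)^{2\,d_S}\,\ee^{-C\,d_B\,\epsilon^2}$.

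The only conceptual point worth stressing — and the place where one might wrongly anticipate difficulty — is the correlation between eigenvectors. One is tempted to worry that, because the $\ket{E_k}$ are forced to be mutually orthogonal, their deviation events cannot be treated as though they were independent. The resolution is that the union bound requires \emph{no} independence at all: it controls the probability of a union of events purely through the sum of their individual probabilities, irrespective of the joint law. The orthonormality constraint is therefore harmless, and the proof collapses to the single-state estimate of Lemma~\ref{lemma:randomstatesarehighlyentangled}. I expect no genuine obstacle; once the marginal Haar-distribution of the eigenvectors is recorded, the argument is essentially a one-line application of the union bound.
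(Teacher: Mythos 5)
Your proposal is correct and follows essentially the same route as the paper: the text preceding the lemma constructs the random eigenbasis by applying a single Haar-random unitary to a fixed basis, observes that each $\ket{E_k}$ is therefore marginally Haar-distributed, and then invokes Lemma~\ref{lemma:randomstatesarehighlyentangled} together with the union bound over all $d$ eigenstates. Your additional remarks — the left-invariance/transitivity argument for the marginal law and the observation that the union bound needs no independence despite the orthonormality correlations — merely make explicit what the paper leaves implicit.
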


\subsection{Towards a probabilistic quantum Second~Law}
\label{sec:aprobabilisticsecondlaw}
The Second~Law of Thermodynamics is probably one of the most mysterious postulates ever made to justify a physical theory.
There are many different versions of it, beside others there are versions due to Clausius, Kelvin, Planck and Boltzmann, and it is not easy to see how exactly they are related or whether they are equivalent.
One might even be tempted to say that there is not such a thing as \emph{the} Second~Law of Thermodynamics.
Basically the only obvious feature shared by all these \emph{Second~Laws} is that they introduce some sort of irreversibility.
Irreversibility is an obvious property of many processes in our everyday world.
But exactly this irreversibility is in conflict with the time reversal invariance of all microscopic theories \cite{FeynmanV01,UffinkFinal,boltzmannstanford.edu}.

Trying to \emph{derive} the Second~Law from Quantum Mechanics thus seems to be a hopeless endeavor.
However, if we take (i) a tendency to equilibrate and (ii) a tendency to increase disorder as measured by some entropic quantity as the most important aspects of the Second~Law, then theorem~\ref{theorem:highaveragedeffectivedimensionisgeneric} and \ref{theorem:distancefromtimeaverage} together with theorem~\ref{theorem:initialstateindependence} and lemma~\ref{lemma:entangledeigenstasaregeneric} are sufficient to derive a probabilistic pseudo Second~Law from just standard Quantum Mechanics:
\begin{theorem}
  \label{theorem:secondlaw}
  {\bf (Probabilistic pseudo quantum Second~Law)} 
  Given an arbitrary fixed pure initial state of a large bipartite quantum system $\hiH = \hiH_S \otimes \hiH_B$ with $d_S \ll d_B$.
  The time evolution of almost every random Hamiltonian is such that the reduced state on $S$ is close to an equilibrium state for almost all times. 
  This equilibrium state does not depend on the initial state of the system and maximizes the local von~Neumann entropy.
\end{theorem}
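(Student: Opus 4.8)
The plan is to recognise that the three assertions bundled in the statement—equilibration, initial-state independence, and maximisation of the local von~Neumann entropy—each correspond to a dynamical result already proved under the hypotheses of a large effective dimension or of highly entangled energy eigenstates, so that the only genuinely new work is to verify that both hypotheses hold with overwhelming probability over the choice of a random Hamiltonian for the given \emph{fixed} initial state. Accordingly I would organise the proof around two genericity facts, (a) that $d^{\mathrm{eff}}(\omega)$ is large and (b) that every marginal $\Tr_B\ketbra{E_k}{E_k}$ is close to $\mathds{1}/d_S$, and then read off the conclusions from Theorems \ref{theorem:distancefromtimeaverage} and \ref{theorem:initialstateindependence}.

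For fact (a), the subtlety is that Theorem \ref{theorem:highaveragedeffectivedimensionisgeneric} concerns a \emph{random state} with a \emph{fixed} Hamiltonian, whereas here $\psi_0$ is fixed and the Hamiltonian is random. First I would invoke unitary invariance: writing the random eigenbasis as $\ket{E_k}=U\ket{k}$ for a Haar-random $U$ and a fixed basis $\ket{k}$, the overlaps $c_k=\braket{E_k}{\psi_0}=\bra{k}U^\dagger\ket{\psi_0}$ are exactly the components of the Haar-random vector $U^\dagger\ket{\psi_0}$. Hence $d^{\mathrm{eff}}(\omega)=1/\sum_k|c_k|^4$ has precisely the distribution of the effective dimension of a random pure state in $\hiH$, and part ii) of Theorem \ref{theorem:highaveragedeffectivedimensionisgeneric} with $d_R=d$ gives $d^{\mathrm{eff}}(\omega)\geq d/4$ up to an exponentially small failure probability. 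Since $d=d_S\,d_B$ and $d_B\gg d_S$, this forces $d^{\mathrm{eff}}(\omega)\gg d_S^2$, so Theorem \ref{theorem:distancefromtimeaverage} makes $\expect{\tracedistance(\rho^S_t,\omega^S)}_t\leq\frac{1}{2}\sqrt{d_S^2/d^{\mathrm{eff}}(\omega)}$ small and the reduced state stays close to $\omega^S$ for almost all times. Here I would also note that a random Hamiltonian has non-degenerate energy gaps with probability one, legitimising the identification of the time average with the dephasing map.

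For fact (b), Lemma \ref{lemma:entangledeigenstasaregeneric} states directly that for a random Hamiltonian all marginals $\Tr_B\ketbra{E_k}{E_k}$ lie within trace distance $\epsilon$ of $\mathds{1}/d_S$ simultaneously, with a failure probability that is small once $d_B$ dominates the polynomial-in-$d_S$ prefactor. Taking $\hiH_R=\hiH$, this is exactly the highly-entangled hypothesis of Theorem \ref{theorem:initialstateindependence} with $\delta=2\,\epsilon$, which then bounds $\expect{\tracedistance(\rho^S_t,\sigma^S_t)}_t$ for any two initial states and yields initial-state independence. Moreover, since $\omega^S=\sum_k|c_k|^2\,\Tr_B\ketbra{E_k}{E_k}$ is a convex combination of states each within $\epsilon$ of $\mathds{1}/d_S$, convexity of the trace distance gives $\tracedistance(\omega^S,\mathds{1}/d_S)\leq\epsilon$; by continuity of the von~Neumann entropy (a Fannes-type bound) $S(\omega^S)$ is then close to its maximal value $\log d_S$, establishing the entropy-maximisation claim and identifying the equilibrium state with the maximally mixed one.

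Finally I would combine the two genericity events by a union bound, so that for almost every random Hamiltonian both $d^{\mathrm{eff}}(\omega)\geq d/4$ and $\max_k\tracedistance(\Tr_B\ketbra{E_k}{E_k},\mathds{1}/d_S)\leq\epsilon$ hold, whence all three conclusions follow for the fixed $\psi_0$. The hard part is fact (a): establishing through the unitary-invariance duality that the fixed-state/random-Hamiltonian effective dimension inherits the genericity of Theorem \ref{theorem:highaveragedeffectivedimensionisgeneric}, and then checking that its lower bound of order $d$ comfortably outpaces the $d_S^2$ appearing in the equilibration bound, so that the two exponentially small failure probabilities can be balanced against each other precisely in the regime $d_S\ll d_B$.
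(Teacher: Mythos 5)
Your proposal is correct and follows essentially the same route as the paper's own (much terser) proof: absorb the randomness of the initial state into the Haar-random eigenbasis to get a large $d^{\mathrm{eff}}(\omega)$ via Theorem~\ref{theorem:highaveragedeffectivedimensionisgeneric}, invoke Lemma~\ref{lemma:entangledeigenstasaregeneric} for generically entangled eigenstates, and conclude with Theorems~\ref{theorem:distancefromtimeaverage} and~\ref{theorem:initialstateindependence}. Your additions --- the explicit unitary-invariance computation of the $c_k$, the convexity-plus-Fannes argument for the entropy-maximisation claim, and the union bound over the two genericity events --- merely make explicit steps the paper leaves implicit.
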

\begin{proof}
  Theorem~\ref{theorem:initialstateindependence} proofs initial state independence if the eigenstates of the Hamiltonian are highly entangled and lemma~\ref{lemma:entangledeigenstasaregeneric} shows that this is the case for almost all random Hamiltonians if $d_S \ll d_B$.
  The randomness of the initial state needed in theorem~\ref{theorem:highaveragedeffectivedimensionisgeneric} can be absorbed in the randomness of the eigenbasis of the Hamiltonian.
  A fixed initial state has a high effective dimension with respect to almost every Hamiltonian.
  Under this condition theorem~\ref{theorem:distancefromtimeaverage} ensures equilibration of all subsystems with $d_S \ll d_B$.
\end{proof}

First of all it is crucial to note that the theorem stated above is a \emph{statistical} assertion.
It does not deny the possibility that a system near the completely mixed, equilibrium state suddenly becomes purer and therefore is immune to both the reversibility paradox raised by Loschmidt and the recurrence objection raised by Poincar\'{e} against Boltzmann's famous H-Theorem \cite{UffinkFinal,PhysRev.107.33} (Figure~\ref{fig:recurenceandtimereversalobjection}).
Theorem~\ref{theorem:secondlaw} states that a system started in a pure state will have a tendency to evolve towards less pure states, thereby increasing the systems von~Neumann entropy.
This is what Ehrenfest would have called a statistical H-Theorem \cite{UffinkFinal,RevModPhys.27.289}.
Such a statistical assertion is the strongest that is compatible with a time reversal invariant microscopic theory and therefore the best we can hope for in the framework of Quantum Mechanics (again see \cite{UffinkFinal}, especially chapter 4). 

\begin{figure}[tb]
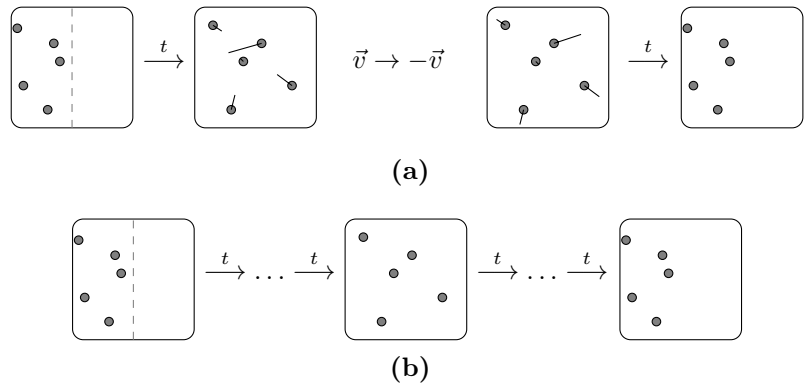

  \centering\footnotesize
  \subfloat[]{%
    \inputTikZ{timereversalobjection}
  } \\ \subfloat[]{%
    \inputTikZ{recurenceobjection}
  }
  \caption{Schematic particles-in-a-box pictures depicting Loschmidt's time reversibility paradox and Poincar\'{e}'s recurrence objection, which show that conventional Second Law of thermodynamics and in particular Boltzmann's H-Theorem contradict time reversal microscopic dynamics\cite{UffinkFinal,PhysRev.107.33}. The former (a) is an argument which shows that for every initial non-equilibrium situation of a time reversal invariant system that evolves towards equilibrium there is an initial equilibrium situation, that can be constructed from the former by inverting the time evolution (which is equivalent to reverting all momenta in the classical setting), that evolves out of equilibrium. The later (b) is a theorem which shows that certain energy conserving systems recur arbitrarily close to their initial state after a possibly very long time.}   
  \label{fig:recurenceandtimereversalobjection}
\end{figure}

As said above there exist various versions of the Second~Law, but they all apply to \emph{thermally isolated} or \emph{closed} systems, while our theorem~\ref{theorem:secondlaw} is a statement about small subsystems of large \emph{fully interactive} quantum system.
Although these canonical versions of the Second~Law obviously contradict standard Quantum Mechanics a legitimate question might be: What is the connection between theorem~\ref{theorem:secondlaw} and the canonical versions of the Second~Law?
According to the interpretation of the system-bath setup this question can be answered in two different ways:

First, one might argue that it is generally impossible to isolate a macroscopic system from its environment and think of the bath as the laboratory and the subsystem as the thermodynamic system under consideration.
Of course the Hamiltonian that describes such a situation is most likely not of the random form for which our theorem holds.

Second, taking into account that a realistic measurement on a macroscopic system, such a measurement of the pressure in a gas container or the magnetization of a macroscopic magnet, usually act only on a small part of the whole system (see for example \cite{RevModPhys.27.289} especially  p. 306 and \cite{PhysRev.108.17}).
In the first example only the average momentum of the atoms that hit the detector membrane during the observation time is measured.
In the second example the observable is a sum of operators that measure the magnetization of each individual magnetic moment in the magnet and thus a sum of local observables acting on \emph{reduced subsystem states} of a large interacting system.
One may then think of the joint system as the system under consideration (the gas in the container or the magnet) and may assume perfect isolation from the environment.\footnote{Although the very fact that the gas is trapped in the contained implies that there \emph{must} be some form of interaction.}
This leads to a highly speculative point of view, namely that our impression that large isolated systems tend to equilibrium might just be an illusion that arises from the fact that when we think that we measure properties of macroscopic systems in reality only a relatively small subsystem is measured. 
From finding such subsystems in an equilibrium state we spuriously infer that the whole system must be in an equilibrium state too.
This conclusion would be correct in a classical world, but, as can be seen from the discussion above, this is not necessarily correct if the system is quantum.

Theorem~\ref{theorem:secondlaw} is somewhat stronger than what one would have wanted to show.
It implies for example that by measuring the subsystem we can get \emph{no information about the total energy of the composite system at all}.
The reason for this is simply that realistic Hamiltonians usually comprise only short range interactions such that energy is an extensive quantity and that the decomposition of the joint system into a bath and a system usually corresponds to a division of the whole system in two spatially disjoint regions. 
Mathematically typical Hamiltonians are not necessarily realistic Hamiltonians.

It is therefore of outstanding importance not to misinterpret the above result.
By making the above statement the author does not want to imply that random Hamiltonians are in any way realistic.
The author is well aware that this is not the case.
However theorem~\ref{theorem:secondlaw} shows that there is a \emph{natural tendency} to approach equilibrium and to maximize entropy.
Traces of this tendency are expected to be found also in more realistic situations.
The result raises the hope that by imposing further constraints on the Hamiltonian, like finite interaction range, extensivity of energy, or conservation of certain quantities, one might be able to proof a theorem that comes closer to a realistic Second~Law of Thermodynamics than the one presented above.

\chapter{Conclusions}
\label{sec:conclusions}
We have made an attempt to rebuild the foundations of Statistical Mechanics and Thermodynamics form an underlying microscopic theory, namely Quantum Mechanics.
Instead of relying on additional postulates we seek for a justification of the methods of Statistical Mechanics from first principles.
Our approach is genuine quantum as randomness and statistical behavior emerge as a consequence of uncertainty relations and entanglement with the environment.

The approach gives a measure theoretic justification for the microcanonical and canonical ensemble and is capable of explaining the tendency to evolve towards equilibrium in a natural way.
New bounds on the time scales on which equilibration happens have been obtained.
We have identified a generic decoherence mechanism that makes the states of systems that interact weakly with an environment become approximately diagonal in the energy eigenbasis and we have derived a Second Law of Thermodynamics from Quantum Mechanics.
In addition, the measure theoretic foundations of the approach are strengthened by giving new bounds on the average effective dimension in the mean energy ensemble and for initial product states.

The author would like to thank Andreas Winter for introducing him to the field, the ongoing support and the valuable discussions, Haye Hinrichsen and Peter Janotta for the great amount of time, the numerous discussions and the helpful comments concerning this manuscript, Jens Eisert and Markus M\"{u}ller for the inspiriting discussions and comments on this work, as well as Cedric Beny, Myungshik Kim and Jaeyoon Cho for the constructive criticism.
The author is grateful for being supported by the German National Academic Foundation.

\appendix
\chapter{Distance measures for quantum states}
\label{appendix:distancemeasuresforquantumstates}
In this work we make use of a couple of different norms and distance measures for quantum states.
The most important of which is the trace norm and the trace distance.

Let $\rho \in \mathcal{M}(\hiH)$ be a normalized density matrix with eigenvalues $\{p_k\}$ and $d = \dim(\hiH)$.
Its $L_1$-norm, or trace norm, of $\rho$ is defined to be
\begin{equation}
  \|\rho\|_1 = \Tr|\rho| = \Tr[\sqrt{\rho^\dagger\,\rho}] = \sum_k |p_k| .
\end{equation}
The trace distance is proportional to the metric induced by this norm
\begin{align}
  \tracedistance(\rho,\sigma) &= \frac{1}{2} \|\rho -\sigma \|_1= \frac{1}{2} \Tr |\rho - \sigma| \\
  \label{eq:operatormaximasationdefinitionofthetracenorm}
  &= \max_{0 \leq A \leq \mathds{1}} \Tr[A (\rho - \sigma)] \\
  &= \max_{\Pi  \in \mathcal{P}(\hiH)} \Tr[\Pi  (\rho - \sigma)] .
\end{align}
That the definitions given above are indeed equivalent can be seen as follows: $(\rho - \sigma)$ is a traceless hermitian operator, the normalized hermitian observable $A$ that maximizes \eqref{eq:operatormaximasationdefinitionofthetracenorm} thus is the projector onto the positive subspace of $(\rho - \sigma)$, which is in turn equal to $1/2 \Tr |\rho - \sigma|$ \cite{nielsenm.a.c}. 

Equation \eqref{eq:operatormaximasationdefinitionofthetracenorm} shows what makes the trace distance $\tracedistance(\rho,\sigma)$ so special among other possible distance measures:
It can be interpreted as the physical distinguishably of $\rho$ and $\sigma$.
If two states are close to each another with respect to trace distance there is no measurement by which they can be distinguished.

As the trace norm and the trace distance depend only on the eigenvalues of their arguments it is manifest that both are invariant under unitary transformations 
\begin{equation}
  \forall U \in SU(d):\quad \tracedistance(U\,\rho\,U^\dagger,U\,\sigma\,U^\dagger) = \tracedistance(\rho,\sigma) .
\end{equation}
In fact, the trace norm is the largest unitary invariant norm in the sense that $|||A||| \leq \|A\|_1$ for all hermitian operators $A$ and all unitary invariant norms $|||\cdot|||$ \cite{bhatia}.

If two density matrices $\rho$ and $\sigma$ with eigenvalues $\{p_k\}$ and $\{q_k\}$ commute, the trace distance reduces to one half of the classical $L_1$-distance of their spectra \cite{nielsenm.a.c}
\begin{equation}
  [\rho,\sigma] = 0 \quad\Longrightarrow\quad \tracedistance(\rho,\sigma) = \frac{1}{2} \sum_k |p_k - q_k|
\end{equation}
and for pure states the trace distance is related to the Hilbert-Schmidt norm via 
\begin{equation}
  \tracedistance(\psi,\varphi) = \sqrt{1-|\braket{\psi}{\varphi}|^2} \leq \|\ket{\psi} - \ket{\varphi}\|_2 ,
\end{equation}
while for mixed states one has \cite{Linden09}
\begin{equation}
  \tracedistance(\psi,\varphi) \leq \frac{1}{2} \sqrt{d\,\Tr[(\rho-\sigma)^2]} .
\end{equation}

\chapter{The Haar Measure}
\label{appendix:thehaarmeasure}
A Haar measure is a Borel measure $\mu$ in a locally compact topological group $X$, such that $\mu(x\,B) = \mu(B)$ for every $x \in X$ and every Borel set $B \subseteq X$  and $\mu(B) > 0$ for every non-empty Borel open set $B$.
Equivalently one can say that a Haar measure is a left invariant Borel measure that is non-vanishing \cite{halmos}.

Of particular importance in the context of Quantum Mechanics is the special unitary group in $d$ dimensions $SU(d)$.
It can be shown that on $SU(d)$ there exists a left and right invariant measure $\mu_{SU(d)}$ which is unique up to normalization \cite{halmos}.
The normalization can be chosen such that $\mu_{SU(d)}(SU(d)) = 1$. 
This measure is what we call \emph{the} Haar measure on the special unitary group.

The measure $\mu_{SU(d)}$ induces a ``uniform'' measure on the set of pure quantum state vectors of a $d$ dimensional quantum system in the following way:
Choose an arbitrary fixed pure reference state and apply random unitary transformations from $\mu_{SU(n)}$ to it.
Due to the left invariance of $\mu_{SU(n)}$ the resulting distribution of pure states will also be invariant under unitary transformations and therefore is ``uniform'' on the set of pure states.

For actually sampling random pure states the above construction is not very useful as sampling random unitary transformations from $\mu_{SU(n)}$ is quite expensive.
Fortunately there is a much easier method to construct ``uniform'' random pure states of a $d$ dimensional quantum system:
Choose the entries of a $2\,d$ dimensional real vector $\vec{x}$ according to the normal distribution.
Normalize $\vec{x}$ such that $\|\vec{x}\| = 1$ in the standard euclidean norm $\|\cdot\|$.
Use the $2\,d$ entries of $\vec{x}$ as the real and imaginary parts of the $d$ complex expansion coefficients of the random vector with respect to some arbitrary fixed orthonormal basis.

\chapter{Levy's lemma and its application in Quantum Mechanics}
\label{appendix:levyslemma}

Levy's Lemma is a measure concentration result useful in high dimensional geometry.
It has previously been used in the context of Quantum Mechanics in \cite{Popescu05} and \cite{Linden09}.
\begin{lemma}
  \label{lemma:levyslemma}
  {\bf Levy's lemma \cite{milman2001}}
  Let $f: S^{d-1} \rightarrow \mathbbm{R}$ be a real valued function on the $(d-1)$-dimensional Euclidean hypersphere with Lipschitz constant
  \begin{equation}
    \eta = \sup_{\vec{x}_1,\vec{x}_2} \frac{|f(\vec{x}_1) - f(\vec{x}_2)|}{\|\vec{x}_1 - \vec{x}_2\|}
  \end{equation}
  where $\|\cdot\|$ denotes the standard euclidean norm.
  Then, for a uniformly random point $\vec{x} \in S^{d-1}$,
  \begin{equation}
    \probability\left\{|f(\vec{x}) - \expect{f}| \geq \epsilon \right\} \leq 2 \ee^{-\frac{C\,d\,\epsilon^2}{\eta^2}}
  \end{equation}
  where $\expect{f}$ is the expectation value of $f$ for uniformly random vectors from $S^{d-1}$ and $C$ is a constant with $C=(9\,\pi^3)^{-1}$.
\end{lemma}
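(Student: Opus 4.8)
The plan is to derive the tail bound from the spherical isoperimetric inequality, which carries the deep geometric content, and then to translate geometric concentration into concentration of $f$ using only the Lipschitz hypothesis. It is convenient to prove the estimate first around a median $M_f$ of $f$ and only at the end to replace $M_f$ by $\expect{f}$. Let $\mu$ denote the normalized uniform measure on $S^{d-1}$ and set $A = \{\vec{x} \in S^{d-1} : f(\vec{x}) \leq M_f\}$, so that $\mu(A) \geq 1/2$ by definition of the median.

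The key input is the Lévy--Gromov spherical isoperimetric inequality: among all Borel sets of a prescribed measure, geodesic caps minimize the measure of the complement of their $t$-neighborhood. Comparing $A$, whose measure is at least that of a hemisphere $H$, with $H$ itself yields
\begin{equation}
  \mu(S^{d-1} \setminus A_t) \leq \mu(S^{d-1} \setminus H_t) \leq \sqrt{\frac{\pi}{8}}\,\ee^{-(d-1)\,t^2/2} ,
\end{equation}
where $A_t$ is the geodesic $t$-neighborhood of $A$ and the second inequality is the explicit hemisphere estimate.

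I would then convert this into concentration of $f$. On the unit sphere the chordal (Euclidean) distance is dominated by the geodesic distance, so any point within geodesic distance $t$ of $A$ lies within Euclidean distance $t$ of some point of $A$ and hence satisfies $f(\vec{x}) \leq M_f + \eta\,t$. Taking $t = \epsilon/\eta$ gives the inclusion $\{f > M_f + \epsilon\} \subseteq S^{d-1}\setminus A_{\epsilon/\eta}$ and thus an exponential one-sided bound; the symmetric argument applied to the superlevel set $\{f \geq M_f\}$ controls the lower tail. Finally, integrating the median tail bound shows that $|\expect{f} - M_f|$ is of order $\eta/\sqrt{d}$, so replacing $M_f$ by $\expect{f}$ merely rescales constants, and absorbing all numerical factors into a single constant produces the claimed form with $C = (9\,\pi^3)^{-1}$.

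The main obstacle is the spherical isoperimetric inequality itself, a nontrivial theorem of geometric measure theory which I would invoke as a black box (e.g.\ via \cite{milman2001}) rather than reprove. The remainder is routine, the only delicate point being the book-keeping of constants required to arrive at the specific value $(9\,\pi^3)^{-1}$.
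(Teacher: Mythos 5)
The paper does not prove this lemma at all: it is quoted verbatim (in the form used by Popescu, Short and Winter) and delegated entirely to the cited reference \cite{milman2001}, so there is no internal proof to compare against. Your sketch is the canonical argument from exactly that reference --- spherical isoperimetry with the hemisphere as extremal set, transfer from geodesic neighborhoods to sublevel sets via the Lipschitz bound, and a median-to-mean correction of order $\eta/\sqrt{d}$ --- and it is sound. The two points worth checking are handled correctly: the chordal distance is dominated by the geodesic distance, which is the right direction for a Lipschitz constant taken with respect to the Euclidean norm, and the loss of constants incurred by the hemisphere estimate (whether one has $d-1$ or $d-2$ in the exponent), by halving $\epsilon$ in the median-to-mean step, and by the regime $\epsilon \lesssim \eta/\sqrt{d}$ where the median bound is uninformative, is comfortably absorbed by the deliberately weak constant $C=(9\,\pi^3)^{-1}$; indeed for $\epsilon>2\eta$ the left-hand side vanishes and for small $d$ the right-hand side exceeds $1$, so only the asymptotic regime matters. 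The one substantive ingredient you do not supply is the isoperimetric inequality itself, but invoking it as a black box is exactly what the paper does implicitly by citing the lemma, so the proposal is acceptable as it stands.
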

To see how this purely mathematical lemma can be applied to Quantum Mechanics note that normalized pure states chosen from some subspace $\hiH_R$ with dimension $d_R$ can be thought of as lying on a $(2\,d_R - 1)$-dimensional hypersphere with coordinates given by the real and imaginary components of the expansion coefficients with respect to some orthonormal basis $\{\ket{i}\}$ of $\hiH_R$
\begin{align}
  x_{2i}(\psi) &= \Re[\braket{i}{\psi}] & x_{2i+1}(\psi) &= \Im[\braket{i}{\psi}]
\end{align}
and that in this coordinate system the standard Euclidean metric on $\mathbbm{R}^{2\,d_R}$ and the metric induced by the standard Hilbert space norm coincide in the sense that
\begin{equation}
  \|\vec{x}_1 - \vec{x}_2\| = \| \ket{\psi_1} - \ket{\psi_2} \|_2 .
\end{equation}


\providecommand{\href}[2]{#2}\begingroup\raggedright\endgroup

\end{document}